\title{\Huge \bf Formal Synthesis of Control Strategies for Positive Monotone Systems}
\author{Sadra Sadraddini,~\IEEEmembership{Student Member,~IEEE}
 and Calin Belta,~\IEEEmembership{Fellow,~IEEE} %
\thanks{The authors are with the Department of Mechanical Engineering, Boston University, Boston, MA 02215 \texttt{sadra,cbelta@bu.edu}.
\newline This work was partially supported by the NSF under grants CPS-1446151 and CMMI-1400167.}
}
\theoremstyle{definition}
\newtheorem{problem}{Problem}
\theoremstyle{remark}
\newtheorem{define}{Definition}
\newtheorem{lemma}{Lemma}
\newtheorem{example}{Example}
\newtheorem{remark}{Remark}
\newtheorem{assumption}{Assumption}
\newtheorem{theorem}{Theorem}
\newtheorem{corollary}{Corollary}
\newtheorem{proposition}{Proposition}
\newcommand{\bolds}[1]{ {\bf{{#1}}}}
\newcommand{\user}[1]{{#1}}
\newcommand{\userfinal}[1]{{\color{black}{#1}}}
\newcommand{\revone}[1]{{#1}}
\newcommand{\revtwo}[1]{{#1}}
\newcommand{\revthree}[1]{{#1}}
\DeclareMathOperator*{\argmin}{arg\,min}
\DeclareMathOperator*{\rem}{rem}
\begin{document}

\maketitle

\begin{abstract}
We design controllers from formal specifications for positive discrete-time monotone systems that are subject to bounded disturbances. Such systems are widely used to model the dynamics of transportation and biological networks. The specifications are described using signal temporal logic (STL), which can express a broad range of temporal properties. We formulate the problem as a mixed-integer linear program (MILP) and show that under the assumptions made in this paper, which are not restrictive for traffic applications, the existence of open-loop control policies is sufficient and almost necessary to ensure the satisfaction of STL formulas. We establish a relation between satisfaction of STL formulas in infinite time and set-invariance theories and provide an efficient method to compute robust control invariant sets in high dimensions. We also develop a robust model predictive framework to plan controls optimally while ensuring the satisfaction of the specification. Illustrative examples and a traffic management case study are included.
\end{abstract}

\begin{IEEEkeywords}
Formal Synthesis and Verification, Monotone Systems, Transportation Networks.
\end{IEEEkeywords}

\section{Introduction}


\IEEEPARstart{I}{n} recent years, there has been a growing interest in using formal methods for specification, verification,  and synthesis in control theory. Temporal logics \cite{baier2008principles} provide a rich, expressive framework for describing a broad range of properties such as safety, liveness, and reactivity. In formal synthesis, the goal is to control a dynamical system from such a specification. For example, in an urban traffic network, a synthesis problem can be to generate traffic light control policies that ensure gridlock avoidance and fast enough traffic through a certain road, for all times. 


Control synthesis for linear and piecewise affine systems from linear temporal logic (LTL) specifications was studied in \cite{tabuada2006linear,kloetzer2008fully,Yordanov2012}. The automata-based approach used in these works requires constructing finite abstractions that (bi)simulate the original system. Approximate finite bisimulation quotients for nonlinear systems were investigated in \cite{pola2009symbolic,zamani2012symbolic}. The main limitations of finite abstraction approaches are the large computational burden of discretization in high dimensions and conservativeness when exact bisimulations are impossible or difficult to construct. As an alternative approach, LTL optimization-based control of mixed-logical dynamical (MLD) systems \cite{Bemporad1999} using mixed-integer programs was introduced in \cite{karaman,Wolff2014}, and was recently extended to model predictive control (MPC) from signal temporal logic (STL) specifications in \cite{raman2014model,raman2015reactive,sadraddini2015robust}. However, these approaches are unable to guarantee infinite-time safety and the results are fragile in the presence of non-deterministic disturbances. 
 

In some applications, the structural properties of the system and the specification can be exploited to consider alternative approaches to formal control synthesis. We are interested in systems in which the evolution of the state exhibits a type of \emph{order preserving law} known as \emph{monotonicity}, which is common in models of transportation, biological, and economic systems \cite{may2007theoretical,como2015throughput,coogan2015mixed,EricS.KimMuratArcak2016}. Such systems are also \emph{positive} in the sense that the state components are always non-negative. Control of positive systems have been widely studied in the literature \cite{haddad2010nonnegative,rantzer2011distributed,de2001stabilization}. Positive linear systems are always monotone \cite{rantzer2012distributed}. 

In this paper, we study optimal STL control of discrete-time positive monotone systems (i.e., systems with state partial order on the positive orthant) with bounded disturbances. STL allows designating time intervals for temporal operators, which makes it suitable for describing requirements with deadlines. Moreover, STL is equipped with quantitative semantics, which provides a measure to quantify how strongly the specification is satisfied/violated. The quantitative semantics of STL can also be used as cost 
for maximization in an optimal control setting. The STL specifications in this paper are restricted to a particular form that favors smaller values for the state components. We assume that there exists a maximal disturbance element that characterizes a type of upper-bound for the evolution of the system. These assumptions are specifically motivated by the dynamics of traffic networks, where the disturbances represent the volume of exogenous vehicles entering the network and the maximal disturbance characterizes the rush hour exogenous flow.  Our optimal control study is focused on STL formulae with infinite-time safety/persistence properties, which is relevant to optimal and correct traffic control in the sense that the vehicular flow is always free of congestion while the associated delay is minimized. 


The key contributions of this paper are as follows. First, for finite-time semantics, we prove that the existence of open-loop control policies is necessary and sufficient for maintaining STL correctness. 
For the correctness of infinite-time semantics, we show that the existence of open-loop control sequences is sufficient and almost necessary, in a sense that is made clear in the paper. Implementing open-loop control policies is very simple since online state measurements are not required, which can prove useful in applications where the state is difficult to access. We use a robust MPC approach to optimal control. The main contribution of our MPC framework is guaranteed recursive feasibility, a property that was not established in prior STL MPC works \cite{raman2014model,raman2015reactive,sadraddini2015robust}. We show via a case study that our method is applicable to systems with relatively high dimensions.

This remainder of the paper is organized as follows. We introduce the necessary notation and background on STL in Sec. \ref{sec:prelim}. The problems are formulated in Sec. \ref{sec:problem}. The technical details for control synthesis from finite and infinite-time specifications are given in Sec. \ref{sec:finite} and Sec. \ref{sec:infinite}, respectively. The robust MPC framework is explained in Sec. \ref{sec:mpc}. Finally, we introduce a traffic network model and explain its monotonicity properties in Sec. \ref{sec:traffic}, where a case study is also presented.

\subsection*{Related Work}
This paper is an extension of the conference version \cite{sadraddini2016safety}, where we studied safety control of positive monotone systems. Here, we significantly enrich the range of specifications to STL, provide complete proofs, and include optimal control. 

Monotone dynamical systems have been extensively investigated  in the mathematics literature \cite{hirsch1985systems,hirsch2005monotone,smith2008monotone}.  
Early studies mainly focused on stability properties and characterization of limit sets for autonomous, deterministic continuous-time systems \cite{hadeler1983quasimonotone,hirsch1985systems}. The results do not generally  hold for discrete-time systems, as discussed in \cite{hirsch2005monotone}. In particular, attractive periodic orbits are proven to be non-existent for continuous-time autonomous systems \cite{hadeler1983quasimonotone}, but may exist for discrete-time autonomous systems. Here we present a similar result for controlled systems, where we show that a type of attractive periodic orbit exists for certain control policies.

Angeli and Sontag \cite{angeli2003monotone} extended the notion of monotonicity to deterministic continuous-time control systems and provided results on interconnections of these systems. However, they assumed monotonicity with respect to both state and controls. We do not require monotonicity with respect to controls, which enables us to consider a broader class of systems. In particular, we do not require controls to belong to a partially ordered set. 

Switching policies for exponential stabilization of switched positive linear systems were studied in  \cite{blanchini2012co,fornasini2012stability}. Stabilization is closely related to set-invariance, which is thoroughly studied in this paper. Apart from richer specifications, we can handle more complex systems. We consider hybrid systems in which the mode is either determined directly by the control input or indirectly by the state \user{(e.g., signalized traffic networks).}

Recently, there has been some interest on formal verification and synthesis for monotone systems. Safety control of cooperative systems was investigated in  \cite{hafner2011computational,ghaemi2014control,meyer2016robust}. However, these work, like \cite{angeli2003monotone}, assumed monotonicity with respect to the control inputs as well. Computational benefits gained from monotonicity for reachability analysis of hybrid systems were highlighted in \cite{ramdani2010computing}. More recently, the authors in \cite{coogan2015efficient} provided an efficient method to compute finite abstractions for mixed-monotone systems (a more general class than monotone systems). \user{The authors in \cite{kim2017symbolic} exploited monotonicity to compute finite-state abstractions that are used for compositional LTL control.} 
While the approaches in \cite{coogan2015efficient,kim2017symbolic} can consider systems and specifications beyond the assumptions in this paper, they still require state-space discretization, which is a severe limitation in high dimensions. Moreover, they are conservative since the finite abstractions are often not bisimilar with the original system - whereas our approach provides a notion of (almost) completeness. \user{Finally, as opposed to the all mentioned works, our framework is amenable to optimal temporal logic control.}

\section{Preliminaries}
\label{sec:prelim}
\subsection{Notation}
\label{sec:notation}
For two integers $a,b$, we use $\rem(a,b)$ to denote the remainder of division of $a$ by $b$. Given a set $\mathcal{S}$ and a positive integer $K$, we use the shorthand notation $\mathcal{S}^K$ for $\prod_{i=1}^K \mathcal{S}$. A signal is defined as an infinite sequence $\bolds{s}=s_0s_1\cdots$, where $s_k \in \mathcal{S}$, $k \in \mathbb{N}$.  
\user{Given $s_1,s_2,\cdots,s_K\in \mathcal{S}$, the repetitive infinite-sequence $s_1s_2 \cdots s_K s_1 s_2 \cdots s_K\cdots$ is denoted by $(s_1 s_2 \cdots s_K)^\omega$.} The set of all signals that can be generated from $\mathcal{S}$ is denoted by $\mathcal{S}^\omega$. We use $\bolds{s}[k]={s}_k{s}_{k+1}\cdots$ and $\bolds{s}[{k_1:k_2}]={s}_{k_1}{s}_{k_1+1}\cdots{s}_{k_2}$,  $k_1 < k_2$, to denote specific portions of $\bolds{s}$.
A \emph{real} signal is $\bolds{\bolds{r}}={r}_0{r}_1{r}_2\cdots$, where ${r}_k \in \mathbb{R}^n, \forall k \in \mathbb{N}$. A vector of all ones in $\mathbb{R}^n$ is denoted by $1_n$. \revone{We use the notation $\bolds{1}_{n}[0:K]:= {1_n \cdots 1_n }$, where $1_n$ is repeated $K+1$ times}.
The positive closed orthant of the $n$-dimensional Euclidian space is denoted by $\mathbb{R}^n_+ := \left \{ x \in \mathbb{R}^n| x_{[i]} \ge 0, i=1,\cdots,n \right \}$, where $x=(x_{[1]},x_{[2]},\cdots,x_{[n]})^T$. For $a,b \in \mathbb{R}^n$, the non-strict partial order relation $\preceq$ is defined as:
$a \preceq b \Leftrightarrow b-a \in \mathbb{R}_+^n.$
\begin{define}[\cite{davey2002introduction}]
A set $\mathcal{X} \subset \mathbb{R}^n_+$ is a \emph{lower-set} if $\forall x \in \mathcal{X},L(x) \subseteq \mathcal{X}$, where $L(x):= \left\{ y \in \mathbb{R}_+^n \big| y \preceq x \right\}.
$
\end{define}
It is straightforward to verify that if $\mathcal{X}_1,\mathcal{X}_2$ are lower-sets, then $\mathcal{X}_1 \cup \mathcal{X}_2$ and $\mathcal{X}_1 \cap \mathcal{X}_2$ are also lower-sets.
 We extend the usage of notation $\preceq$ to equal-length real signals. For two real signals $\bolds{r}, \bolds{r}$, we denote $\bolds{r}'[t_1':t_2'] \preceq \bolds{r}[t_1:t_2]$, $t_2-t_1=t'_2-t'_1$, if $r'_{t'_1+k} \preceq r_{t_1+k}, k=0,1,\cdots,t_2-t_1$. \user{
Moreover, if $\bolds{r}, \bolds{r}' \in ({\mathbb{R}_+^n})^\omega$, we are also allowed to write  $\bolds{r}'[t_1':t_2'] \in L(\bolds{r}[t_1:t_2])$.}

\subsection{Signal Temporal Logic (STL)}

\label{sec:NF-STL}
In this paper, STL \cite{maler_stl} formulas are defined over discrete-time real signals. 
The syntax of negation-free STL is:
\begin{equation}
\label{eq:STL_syntax}
\varphi:= ~\pi~ | ~ \varphi_1 \wedge \varphi_2 ~ |~ \varphi_1 \vee \varphi_2 ~|~ \varphi_1 {\bf U}_{I} \varphi_2 ~|~ \bolds{F}_I \varphi ~|~ \bolds{G}_I \varphi,
\end{equation}  
where $\pi=(p({r}) \le c)$ is a predicate on ${r} \in \mathbb{R}^n$, $p: \mathbb{R}^n \rightarrow \mathbb{R}$, $c \in \mathbb{R}$; $\wedge$ and $\vee$ are Boolean connectives for conjunction and disjunction, respectively; ${\bf U}_{I}$, $\bolds{F}_I $, $\bolds{G}_I$ are the timed \emph{until}, \emph{eventually} and \emph{always} operators, respectively, and $I=[t_1,t_2]$ is a time interval,  $t_1,t_2 \in \mathbb{N}\cup \{\infty\}, t_2\ge t_1$. When $t_1=t_2$, we use the shorthand notation $\left\{t_1\right\}:=[t_1,t_1]$. Exclusion of negation does not restrict expressivity of temporal properties. It can be easily shown that any temporal logic formula can be brought into \emph{negation normal form} (where all negation operators apply to the predicates) \cite{ouaknine2008some,sadraddini2015robust}. 
We deliberately omit negation from STL syntax for laying out properties that are later exploited in the paper. For simplicity, in the rest of the paper, we will refer to negation-free STL simply as STL. 
The semantics of STL is inductively defined as:
\begin{equation}
\label{equ:semantics}
\begin{array}{lll}
\bolds{r}[t] \models \pi & \Leftrightarrow & p({r}_t) \le c, 
\\
\bolds{r}[t] \models  \varphi_1 \vee \varphi_2 & \Leftrightarrow & \bolds{r}[t] \models \varphi_1 ~\vee~\bolds{r}[t]  \models \varphi_2,
\\
\bolds{r}[t]  \models \varphi_1 \wedge \varphi_2 & \Leftrightarrow & \bolds{r}[t] \models \varphi_1 ~\wedge~\bolds{r}[t]  \models \varphi_2,
\\
\bolds{r}[t]  \models  \varphi_1~{\bf U}_{I}~ \varphi_2 & \Leftrightarrow & \exists t^\prime \in t+I 
~{s.t}~  \bolds{r}[{t^\prime}] \models \varphi_2 \\ &&\wedge~ \forall t^{\prime\prime} \in [t,t^\prime],   \bolds{r}[{t^{\prime\prime}}] \models \varphi_1,
\\
\bolds{r}[t]  \models  {\bf F}_I \varphi & \Leftrightarrow & \exists  t^\prime \in t+I ~s.t.~ \bolds{r}[t'] \models \varphi, \\
\bolds{r}[t]   \models  {\bf G}_I \varphi & \Leftrightarrow & \forall  t^\prime \in t+I ~s.t.~ \bolds{r}[t'] \models \varphi,
\end{array}
\end{equation} 
where $\models$ is read as \emph{satisfies}. The \emph{language} of $\varphi$ is the set of all signals such that $\bolds{r}[0] \models \varphi$. The \emph{horizon} of an STL formula $\varphi$, denoted by $h^\varphi$, is defined as the time required to decide the satisfaction of $\varphi$, which is recursively computed as \cite{dokhanchi}:
\begin{equation}
\label{eq:horizon}
\begin{array}{rl}
h^\pi=&0, \\
h^{\varphi_1 \wedge \varphi_2}=h^{\varphi_1 \vee \varphi_2}=&\max(h^{\varphi_1},h^{\varphi_2}), \\
h^{{\bf F}_{[t_1,t_2]}\varphi}=h^{{\bf G}_{[t_1,t_2]}\varphi}=& t_2+h^{\varphi},\\
h^{\varphi_1 {\bf U}_{[t_1,t_2]} \varphi_2}=& t_2+\max(h^{\varphi_1},h^{\varphi_2}). \\ 
\end{array}
\end{equation}
\begin{define}
An STL formula $\varphi$ is \emph{bounded} if $h^\varphi<\infty$. 
\end{define}
\revtwo{
\begin{define}[\cite{ouaknine2006safety}]
A \emph{safety STL} formula is an STL formula in which all ``until" and ``eventually" intervals are bounded.
\label{define:safety}
\end{define}
}
The satisfaction of $\varphi$ by $\bolds{r}[t]$ is decided only by $\bolds{r}[t:t+h^\varphi]$ and the rest of the signal values are irrelevant. Therefore, instead of $\bolds{r}[t]\models \varphi$, we occasionally write $\bolds{r}[t:t+h^\varphi]\models \varphi$ with the same meaning. 
The \emph{STL robustness score} $\rho(\bolds{r},\varphi,t) \in \mathbb{R}$ is a measure indicating how strongly $\varphi$ is satisfied by $\bolds{r}[t]$, which is recursively computed as \cite{maler_stl}:
\begin{equation}
\label{equ:quant}
\begin{array}{lll}
\rho(\bolds{r},\pi,t) & = & c-p({r_t}) , 
\\
\rho(\bolds{r},\varphi_1 \vee \varphi_2,t)   & = &\max(\rho(\bolds{r},\varphi_1,t),\rho(\bolds{r},\varphi_2,t) ),
\\
\rho(\bolds{r},\varphi_1 \wedge \varphi_2,t)   & = &\min(\rho(\bolds{r},\varphi_1,t),\rho(\bolds{r},\varphi_2,t) ),
\\
\rho(\bolds{r},\varphi_1~{\bf U}_{I}~ \varphi_2,t) & = & \underset{t^\prime \in t+I} \max \big (  
 \min (\rho(\bolds{r},\varphi_2,t'), \\ &&~~~~~~~ \underset{t'' \in [t,t']} \min \rho(\bolds{r},\varphi_1,t''))\big),
\\
\rho(\bolds{r},{\bf F}_{I}~\varphi,t) & = & \underset{t' \in t+I}\max ~ \rho(\bolds{r},\varphi,t'), \\
\rho(\bolds{r},{\bf G}_{I}~\varphi,t) & = & \underset{t' \in t+I}\min~  \rho(\bolds{r},\varphi,t'). \\
\end{array}
\end{equation} 
Positive (respectively, negative) robustness indicates satisfaction (respectively, violation) of the formula.

\begin{example}
\label{example:semantics}
Consider signal $\bolds{r} \in \mathbb{R}^\omega$, where ${r}_k=k, k \in \mathbb{N}$, and $\pi=({r}^2 \le 10)$. We have $\rho(\bolds{r}, \bolds{G}_{[0,3]} \pi,0)=\min(10-0,10-1,10-4,10-9)=1$ (satisfaction) and $\rho(\bolds{r}, \bolds{F}_{[4,6]} \pi,0)=\max(10-16,10-25,10-36)=-6$ (violation). 
\endproof
\end{example}

\begin{remark}
There are minor differences between the original STL introduced in \cite{maler_stl} and the one used in this paper. In \cite{maler_stl}, STL was  developed as an extension of metric interval temporal logic (MITL) \cite{ouaknine2006safety} for real-valued continuous-time signals. Here, without any loss of generality, we apply STL to discrete-time signals. Our STL is based on metric temporal logic (MTL) (similar to \cite{dokhanchi}). Thus, we allow the intervals of temporal operators to be singletons (punctual) or unbounded. It is worth to note that any STL formula in this paper can be translated into an LTL formula by appropriately replacing the time intervals of temporal operators with LTL ``next" operator. However, the LTL representation of STL formulas can be very inefficient. We prefer STL for convenience of specifying requirements for systems with real-valued states. We also exploit the STL quantitative semantics. 
 
\end{remark}

\section{Problem Statement and Approach}
\label{sec:problem}
We consider discrete-time systems of the following form:
\begin{equation}
x_{t+1}=f(x_t,u_t,w_t),
\label{eq:system}
\end{equation}
where $x_t \in \mathcal{X}$ is the state, $\mathcal{X} \subset \mathbb{R}_+^n$, $u_t \in \mathcal{U}$ is the control input, $\mathcal{U}= \mathbb{R}^{m_r} \times \{0,1\}^{m_b}$, and $w_t \in \mathcal{W}$ is the disturbance (adversarial input) at time $t$, $t\in \mathbb{N}$, $\mathcal{W}= \mathbb{R}^{q_r} \times \{0,1\}^{q_b}$. The sets $\mathcal{U}$ and $\mathcal{W}$ may include real and binary values. For instance, the set of controls in the traffic model developed in Sec. \ref{sec:traffic} includes binary values for decisions on traffic lights and real values for ramp meters. These types of systems are positive as all state components are non-negative. We also assume that $\mathcal{X}$ is bounded.
\begin{define}
\label{define:cooperative}
System \eqref{eq:system} is monotone (with partial order on $\mathbb{R}^n_+$) if for all $x,x'\in \mathcal{X}$,  $x' \preceq x$, we have $f(x',u,w) \preceq f(x,u,w), \forall u \in \mathcal{U}, \forall w \in \mathcal{W}$. 
\end{define}
The systems considered in this paper are positive and monotone with partial order on $\mathbb{R}^n_+$. For the remainder of the paper, we simply refer to systems in Definition \ref{define:cooperative} as monotone \footnote{The term \emph{cooperative} in dynamics systems theory is used specifically to refer to systems that are monotone with partial order defined on the positive orthant. We avoid using this term here as it might generate confusion with the similar terminology used for multi-agent control systems.}. 
 Although the results of this paper are valid for any general $f:\mathcal{X} \times \mathcal{U} \times \mathcal{W} \rightarrow \mathcal{X}$, we focus on systems that can be written in the form of mixed-logical dynamical (MLD) systems \cite{Bemporad1999}, which are defined in Sec. \ref{sec:finite}. It is well known that a wide range of systems involving discontinuities (hybrid systems), such as piecewise affine systems, can be transformed into MLDs \cite{heemels2001equivalence}.

\begin{assumption}
\label{assume:w}
There exist $w^* \in \mathcal{W}$ such that
\begin{equation}
\forall x\in \mathcal{X}, \forall u\in \mathcal{U}, ~f(x,u,w) \preceq f(x,u,w^*), \forall w\in \mathcal{W}.
\end{equation}
\end{assumption}
We denote $f(x,u,w^*)$ by $f^*(x,u)$ and refer to $f^*$ as the \emph{maximal system}.  
As it will be further explained in this paper, the behavior of monotone  system \eqref{eq:system} is mainly characterized by its maximal $f^*$. Assumption \ref{assume:w} is restrictive but holds for many compartmental systems where the disturbances are additive and the components are independent. Therefore, the maximal system corresponds to the situation that every component takes its most extreme value. 
We also note that if Assumption \ref{assume:w} is removed, overestimating $f$ by some $f^*$ such that $f(x,u,w) \preceq f^*(x,u), \forall w \in \mathcal{W}$, is always possible for a bounded $f$. By overestimating $f$ the control synthesis methods of this paper remain correct, but become conservative. 


We describe the desired system behavior using specifications written as STL formulas over a finite set of predicates. We assume that each predicate {$\pi$} is in the following form:
\begin{equation}
\revone{
\pi= \left ( a_\pi^T x \le b_\pi  \right ),}
\label{eq:predicate}
\end{equation} 
where $a_\pi \in \mathbb{R}^n_+$, $b_\pi \in \mathbb{R}_+$. It is straightforward to verify that the closed half-space defined by \eqref{eq:predicate} is a lower-set in $\mathbb{R}^n_+$.  
By restricting the predicates into the form \eqref{eq:predicate}, we ensure that a predicate remains true if the values of state components are decreased \revtwo{(Note that this is true for any lower set. We require linearity in order to decrease the computational complexity.)}. This restriction is motivated by monotonicity. For example, in a traffic network, the state is the vector representation of vehicular densities in different segments of the network. The satisfaction of a ``sensible" traffic specification has to be preserved if the vehicular densities are not increased all over the network. Otherwise, the specification encourages large densities and congestion.




\begin{define}
A control policy $\mu:=\bigcup_{t \in \mathbb{N}} \mu_t$ is a set of functions $\mu_t:\mathcal{X}^{t+1} \rightarrow \mathcal{U}$, where $$u_t=\mu_t(x_0,x_1,\cdots,x_t).$$
\end{define} 
An \emph{open-loop} control policy takes the simpler form $u_t=\mu_t(x_0)$, i.e., the decision on the sequence of control inputs is made using only the initial state $x_0$. On the other hand, in a (history dependent) \emph{feedback} control policy, $u_t=\mu_t(x_0,x_1,\cdots,x_t)$, the controller implementation requires real-time access to the state and its history.

An infinite sequence of admissible disturbances is $\bolds{w}=w_0w_1\cdots$, where $w_k \in \mathcal{W}$, $k \in \mathbb{N}$. Following the notation introduced in Sec. \ref{sec:notation}, the set of all infinite-length sequences of admissible disturbances is denoted by $\mathcal{W}^\omega$. Given an initial condition $x_0$, a control policy $\mu$ and $\bolds{w} \in \mathcal{W}^\omega$, the \emph{run} of the system is defined as the following signal: 
$$\bolds{x}=\bolds{x}(x_0,\mu,\userfinal{\bolds{w}}):=x_0 x_1 x_2 \cdots,$$
where $x_{t+1}=f(x_t,u_t,w_t), \forall t \in \mathbb{N}$. 
Now we formulate the problems studied in this paper. In all problems, we assume a monotone system \eqref{eq:system} is given, Assumption \ref{assume:w} holds, and all the predicates are in the form of \eqref{eq:predicate}.

\begin{problem}[Bounded STL Control]
\label{prob:bounded}
Given a bounded STL formula $\varphi$, find a set of initial conditions $\mathcal{X}_0 \subset \mathcal{X}$ and a control policy $\mu$ such that 
\begin{equation*}
\bolds{x}(x_0,\mu,\bolds{w})[0] \models \varphi, \forall \bolds{w} \in \mathcal{W}^\omega, \forall x_0 \in \mathcal{X}_0.
\end{equation*}
\end{problem}
\vspace{5pt}

As mentioned in the previous section, the satisfaction of $\varphi$ solely depends on $\bolds{x}[0:h^\varphi]$, where $h^\varphi$ is obtained from \eqref{eq:horizon}. The horizon $h^\varphi$ can be viewed as the time when the specification ends. In many engineering applications, the system is required to uphold certain behaviors for all times. Therefore, guaranteeing infinite-time safety properties is important. \revtwo{
We formulate \emph{bounded-global} STL formulas in the form of 
\begin{equation}
\label{eq:global}
\varphi_b \wedge {\bf G}_{[\Delta,\infty]} \varphi_g,
\end{equation}
where $\varphi_b, \varphi_g$ are bounded STL formulas, ${\bf G}_{[\Delta,\infty]}$ stands for unbounded temporal ``always"- as defined in Sec. \ref{sec:NF-STL}, and $\Delta \ge h^{\varphi_b}$ is a positive integer. Formula \eqref{eq:global} states that first, $\varphi_b$ is satisfied by the signal from time 0 to $\Delta$, and, afterwards, $\varphi_g$ holds for all times.

\begin{problem}[Bounded-global STL Control]
\label{prob:global}
Given bounded STL formulas $\varphi_b,\varphi_g$, $\Delta \in [h^{\varphi_b},\infty)$, find a set of initial conditions $\mathcal{X}_0 \subset \mathcal{X}$ and a control policy $\mu$ such that
\begin{equation}
\label{eq_bounded_global}
\bolds{x}(x_0,\mu,\bolds{w})[0] \models \varphi_b \wedge \bolds{G}_{[\Delta,\infty]}\varphi_g, \forall \bolds{w} \in \mathcal{W}^\omega, \forall x_0 \in \mathcal{X}_0.
\end{equation}
\end{problem}
\vspace{5pt}

As a special case, we allow $\varphi_b$ to be logical truth so Problem \ref{prob:global} reduces to \emph{global STL} control problem of satisfying ${\bf G}_{[\Delta,\infty]} \varphi_g$. Note that if $\varphi_g$ is replaced by logical truth, Problem \ref{prob:global} reduces to  Problem \ref{prob:bounded}. We have distinguished Problem \ref{prob:bounded} and Problem \ref{prob:global} as we use different approaches to solve them.

It can be shown that (see Appendix) a large subset of safety STL formulas - as in Definition \ref{define:safety} - can be written as 
$
\bigvee_{i=1}^{n_\phi} \phi_i,
$
where each $\phi_i, i=1,\cdots,{n_\phi}$, is a bounded-global formula. Therefore, the framework for solutions to Problem \ref{prob:global} can also be used for safety STL control as it leads to $n_\phi$ instances of Problem \ref{prob:global}, where a solution to any of the instances is also a solution to the original safety STL control problem. The drawback to this approach is that $n_\phi$ can be very large.}

\begin{remark} 
We avoid separate problem formulations for STL formulas containing unbounded ``eventually" or ``until" operators as their unbounded intervals can be safely under-approximated by bounded intervals. However, bounded under-approximation is not sound for the unbounded ``always" operator. A safety formula can be satisfied (respectively, violated) with infinite-length (respectively, finite-length) signals \cite{ouaknine2006safety}. 
\end{remark}

In the presence of disturbances, feedback controllers obviously outperform open-loop controllers. We show that the existence of open-loop control policies for guaranteeing the STL correctness of monotone systems in Problem 1 (respectively, Problem 2) is sufficient and (respectively, {almost}) necessary. The online knowledge of state is not necessary for STL correctness. But it can be exploited for planning controls optimally. 
While our framework can accommodate optimal control versions of Problem \ref{prob:bounded} and Problem \ref{prob:global}, the focus of this paper is on robust optimal control problem for global STL formulas - of form $\bolds{G}_{[0,\infty)}\varphi$, where $\varphi$ is a bounded formula. These type of problems are of practical interest for optimal traffic management (as discussed in Sec. \ref{sec:traffic}). 

We use a model predictive control (MPC) approach, which is a popular, powerful approach to optimal control of constrained systems. Given a planning horizon of length $H$ \footnote{The MPC horizon $H$ should not be confused with the STL horizon $h^\varphi$. }, a sequence of control actions 
 starting from time $t$ is denoted by $u_t^H:=u_{0|t} u_{1|t} \cdots  u_{H-1|t}.$ Given $u^H_t$ and $x_t$, we denote the predicted $H$-step system response by 
$$x_t^H(x_t,u_t^H,w_t^H):= x_{1|t} x_{2|t} \cdots x_{H|t},$$
 where 
$
x_{k+1|t}=f(x_{k|t},u_{k|t},w_{k|t}), k=0,1,\cdots,H-1,
$
 $x_{0|t}=x_t$ and $w_t^H:=w_{0|t} w_{1|t} \cdots w_{H-1|t}$. At each time, $u^H_t$ is found such that it optimizes a cost function $J\left(x_t^H,u^H_t\right)$, $J:\mathcal{X}^H \times \mathcal{U}^H \rightarrow \mathbb{R}$, subject to system constraints. When $u_t^H$ is computed, only the first control action $u_{0|t}$ is applied to the system and given the next state, the optimization problem is resolved for $u_{t+1}^H$. Thus, the implementation is closed-loop. 

\begin{problem}[Robust STL MPC] 
\label{prob:optimal}
Given a bounded STL formula $\varphi$, an initial condition $x_0$, a planning horizon $H$ and a cost function $J\left(x_t^H,u^H_t\right)$, \revtwo{find a control policy such that  $u_t=\mu(x_0,\cdots,x_{t})=u^{\text{opt}}_{0|t}$, where $u_t^{H,\text{opt}}:=u^\text{opt}_{0|t} \cdots u^\text{opt}_{H-1|t}$, and $u^{H,\text{opt}}$ is the following minimizer:
\begin{equation}
\label{eq:predictive}
\begin{array}{cl}
\displaystyle \argmin_{u_t^H}&  \underset{\userfinal{w^H_t}} \max~  J\left(x_t^H(x_t,u_t^H,w_t^H), u_t^H \right), \\
  \text{s.t.}&  \bolds{x}(x_0,\mu,\bolds{w})[0] \models \bolds{G}_{[0,\infty]}\varphi, \forall \bolds{w} \in \mathcal{W}^\omega,\\
  & x_{k+1}=f(x_k,u_k,w_k), \forall k \in \mathbb{N}. \\
\end{array}
\end{equation}
}
\end{problem}
\vspace{3pt}

The primary challenge of robust STL MPC is guaranteeing the satisfaction of the global STL formula while the controls are planned in a receding horizon manner (see the constraints in \eqref{eq:predictive}). Our approach takes the advantage of the results from Problem \ref{prob:global} to design appropriate terminal sets for the MPC algorithm such that the generated runs are guaranteed to satisfy the global STL specification while the online control decisions are computed (sub)optimally. Due to the temporal logic constraints, our MPC setup differs from the conventional one. The details are explained in Sec. \ref{sec:mpc}.

For computational purposes, we assume that $J$ is a piecewise affine function of the state and controls. Moreover, the cost functions in our applications are non-decreasing with respect to the state in the sense that $x'_{k|t} \preceq x_{k|t}, k=1,2,\cdots,H \Rightarrow J(\userfinal{x_t^{\prime H}},u^H_t) \preceq J(x_t^{H},u^H_t), \forall u^H_t \in  \mathcal{U}^H$. As it will become clear later in the paper, we will exploit this property to simplify the worst-case optimization problem in \eqref{eq:predictive} to an optimization problem for the maximal system. 

As mentioned earlier, a natural objective is maximizing STL robustness score. {It follows from the linearity of the predicates in \eqref{eq:predicate} and $\max$ and $\min$ operators in \eqref{equ:quant} that STL robustness score is a piecewise affine function of finite-length signals.} We can also consider optimizing a weighted combination of STL robustness score and a given cost function. We use this cost formulation for traffic application in Sec. \ref{sec:traffic}.

\section{Finite Horizon Semantics}
\label{sec:finite}
In this section, we explain the solution to Problem \ref{prob:bounded}. First, we exploit monotonicity to characterize the properties of the solutions. Next, we explain how to synthesize controls using a mixed integer linear programming (MILP) solver.

\begin{lemma} 
\label{lemma:finite_run} 
Consider runs $\bolds{x}$ and $\bolds{x}'$ and an STL formula $\varphi$. If for some $t,t'$, we have \revtwo{$ \bolds{x}'[t':t'+h^\varphi] \preceq  \bolds{x}[t:t+h^\varphi]$}, then $\bolds{x}[t] \models \varphi$ implies $\bolds{x}'[{t'}] \models \varphi$. 
\end{lemma}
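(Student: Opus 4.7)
The plan is to prove the lemma by structural induction on $\varphi$, exploiting two facts: first, the negation-free STL syntax means every inductive case is ``order-friendly,'' and second, the predicates are restricted to the lower-set form $a_\pi^T x \le b_\pi$ with $a_\pi \in \mathbb{R}^n_+$. Throughout, I will use the equivalent formulation $\bolds{x}'[t':t'+h^\varphi] \in L(\bolds{x}[t:t+h^\varphi])$ and the recursive definition of $h^\varphi$ in \eqref{eq:horizon} to keep track of which portions of the signals are being compared.

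For the base case $\varphi = \pi$, we have $h^\pi = 0$, so the hypothesis reduces to $x'_{t'} \preceq x_t$. Then $a_\pi^T(x_t - x'_{t'}) \ge 0$ since $a_\pi \in \mathbb{R}^n_+$ and $x_t - x'_{t'} \in \mathbb{R}^n_+$, so $a_\pi^T x'_{t'} \le a_\pi^T x_t \le b_\pi$, giving $\bolds{x}'[t'] \models \pi$. For $\varphi_1 \wedge \varphi_2$ and $\varphi_1 \vee \varphi_2$, the horizon is $\max(h^{\varphi_1}, h^{\varphi_2})$, so the pointwise ordering hypothesis implies the analogous ordering for both $h^{\varphi_i}$-length prefixes, and the induction hypothesis plus the definitions in \eqref{equ:semantics} finishes these cases.

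The temporal cases are essentially bookkeeping. For $\bolds{G}_I \varphi$ with $I = [t_1,t_2]$, $h^{\bolds{G}_I \varphi} = t_2 + h^\varphi$. Given any $s' \in t' + I$, write $s' = t' + k$ with $k \in [t_1,t_2]$ and let $s = t + k \in t + I$. Since the pointwise ordering $x'_{t'+i} \preceq x_{t+i}$ holds for all $i \in [0, t_2 + h^\varphi]$, restricting to $i \in [k, k+h^\varphi]$ yields $\bolds{x}'[s':s'+h^\varphi] \preceq \bolds{x}[s:s+h^\varphi]$. Since $\bolds{x}[s] \models \varphi$ by assumption, the induction hypothesis gives $\bolds{x}'[s'] \models \varphi$. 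The case $\bolds{F}_I \varphi$ is dual: pick the witness $s \in t + I$ for $\bolds{x}$, set $s' = t' + (s - t)$, and reuse the same index-shift argument. For $\varphi_1 \bolds{U}_I \varphi_2$, combine both ideas: the witness $s$ for $\varphi_2$ produces a witness $s'$ for $\bolds{x}'$, and the ``until'' requirement at all intermediate times is handled via the conjunction/$\bolds{G}$-style argument on $\varphi_1$.

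I do not expect any real obstacle beyond careful index management: the crucial property that makes every step work is that the hypothesis is a pointwise inequality over a window long enough to decide $\varphi$, and that none of the STL constructs (predicates, Boolean connectives, temporal operators) can reverse this order — a direct consequence of excluding negation and restricting predicates to \eqref{eq:predicate}. The only pitfall to watch is making sure that whenever the proof invokes the induction hypothesis for a sub-formula $\psi$ at shifted times $s,s'$, the hypothesis indeed supplies a comparison over a window of length $h^\psi$; this is guaranteed by the recursive definitions of $h^\varphi$ in \eqref{eq:horizon}, which leave exactly enough slack in every case.
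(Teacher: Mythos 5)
Your proof is correct and follows essentially the same route as the paper's: the paper's one-line justification that ``without falsifying any predicate, a formula cannot be falsified'' under the negation-free semantics is exactly the structural induction you carry out explicitly, grounded in the same base case that the predicates \eqref{eq:predicate} define lower-sets. Your version merely makes the index bookkeeping for the temporal operators precise, which the paper leaves implicit.
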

\begin{IEEEproof}
Since all predicates denote lower-sets in the form of \eqref{eq:predicate}, we have $x'_{t'} \preceq x_{t} \Rightarrow a_\pi^T x'_{t'} \le a_\pi^T x_t$, $\bolds{x}[t] \models \pi \Rightarrow \bolds{x}'[t] \models \pi$. Thus, all predicates that were true by the valuations in $\bolds{x}$ remain true for $\bolds{x}'$. The negation-free semantics in \eqref{equ:semantics} implies that without falsifying any predicate, a formula can not be falsified. Therefore, $\bolds{x}[t] \models \varphi$ implies $\bolds{x}'[{t'}] \models \varphi$
\end{IEEEproof}

The \emph{largest set of admissible initial conditions} is defined as:
$$ \mathcal{X}_0^{\max}:=\left \{ x_0 \in \mathcal{X} {\Big |} \exists \mu ~\text{s.t.}~ \bolds{x}(x_0,\mu,\bolds{w}) \models \varphi, \forall \bolds{w} \in \mathcal{W}^\omega \right \}.$$
The set $ \mathcal{X}_0^{\max}$ is a union of polyhedra. Finding the half-space representation of all polyhedral sets in $\mathcal{X}_0^{\max}$ may not be possible for high dimensions. Therefore, we find a half-space representation for a subset of $\mathcal{X}_0^{\max}$. The following result states how to check whether $x_0 \in \mathcal{X}_0^{\max}$.

\begin{theorem} 
\label{theorem:iff}
We have $x_0 \in \mathcal{X}_0^{\max}$ if and only if there exists an open-loop control sequence $$u_{0}^{ol,x_0} u_1^{ol,x_0} \cdots u_{h^\varphi-1}^{ol,x_0}$$ 
such that $\bolds{x}^{ol,x_0}[{0:h^\varphi}] \models \varphi$, where $\bolds{x}^{ol,x_0}[{0:h^\varphi}]=x_0^{ol,x_0}x_1^{ol}\cdots x_{h^\varphi}^{ol,x_0}$, and $x_{k+1}^{ol,x_0}=f^*(x_k^{ol,x_0},u_k^{ol,x_0}), k=0,\cdots,h^\varphi-1, x_0^{ol,x_0}=x_0$.
\end{theorem}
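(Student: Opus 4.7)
The proof splits naturally into the two directions of the biconditional, and both rely on the maximal disturbance $w^*$ together with monotonicity.

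For the forward direction ($\Rightarrow$), I would start from the assumption that $x_0 \in \mathcal{X}_0^{\max}$, which gives a (possibly feedback) control policy $\mu$ such that the run $\bolds{x}(x_0,\mu,\bolds{w})$ satisfies $\varphi$ for every $\bolds{w}\in\mathcal{W}^\omega$. In particular this holds for the constant disturbance sequence $\bolds{w}^*=w^*w^*\cdots$. I would then define the open-loop sequence by harvesting the controls this policy issues along the resulting trajectory, i.e., set $u_k^{ol,x_0}:=\mu_k(x_0,x_1,\ldots,x_k)$ with $x_{k+1}=f(x_k,u_k^{ol,x_0},w^*)=f^*(x_k,u_k^{ol,x_0})$ for $k=0,\ldots,h^\varphi-1$. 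By construction this reproduces $\bolds{x}^{ol,x_0}[0:h^\varphi]$ in the maximal system, and since the original run satisfied $\varphi$, so does $\bolds{x}^{ol,x_0}[0:h^\varphi]$.

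For the backward direction ($\Leftarrow$), I would \emph{define} the policy by $\mu_t(x_0,x_1,\ldots,x_t):=u_t^{ol,x_0}$ and show that for every $\bolds{w}\in\mathcal{W}^\omega$, the resulting run $\bolds{x}$ satisfies $\bolds{x}[0:h^\varphi]\preceq\bolds{x}^{ol,x_0}[0:h^\varphi]$. This is a one-line induction on $t$: the base case $x_0=x_0^{ol,x_0}$ is immediate, and the inductive step uses Assumption~\ref{assume:w} and Definition~\ref{define:cooperative} in a single chain,
\begin{equation*}
x_{t+1}=f(x_t,u_t^{ol,x_0},w_t)\preceq f(x_t^{ol,x_0},u_t^{ol,x_0},w_t)\preceq f(x_t^{ol,x_0},u_t^{ol,x_0},w^*)=x_{t+1}^{ol,x_0},
\end{equation*}
where monotonicity handles the first inequality (using $x_t\preceq x_t^{ol,x_0}$ from the inductive hypothesis) and Assumption~\ref{assume:w} handles the second. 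Having established the trajectory-wise order, Lemma~\ref{lemma:finite_run} applied at $t=t'=0$ immediately gives $\bolds{x}[0]\models\varphi$, so $x_0\in\mathcal{X}_0^{\max}$.

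There is no real technical obstacle here; the whole argument is essentially bookkeeping around the two monotonicity ingredients. The only thing to be careful about is the direction in which monotonicity is invoked — one must be sure that the monotonicity in the state (Definition~\ref{define:cooperative}) and the worst-case property of $w^*$ (Assumption~\ref{assume:w}) compose correctly, because the actual trajectory depends on the adversarial $\bolds{w}$ while the reference trajectory is generated by $f^*$ under the same control sequence. The negation-freeness of the STL fragment is what makes Lemma~\ref{lemma:finite_run} applicable and is therefore implicitly the load-bearing structural hypothesis of the theorem.
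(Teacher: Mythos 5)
Your proposal is correct and follows essentially the same route as the paper's proof: necessity by extracting the control sequence that the assumed policy issues along the run driven by the constant maximal disturbance $\bolds{w}^*$, and sufficiency by an induction establishing $\bolds{x}[0:h^\varphi]\preceq\bolds{x}^{ol,x_0}[0:h^\varphi]$ from monotonicity and Assumption~\ref{assume:w}, followed by Lemma~\ref{lemma:finite_run}. Your two-step chain in the inductive step merely makes explicit what the paper compresses into a single inequality, so there is nothing substantive to add.
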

\begin{IEEEproof}
(\emph{Necessity})
Satisfaction of $\varphi$ with $\bolds{w} \in \mathcal{W}^\omega$ requires at least one satisfying run for the maximal system, hence a corresponding control sequence exists. Denote it by $u_0^{ol,x_0} u_1^{ol,x_0} \cdots,u_{h^\varphi-1}^{ol,x_0}$.  
(\emph{Sufficiency})
Consider \user{any} run generated by the original system $x_{k+1}=f(x_k,u_k^{ol,x_0},w_k)$. We prove that $x_k \preceq x_k^{ol,x_0}~, k=0,1,\cdots,h^\varphi$, by induction over $k$. The base case $x_0 \preceq x_0^{ol,x_0}$ is trivial $(x_0=x_0^{ol,x_0})$. The inductive step is verified from monotonicity: $x_{k+1}=f(x_k,u_k^{ol,x_0},w_k) \preceq f^*(x_0^k,u_k^k) = x^{ol,x_0}_{k+1}$. Therefore, $\bolds{x}[0:h^\varphi] \preceq \bolds{x}^{ol,x_0}[0:h^\varphi]$, $\forall \bolds{w}[0:h^\varphi\user{-1}] \in \mathcal{W}^{h^\varphi}$. It follows from Lemma \ref{lemma:finite_run} that $\bolds{x}[0:h^\varphi] \models \varphi, \forall \bolds{w}[0:h^\varphi\user{-1}] \in \mathcal{W}^{h^\varphi}$. 
\end{IEEEproof}
\begin{corollary}
\label{corollary:lower}
The set $\mathcal{X}_0^{\max}$ is a lower-set.
\end{corollary}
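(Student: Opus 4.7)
The plan is to chain together Theorem \ref{theorem:iff} (both directions) with Lemma \ref{lemma:finite_run}, using the monotonicity of $f^*$ to transport a witnessing open-loop policy from $x_0$ to any $x_0' \preceq x_0$. Concretely, to prove that $\mathcal{X}_0^{\max}$ is a lower-set, I would fix an arbitrary $x_0 \in \mathcal{X}_0^{\max}$ and an arbitrary $x_0' \in L(x_0)$, i.e.\ $x_0' \preceq x_0$, and show $x_0' \in \mathcal{X}_0^{\max}$.

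First, I would invoke the necessity direction of Theorem \ref{theorem:iff} to extract an open-loop sequence $u_0^{ol,x_0},\ldots,u_{h^\varphi-1}^{ol,x_0}$ whose induced maximal-system trajectory $\bolds{x}^{ol,x_0}[0{:}h^\varphi]$ satisfies $\varphi$. The key move is then to apply the \emph{same} control sequence to the maximal system starting from $x_0'$, producing a trajectory $\bolds{x}^{ol,x_0'}[0{:}h^\varphi]$ with $x_{k+1}^{ol,x_0'} = f^*(x_k^{ol,x_0'},u_k^{ol,x_0})$ and $x_0^{ol,x_0'} = x_0'$. A straightforward induction on $k$, using $x_0' \preceq x_0$ as the base case and the monotonicity of $f^*$ (which is inherited from Definition \ref{define:cooperative} since $f^*(\cdot,u) = f(\cdot,u,w^*)$) as the inductive step, gives $x_k^{ol,x_0'} \preceq x_k^{ol,x_0}$ for every $k \in \{0,1,\ldots,h^\varphi\}$, hence $\bolds{x}^{ol,x_0'}[0{:}h^\varphi] \preceq \bolds{x}^{ol,x_0}[0{:}h^\varphi]$.

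Next I would apply Lemma \ref{lemma:finite_run} with $t=t'=0$ to conclude $\bolds{x}^{ol,x_0'}[0] \models \varphi$. Finally, the sufficiency direction of Theorem \ref{theorem:iff} (applied at $x_0'$) yields $x_0' \in \mathcal{X}_0^{\max}$. Since $x_0 \in \mathcal{X}_0^{\max}$ and $x_0' \in L(x_0)$ were arbitrary, we obtain $L(x_0) \subseteq \mathcal{X}_0^{\max}$ for every $x_0 \in \mathcal{X}_0^{\max}$, which is the defining property of a lower-set.

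I do not expect any real obstacle here: all the heavy lifting is already packaged in the preceding theorem and lemma, and the only substantive step is the one-line induction that $f^*$ preserves the partial order along the shared control sequence. The only minor care-point is to make sure the induction uses the \emph{same} open-loop input at each step (so that monotonicity can be invoked with fixed $u$ and $w = w^*$), rather than a potentially different witness sequence for $x_0'$.
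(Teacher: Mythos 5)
Your proposal is correct and follows essentially the same route as the paper: transport the open-loop witness sequence for $x_0$ to the smaller initial condition $x_0'$, run the monotonicity induction to get $\preceq$ along the whole trajectory, and invoke Lemma \ref{lemma:finite_run}. The only cosmetic difference is that you run the induction on the maximal system and then cite the sufficiency direction of Theorem \ref{theorem:iff}, whereas the paper runs it directly on the disturbed system $f(\cdot,\cdot,w_k)$ and appeals to the definition of $\mathcal{X}_0^{\max}$; the two are interchangeable.
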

\begin{IEEEproof}
Consider any $x_0' \in L(x_0), x_0 \in \mathcal{X}_0^{\max}$. Let $x'_{k+1}= f(x'_k,u_k^{ol,x_0},w_k), k=0,1,\cdots,h^\varphi-1$. It follows from monotonicity that $x_k' \preceq x_k^{ol,x_0}, k=0,1,\cdots,h^\varphi$, $\forall \bolds{w}[0:h^\varphi\user{-1}] \in \mathcal{W}^{h^\varphi}$. By the virtue of Lemma \ref{lemma:finite_run}, $\bolds{x}'[0:h^\varphi] \preceq \bolds{x}^{x_0,ol}[0:h^\varphi]$. Therefore, we have $\forall x_0 \in \mathcal{X}_0^{\max}, x'_0 \in L(x_0) \Rightarrow x'_0 \in  \mathcal{X}_0^{\max}$, which indicates $\mathcal{X}_0^{\max} $ is a lower-set.  
\end{IEEEproof}

\begin{corollary}
If $x_0 \in \mathcal{X}_0^{\max}$ and $\mu^{ol}$ is the following open-loop control policy
$$\mu^{ol}_t(x_0)=u^{ol,x_0}_t, t=0,1,\cdots,h^\varphi-1,$$ 
then \revone{$\bolds{x}(x'_0,\mu,\bolds{w})[0:h^\varphi] \models \varphi, \forall \bolds{w} \in \mathcal{W}^{h^\varphi}, \forall x'_0 \in L(x_0)$}.
\end{corollary}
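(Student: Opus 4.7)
The plan is to observe that this corollary is essentially a restatement obtained by combining the sufficiency direction of Theorem \ref{theorem:iff} with the lower-set argument in Corollary \ref{corollary:lower}, so the proof is almost immediate given those two results.

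First I would fix an arbitrary $x'_0 \in L(x_0)$ and an arbitrary disturbance sequence $\bolds{w} \in \mathcal{W}^{h^\varphi}$, and consider the run $\bolds{x}' = \bolds{x}(x'_0, \mu^{ol}, \bolds{w})$ generated by applying the open-loop controls $u^{ol,x_0}_0, u^{ol,x_0}_1, \ldots, u^{ol,x_0}_{h^\varphi-1}$ (which exist because $x_0 \in \mathcal{X}_0^{\max}$, by Theorem \ref{theorem:iff}). In parallel, consider the maximal-system trajectory $\bolds{x}^{ol,x_0}$ defined in Theorem \ref{theorem:iff}, which by that theorem satisfies $\bolds{x}^{ol,x_0}[0:h^\varphi] \models \varphi$.

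The key step is an induction over $k=0,1,\ldots,h^\varphi$ showing $x'_k \preceq x^{ol,x_0}_k$. The base case is $x'_0 \preceq x_0 = x^{ol,x_0}_0$, which holds by hypothesis. For the inductive step, since the same control $u^{ol,x_0}_k$ is applied in both trajectories, monotonicity of $f$ (Definition \ref{define:cooperative}) gives $f(x'_k, u^{ol,x_0}_k, w_k) \preceq f(x^{ol,x_0}_k, u^{ol,x_0}_k, w_k)$, and Assumption \ref{assume:w} gives $f(x^{ol,x_0}_k, u^{ol,x_0}_k, w_k) \preceq f^*(x^{ol,x_0}_k, u^{ol,x_0}_k) = x^{ol,x_0}_{k+1}$, completing the induction.

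Finally I would apply Lemma \ref{lemma:finite_run} with $t = t' = 0$: from $\bolds{x}'[0:h^\varphi] \preceq \bolds{x}^{ol,x_0}[0:h^\varphi]$ and $\bolds{x}^{ol,x_0}[0] \models \varphi$, we conclude $\bolds{x}'[0] \models \varphi$, i.e., $\bolds{x}'[0:h^\varphi] \models \varphi$. Since $x'_0 \in L(x_0)$ and $\bolds{w} \in \mathcal{W}^{h^\varphi}$ were arbitrary, the claim follows. There is no real obstacle; the only thing to be careful about is stating clearly that the \emph{same} open-loop input sequence (parameterized by $x_0$, not by $x'_0$) is applied along the perturbed trajectory, so that monotonicity can be invoked with matching control arguments.
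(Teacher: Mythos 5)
Your proposal is correct and follows essentially the same route as the paper: the paper proves this corollary by pointing to the proof of Corollary 1, which performs exactly your induction $x'_k \preceq x^{ol,x_0}_k$ (via monotonicity together with Assumption 1 to pass to the maximal system) and then invokes Lemma 1. Your write-up merely makes explicit the two-step inequality in the inductive step and the point that the control sequence is parameterized by $x_0$ rather than $x'_0$, both of which are consistent with the paper's argument.
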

\begin{IEEEproof}
Follows from the proof of Corollary \ref{corollary:lower}.
\end{IEEEproof}
Now that we have established the properties of the solutions to Problem \ref{prob:bounded}, we explain how to compute the admissible initial conditions and their corresponding open-loop control sequences. The approach is based on formulating the conditions in Theorem \ref{theorem:iff} as a set of constraints that can be incorporated into a feasibility solver. We convert all the constraints into a set of mixed-integer linear constraints and use off-the-shelf MILP solvers to check for feasibility. Converting logical properties into mixed-integer constraints is a common procedure which was employed for MLD systems in \cite{Bemporad1999}. The authors in \cite{karaman} and \cite{raman2014model} extended this technique to a framework for time bounded model checking of temporal logic formulas. A  variation of this method is explained here. 

First, the STL formula is recursively translated into a set of mixed-integer constraints. For each predicate $\pi=(a_\pi^T x \le b_\pi)$, as in \eqref{eq:predicate}, we define a binary variable $z^\pi_k \in \{0,1\}$ such that 1 (respectively, 0) stands for true (respectively, false). The relation between  $z^\pi$, robustness $\rho$, and $x$ is encoded as:
\begin{subequations}
\label{eq:z_predicate}
\begin{equation}
a_\pi^Tx - M (1-z^\pi) + \rho \le b_\pi,
\end{equation}
\begin{equation}
a_\pi^Tx + M z^\pi + \rho \ge b_\pi.
\end{equation}
\end{subequations}
The constant $M$ is a sufficiently large number such that $ M\ge \max \{a_\pi^T K, b_\pi\}$, where $K \in \mathbb{R}^n_+ $ is the upper bound for the state values, $x_k \preceq K, k =0,1,\cdots,h^\varphi$. 
In practice, $M$ is chosen sufficiently large such that the constraint $x \preceq K$ is never active. Note that the largest value of $\rho$ for which $z^\pi=1$ is $b_\pi-a^T_\pi x$, which is equal to the robustness of $\pi$. 

Now we encode the truth table relations. For instance, we desire to capture  $1 \wedge 0=0$ and $1 \vee 0=1$ using mixed-integer linear equations. Disjunction and conjunction connectives are encoded as the following  constraints:
\begin{subequations}
\label{eq:connectives}
\begin{equation}
\label{eq:conjunction}
z= \bigwedge_{i=1}^{n_z} z_i ~ \Rightarrow ~  z \le z_i, i=1,\cdots,n_z,
\end{equation}
\begin{equation}
\label{eq:disjunction}
z= \bigvee_{i=1}^{n_z} z_i ~ \Rightarrow ~  z \le  \sum_{i=1}^{n_z} z_i,
\end{equation}
\end{subequations}
where $z \in [0,1]$ is declared as a continuous variables. However, it only can take binary values as \user{evident from \eqref{eq:connectives}}. 
Similarly, define $z^\varphi_k \in [0,1]$ as the variable indicating whether $\bolds{x}[k] \models \varphi$. An STL formula is recursively translated as:

\begin{equation}
\label{eq:encoding}
\begin{array}{rl} 
\varphi= \bigwedge_{i=1}^{n_\varphi} \varphi_i  \Rightarrow &  z^\varphi_k= \bigwedge_{i=1}^{n_\varphi} z^{\varphi_i}_k, \\
\varphi= \bigvee_{i=1}^{n_\varphi} \varphi_i  \Rightarrow &  z^\varphi_k= \bigvee_{i=1}^{n_\varphi} z^{\varphi_i}_k, \\
\varphi = \bolds{G}_I \psi  \Rightarrow &  z^\varphi_k = \bigwedge_{k^\prime \in I} z^\psi_{k^\prime},\\
\varphi = \bolds{F}_I \psi  \Rightarrow  & z^\varphi_k = \bigvee_{k^\prime \in I} z^\psi_{k^\prime}, \\
\varphi = \psi_1 {\bf U}_I \psi_2  \Rightarrow &  z^\varphi_k =  \bigvee_{k^\prime \in I} \left ( z^{\psi_2}_{k'} \wedge \bigwedge_{k'' \in [k,k^\prime]} z^{\psi_1}_{k''} \right).
\end{array}
\end{equation}
Finally, we add the following constraints:
\begin{equation}
\label{eq:z_constraint}
z^\varphi_0=1,~ \rho \ge 0.
\end{equation} 
\begin{proposition}
\label{prop:feasibility_constraints}
The set of constraints in \eqref{eq:z_predicate},\eqref{eq:connectives},\eqref{eq:encoding},\eqref{eq:z_constraint} has the following properties: 
\begin{itemize}
\item[i)] we have $\bolds{x}[0] \models \varphi$ if the set of constraints is feasible;
\item[ii)] we have $\bolds{x}[0] \not \models \varphi$ if the set of constraints is infeasible;
\item[iii)] the largest $\rho$ such that the set of constraints, while ``$\rho \ge 0$" is removed from \eqref{eq:z_constraint}, is feasible is equal to $\rho(\bolds{x},\varphi,0)$. 
\end{itemize} 
\end{proposition}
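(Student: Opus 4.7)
The plan is to prove all three claims simultaneously by structural induction on the formula $\varphi$, establishing for every subformula $\psi$ and every time $k \in \{0, 1, \ldots, h^\varphi\}$ the strengthened claim: the supremum of $\rho$ over all assignments that satisfy \eqref{eq:z_predicate}, \eqref{eq:connectives}, \eqref{eq:encoding} together with $z^\psi_k = 1$ (with the sign constraint on $\rho$ dropped) equals $\rho(\bolds{x}, \psi, k)$. Specialized to $\psi = \varphi$ and $k = 0$, the full system \eqref{eq:z_predicate}--\eqref{eq:z_constraint} is feasible iff this supremum is non-negative, which by the strengthened claim is equivalent to $\rho(\bolds{x}, \varphi, 0) \ge 0$, i.e., to $\bolds{x}[0] \models \varphi$; this yields (i) and (ii), and (iii) is the supremum statement itself.

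For the base case $\psi = \pi = (a_\pi^T x \le b_\pi)$, I would observe that setting $z^\pi_k = 1$ collapses \eqref{eq:z_predicate} to $a_\pi^T x_k + \rho \le b_\pi$, with the companion inequality rendered inactive by the choice of $M$; hence the largest feasible $\rho$ equals $b_\pi - a_\pi^T x_k = \rho(\bolds{x}, \pi, k)$. For the inductive step I would handle each operator in turn. For conjunction $\psi = \bigwedge_i \psi_i$, the constraint $z^\psi_k \le z^{\psi_i}_k$ from \eqref{eq:conjunction} together with $z^\psi_k = 1$ forces each $z^{\psi_i}_k = 1$, so the induction hypothesis caps $\rho$ above by $\min_i \rho(\bolds{x}, \psi_i, k) = \rho(\bolds{x}, \psi, k)$, a bound attained by patching together the inductively supplied witnesses. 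For disjunction $\psi = \bigvee_i \psi_i$, the constraint $z^\psi_k \le \sum_i z^{\psi_i}_k$ allows concentrating the unit mass on any single index $i^\star$ that maximizes $\rho(\bolds{x}, \psi_{i}, k)$, attaining $\rho(\bolds{x}, \psi, k)$, while any feasible assignment necessarily has some $z^{\psi_i}_k > 0$ and the induction hypothesis applied to that branch upper-bounds $\rho$. The timed operators $\bolds{G}_I$, $\bolds{F}_I$, $\bolds{U}_I$ are then handled via \eqref{eq:encoding}, which rewrites them as (possibly nested) conjunctions and disjunctions over time-shifted subformulas.

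The hard part will be dealing with the fact that the intermediate $z$ variables for Boolean connectives are declared continuous on $[0, 1]$ rather than restricted to $\{0, 1\}$, so the one-sided disjunction encoding $z^\psi_k \le \sum_i z^{\psi_i}_k$ does not by itself force any integer witness among the disjuncts. I would therefore phrase the forward direction of the induction in terms of strict positivity of $z^\psi_k$ rather than the exact value $1$: positivity propagates monotonically down the syntax tree from the top-level constraint $z^\varphi_0 = 1$ and bottoms out at the genuinely binary predicate indicators $z^\pi_k \in \{0, 1\}$, where strict positivity coincides with value $1$ and hence with predicate satisfaction. The maximization statement is then obtained by choosing, at every disjunction or $\bolds{F}/\bolds{U}$ node, the branch that achieves the largest inductively computed robustness.
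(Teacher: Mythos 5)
Your proof is correct and follows essentially the same route as the paper's: the big-$M$ constraints at the predicates pin $\rho \le b_\pi - a_\pi^T x_k$ whenever $z^\pi_k=1$, conjunction and disjunction yield the $\min$ and $\max$ of the quantitative semantics, the temporal operators reduce to Boolean connectives over time via \eqref{eq:encoding}, and parts (i)--(ii) follow from the robustness characterization in (iii). Your explicit treatment of the continuously relaxed intermediate $z$ variables --- propagating strict positivity down the syntax tree to the genuinely binary predicate indicators instead of assuming integrality of every $z$ --- is in fact more careful than the paper's own argument, which asserts without justification that these variables ``only can take binary values'' and that a satisfied disjunction must have some disjunct with $z_i=1$.
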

\begin{proof}
i) We provide the proof for \eqref{eq:connectives}, as the case for more complex STL formulas are followed in a recursive manner from \eqref{eq:encoding}. If $z=1$, we have from \eqref{eq:conjunction} that $z_i=1, i=1,\cdots,n_z$, which correctly encodes conjunctions.  \revone{Similarly, $z=1$ in \eqref{eq:disjunction} indicates that not all $z_i,i=0,\cdots,n_z$ can be zero, or, $\exists i \in \{1,\cdots,n_z \}$ such that $z_i=1$, which correctly encodes disjunctions. ii) Infeasibility can be recursively traced back into \eqref{eq:connectives}. For both \eqref{eq:conjunction} and \eqref{eq:disjunction}, if $z=1$ is infeasible, it indicates that $z_i=0, i=1,\cdots,n_z$. } iii) We also prove this statement for \eqref{eq:connectives} as it is the base of recursion for general STL formulas. Let $z_i=(a^T_{\pi_i} x + \rho \le b_{\pi_i}), i=1,\cdots,n_z$. 
Consider \eqref{eq:conjunction} and the following optimization problem:
\begin{equation*}
\begin{array}{rl}
\rho^{\max}= \text{argmax} & \rho, \\
\text{s.t.} & a^T_{\pi_i} x + \rho \le b_{\pi_i}, i=1,\cdots,n_z, 
\end{array}
\end{equation*}
where its solution is $\underset{i=1,\cdots, n_z}\min(b_{\pi_i}-a^T_{\pi_i} x)$, which is identical to the quantitative semantics for conjunction (see \eqref{equ:quant}). Similarly, consider \eqref{eq:disjunction} and the following optimization problem:   
\begin{equation*}
\begin{array}{rl}
\rho^{\max}= \text{argmax} & \rho, \\
\text{s.t.} & \exists i \in \{1,\cdots,n_z\}, a^T_{\pi_i} x + \rho \le b_{\pi_i}, 
\end{array}
\end{equation*} 
where the solution is $\underset{i=1,\cdots, n_z}\max(b_{\pi_i}-a^T_{\pi_i} x)$, which is identical to the quantitative semantics for disjunction.    
\end{proof}

Our integer formulation for Boolean connectives slightly differs from the formulation in \cite{karaman}, \cite{raman2014model}, where lower bound constraints for the $z$'s are required. For example, for translating $z= \bigwedge_{i=1}^{n_z}z_i$, it is required to add $z\ge \sum_{i=1}^{n_z} z_i - n_z + 1$ to impose a lower bound for $z$. However, these additional constraints become necessary only when the negation operator is present in the STL formula. Hence, they are removed in our formulation. This reduces the constraint redundancy and degeneracy of the problem. By doing so, we observed computation speed gains (up to reducing the computation time by 50\%) in our case studies. Moreover, we encode quantitative semantics in a different way than \cite{raman2014model}, where a separate STL robustness-based encoding is developed which introduces additional integers. 
\user{Due to property ``iii" in Proposition \ref{prop:feasibility_constraints}, our encoding does not require additional integers to capture robustness hence it is computationally more efficient.}

\begin{define}
System \eqref{eq:system} is in MLD form \cite{Bemporad1999} if written as:
\begin{subequations}
\label{eq:mld}
\begin{equation}
x_{t+1} = A x_t + B_u u_t + B_w w_t + D_\delta \delta_t + D_r r_t,
\end{equation}
\begin{equation}
E_\delta \delta_t + E_r r_t \preceq E_x x_t  +E_u u_t + E_w w_t + e,
\label{eq_nonlinear}
\end{equation}
\end{subequations}
where $\delta_t \in \{0,1\}^{n_\delta}$ and $r_t \in \mathbb{R}^{n_r}$ are auxiliary variables and $A,B_u,B_w,D_\delta, D_r, E_\delta, E_r, E_x, E_u, E_w, e$ are appropriately defined constant matrices such that \eqref{eq:mld} is well-posed in the sense that given $x_t,u_t,w_t$, the feasible set for $x_{t+1}$ is a single point  equal to $f(x_t,u_t,w_t)$. \revthree{Introducing auxiliary variables and enforcing \eqref{eq_nonlinear} can capture nonlinear $f$ \cite{Bemporad1999}}.  
\end{define}
The system equations are brought into mixed-integer linear constraints by transforming system \eqref{eq:system} into its MLD form. 
As mentioned earlier, any piecewise affine system can be transformed into an MLD. In the case studies of this paper, the construction of \eqref{eq:mld} from a piecewise affine \eqref{eq:system} is not explained as the procedure is well documented in \cite{heemels2001equivalence}.  

Finally, the set of constraints in Theorem \ref{theorem:iff} can be cast as:
\begin{equation}
\label{equ:bounded_cosntarints}
\left \{ \begin{array}{ll}
x_0^{ol,x_0}=x_0, & \text{Initial condition;} \\
x_{k+1}^{ol,x_0}=f^*(x_{k}^{ol,x_0},u_k^{ol,x_0}), & \text{System constraints;}\\
z^{\pi}_k = (a_\pi^T x_k^{ol,x_0} \le b_\pi), & \text{Predicates; } \\
z^\varphi_0=1, \rho \ge 0, & \text{STL satisfaction.} \\
\end{array}
\right.
\end{equation}

Checking the satisfaction of the set of  constraints in \eqref{equ:bounded_cosntarints} can be formulated as a MILP feasibility problem, which is handled using powerful off-the-shelf solvers. For a fixed initial condition $x_0$, the feasibility of the MILP indicates whether $x_0 \in \mathcal{X}_o^{\max}$.  An explicit representation of $\mathcal{X}_o^{\max}$ requires variable elimination from \eqref{equ:bounded_cosntarints}, which is computationally intractable for a large MILP. 
Alternatively, we can set $x_0$ as a free variable while maximizing a cost function (e.g. norm of $x_0$) such that a large $L(x_0)$ is obtained. Another natural candidate is maximizing $\rho(\bolds{x}^{ol,x_0},\varphi,0)$. It is worth to note that by finding a set of distinct initial conditions and taking the union of all $L(x_0)$, we are able to find a representation for an under-approximation of $\mathcal{X}_o^{\max}$. 

MILPs are NP-complete. The complexity of solving \eqref{equ:bounded_cosntarints} grows exponentially with respect to the number of binary variables and polynomially with respect to the number of continuos variables. The number of binary variables in our framework is $\mathcal{O} \left ( h^\varphi({n_\pi}+m_b+q_b+n_\delta) \right)$ - \revone{$n_\pi$ is the number of predicates} - and the number of continuous variables is $\mathcal{O} \left( h^\varphi(n+m_r+q_r+n_r) \right)$.  
In other words, the exponential growth builds upon the intricacy of the specification and the number of modes demonstrated by the hybrid nature of the system. However, the complexity is polynomial with respect to the dimension of the state.      

\begin{example}
\label{example:NF-STL}
Consider the following switched system:
\begin{equation*}
x^+=e^{A_u\tau}x + A_u^{-1}(I-e^{-A_u\tau})w,
\end{equation*}
where $x=(x_{[1]},x_{[2]})^T \in \mathbb{R}_+^2$, $u\in \mathcal{U}$ is the control input (switch), $\mathcal{U}=\{1,2\}$, and
\begin{equation*}
A_1=\left ( \begin{array}{cc} 1 & 1 \\ 1 & -5 \end{array} \right )
, A_2= \left ( \begin{array}{cc} -8 & 1 \\ 1 & 2 \end{array} \right ).
\end{equation*}
The (additive) disturbance $w$ is bounded to $L(w^*)$, where $w^*=(1.5,1)^T$ and $\tau=0.1$. This system is the discrete-time version of $\dot{x}=A_u x +w$ with sample time $\tau$. Both matrices are {Metzler} (all off-diagonal terms are non-negative hence all the elements of its exponential are positive)  and non-Hurwitz hence constant input results in unbounded trajectories.
The system is desired to satisfy the  following STL formula:
\begin{equation*}
\varphi= \bigvee_{T=0}^{10} \left(\bolds{F}_{[0,T]} p_1 \wedge \bolds{F}_{\{T\}} p_2\right),
\end{equation*}
where $p_1= \left((x_{[1]} \le 1) \wedge (x_{[2]} \le 5)\right)$ and $p_2= \left((x_{[1]} \le 5) \wedge (x_{[2]} \le 1)\right)$. In plain English, $\varphi$ states that ``within 10 time units, the trajectory visits the box characterized by $p_1$ first and then the box corresponding to $p_2$" (see Fig. \ref{fig:switch_NF_STL}).
We transformed this system into its MLD form \eqref{eq:mld}. We formulated the constraints in \eqref{equ:bounded_cosntarints}  as a MILP and set the cost function to maximize $\left\| x_0 \right\|_\infty$ and used the Gurobi \footnote{\texttt{www.gurobi.com}} MILP solver. The solution was obtained in less than 0.05 seconds on a 3GHz Dual Core MacBook Pro. 
We obtained $x_0=(2.82~2.82)^T$ and the following open-loop control sequence: $1~2~1~2~2~1~1~1~1~1$.   By applying this control sequence, we sampled a trajectory of the original system  $f$ with values of $w$ drawn from a uniform distribution over $L(w^*)$. Both the trajectories of $f$ and $f^*$ satisfy the specification. The results are shown in Fig. \ref{fig:switch_NF_STL}. 
\begin{figure}[t]
\centering
\includegraphics[width=0.23\textwidth]{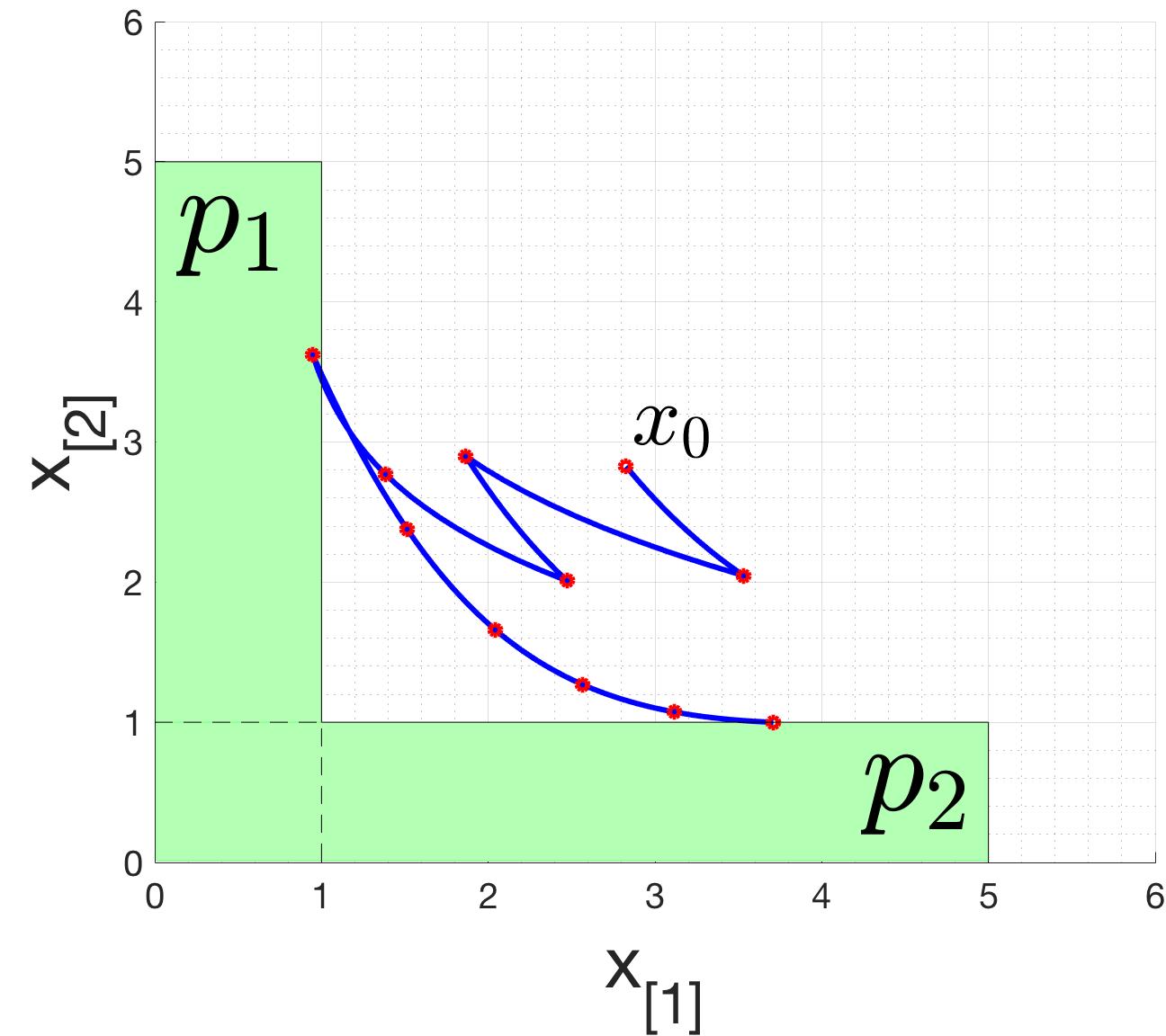}
\includegraphics[width=0.23\textwidth]{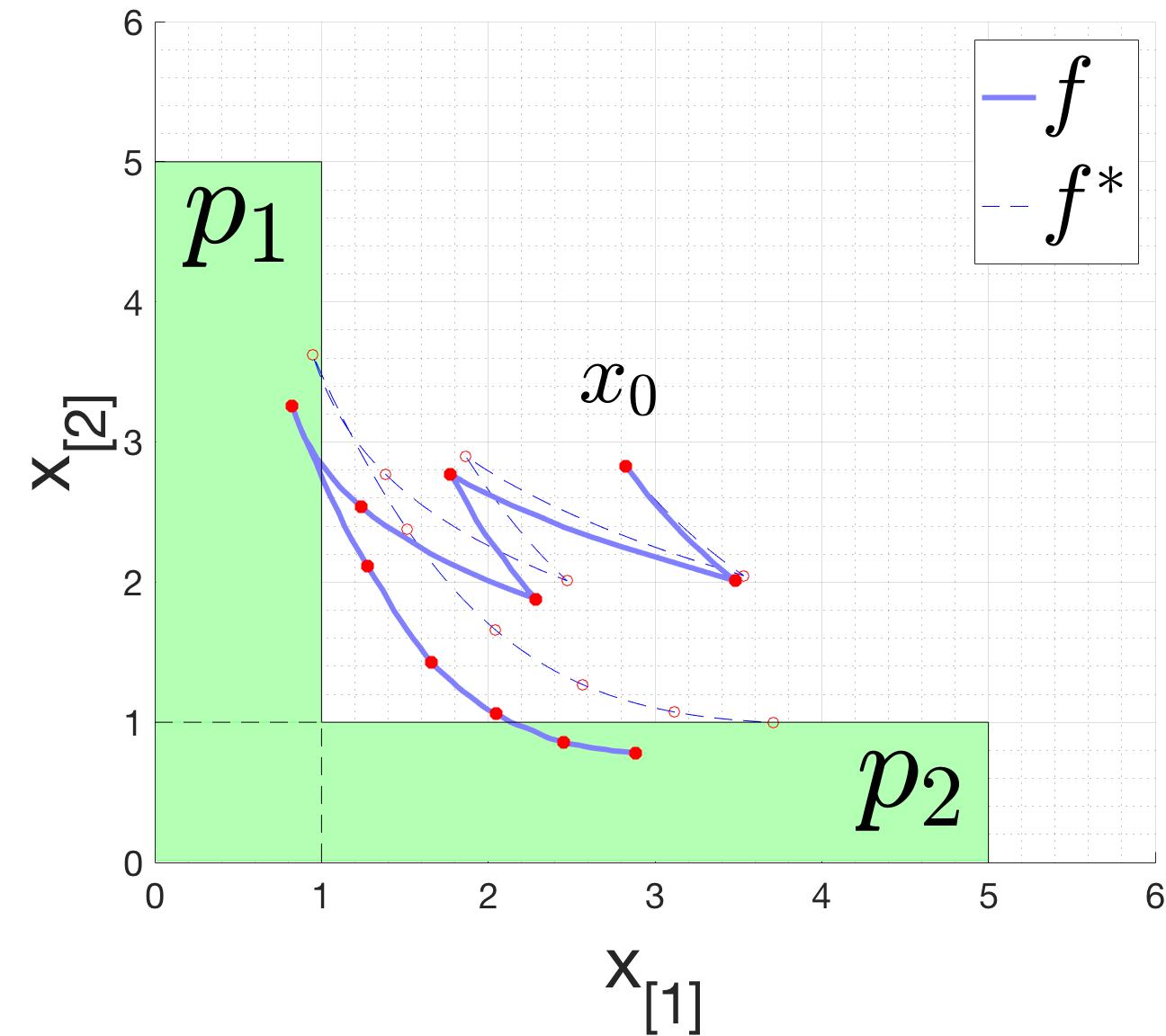}
\caption{Example \ref{example:NF-STL}: [Left] The trajectory of the maximal system $f^*$ which satisfies the specification.  [Right] For the same controls, the trajectory of the original system $f$ with $w$ drawn from an uniform distribution over $L(w^*)$.  }
\label{fig:switch_NF_STL}
\end{figure}
            
\end{example}

\section{Infinite Horizon Semantics}
\label{sec:infinite}

In this section, we provide a solution to Problem \ref{prob:global}. We show that the infinite-time property in \eqref{eq:global} can be guaranteed using repetitive control sequences. First, we consider global specifications and extend the results from our previous work \cite{sadraddini2016safety} in Sec. \ref{sec:s_global}.
\revtwo{Next, we show how to find controls for bounded-global STL formulas (Problem \ref{prob:global}) in Sec. \ref{sec:phi}. Solution completeness is discussed in Sec. \ref{sec:necessary}.}

\user{
\subsection{Global formulas: s-sequences and \userfinal{inductive} invariance}
\label{sec:s_global}

Consider the global specification $\bolds{G}_{[0,\infty]} \varphi$, where $\varphi$ is a bounded formula. We introduce some additional notation. 

\revone{
\begin{define}
Given a bounded STL formula $\varphi$ over predicates in the form \eqref{eq:predicate}, the \emph{language realization set} (LRS) \cite{sadraddini2016feasibility} is:
\begin{equation}
\mathcal{L}^\varphi:=\left \{x_0x_1\cdots x_{h^\varphi} \in \mathcal{X}^{h^\varphi} \big | x_0x_1\cdots x_{h^\varphi} \models \varphi        \right \}.
\end{equation}
\end{define}
}
\begin{proposition}
The set $\mathcal{L}^\varphi$ is a lower-set.
\end{proposition}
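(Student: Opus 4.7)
The plan is to reduce the claim directly to Lemma \ref{lemma:finite_run}. Recall that the partial order $\preceq$ has already been extended to equal-length real signals componentwise, and the \emph{lower-set} notion extends naturally from $\mathbb{R}_+^n$ to the product space $\mathcal{X}^{h^\varphi+1}$: a set $\mathcal{S}$ of length-$(h^\varphi{+}1)$ sequences is a lower-set if for every $\bolds{r} \in \mathcal{S}$ and every $\bolds{r}' = x'_0 x'_1 \cdots x'_{h^\varphi}$ with $\bolds{r}' \preceq \bolds{r}$ (i.e.\ $x'_k \preceq x_k$ for $k=0,\ldots,h^\varphi$), one has $\bolds{r}' \in \mathcal{S}$.

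With this understanding, the argument is essentially one line. Pick any $\bolds{r} = x_0 x_1 \cdots x_{h^\varphi} \in \mathcal{L}^\varphi$, so by definition $\bolds{r} \models \varphi$. For an arbitrary $\bolds{r}'$ with $\bolds{r}' \preceq \bolds{r}$, apply Lemma \ref{lemma:finite_run} with $t = t' = 0$: the hypothesis $\bolds{r}'[0:h^\varphi] \preceq \bolds{r}[0:h^\varphi]$ is precisely the assumed componentwise ordering, and the conclusion $\bolds{r}'[0] \models \varphi$ yields $\bolds{r}' \in \mathcal{L}^\varphi$. Hence $\mathcal{L}^\varphi$ is closed under taking componentwise-smaller sequences.

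There is essentially no obstacle here, since all of the substantive work (the induction on the STL syntax that shows satisfaction is preserved when states are pointwise decreased) was already done in Lemma \ref{lemma:finite_run}. The only thing to be careful about is to note explicitly that the definition of ``lower-set'' tacitly extends from $\mathbb{R}_+^n$ to the product space $\mathcal{X}^{h^\varphi+1}$ via the componentwise extension of $\preceq$ introduced earlier, so that the statement of the proposition is well-posed and the application of Lemma \ref{lemma:finite_run} is valid.
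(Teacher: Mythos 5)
Your proof is correct and follows exactly the same route as the paper's: both reduce the claim to Lemma \ref{lemma:finite_run} applied at $t=t'=0$, concluding that any componentwise-smaller sequence still satisfies $\varphi$ and hence remains in $\mathcal{L}^\varphi$. Your explicit remark that the lower-set notion must be read in the product space $\mathcal{X}^{h^\varphi+1}$ is a useful clarification that the paper leaves implicit.
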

\begin{IEEEproof}
For all $x_0x_1\cdots x_{h^\varphi} \in \mathcal{L}^\varphi$ and $x'_0x'_1\cdots x'_{h^\varphi} \preceq x_0x_1\cdots x_{h^\varphi}$, it follows from Lemma \ref{lemma:finite_run} that $x'_0x'_1\cdots x'_{h^\varphi} \models \varphi$. Thus, $x'_0x'_1\cdots x'_{h^\varphi} \in LRS(\varphi)$ hence $LRS(\varphi)$ is a lower set. 
\end{IEEEproof}

It follows from the semantics of global operator in \eqref{equ:semantics} that $\bolds{x} \models \bolds{G}_{[0,\infty]} \varphi$ is equivalent to $\bolds{x}[t:t+h^\varphi] \in \mathcal{L}^\varphi, \forall t \in \mathbb{N}.$ 

\begin{define} 
A set $\userfinal{\Omega_{\mathcal{L}^\varphi}} \subseteq \mathcal{L}^\varphi$ is a \emph{robust control invariant} (RCI) set if: 
\begin{equation}
\begin{array}{l}
\forall~ x_0x_1\cdots x_{h^\varphi} \in \Omega_{\mathcal{L}^\varphi}, \exists u \in \mathcal{U}, \text{ s.t. } \\ x_1 x_2 \cdots x_{h^\varphi}  f(x_{h^\varphi},u,w) ~\in \Omega_{\mathcal{L}^\varphi}, \forall w \in \mathcal{W}.
\end{array}
\end{equation}
\end{define}
Satisfaction of $\bolds{G}_{[0,\infty]} \varphi$ is accomplished by finding a RCI set in $\mathcal{L}^\varphi$. 
Note that unlike traditional definitions of RCI sets (e.g., \cite{Blanchini:1999aa}), where the set is defined in the state-space $\mathcal{X}$, our RCI set is defined in an augmented form of the state-space $\mathcal{X}^{h^\varphi}$. The language realization set can also be interpreted as the ``safe" set in $(h^\varphi+1)$-length trajectory space. } 
The \emph{maximal} RCI set inside $\mathcal{L}^\varphi$ provides a complete solution to the set-invariance problem. The computation of maximal RCI set requires implementing an iterative fixed-point algorithm which is computationally intensive for MLD systems and non-convex sets (see \cite{kerrigan2001,rakovic2004computation} for discussion). We use monotonicity to provide an alternative approach. The following result is a more general version of the one in \cite{sadraddini2016safety}.

\begin{theorem}
\label{theorem_repetitive}
Given a bounded formula $\varphi$, if there exists $\bolds{x}^s[0:h^\varphi] \in \mathcal{L}^\varphi$, and a sequence of controls: $u^{s}_0,\cdots,u^s_{T-1}$ - where $T$ is a positive integer determining the length of the sequence - such that:
\begin{enumerate}
\item $\bolds{x}^s[k:k+h^\varphi] \in \mathcal{L}^\varphi, k=0,1,\cdots,T$, where $x^s_{h^\varphi+k+1}=f^*(x^s_{h^\varphi+k},u^s_k)$,
\item $\bolds{x}^s[T:T+h^\varphi] \preceq \bolds{x}^s[0:h^\varphi]$,  
\end{enumerate}
then the following set is a RCI set in $\mathcal{L}^\varphi$:
\begin{equation}
\label{eq:robust_set_phi_omega}
\Omega_{\mathcal{L}^\varphi}:= \bigcup_{k=0}^{T-1} L(\bolds{x}^{s}[k:k+h^\varphi]).
\end{equation}
\end{theorem}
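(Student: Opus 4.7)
The plan is to verify the two defining properties of an RCI set for $\Omega_{\mathcal{L}^\varphi}$: (a) containment in $\mathcal{L}^\varphi$, and (b) existence of a control that keeps one augmented-state step inside $\Omega_{\mathcal{L}^\varphi}$ against every admissible disturbance.

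First, I would establish (a). Condition 1 gives $\bolds{x}^s[k:k+h^\varphi] \in \mathcal{L}^\varphi$ for every $k = 0,1,\ldots,T-1$. Since the preceding proposition shows $\mathcal{L}^\varphi$ is a lower-set, each $L(\bolds{x}^s[k:k+h^\varphi])$ is contained in $\mathcal{L}^\varphi$, and so is their union. This is the easy half.

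For (b), I would pick an arbitrary augmented state $x_0 x_1 \cdots x_{h^\varphi} \in \Omega_{\mathcal{L}^\varphi}$; by definition it sits in some $L(\bolds{x}^s[k:k+h^\varphi])$ with $k \in \{0,\ldots,T-1\}$, which component-wise says $x_j \preceq x^s_{k+j}$ for $j=0,\ldots,h^\varphi$. The control to apply is $u^s_k$. For any disturbance $w \in \mathcal{W}$, Assumption \ref{assume:w} gives $f(x_{h^\varphi},u^s_k,w) \preceq f^*(x_{h^\varphi},u^s_k)$, and monotonicity together with $x_{h^\varphi} \preceq x^s_{k+h^\varphi}$ yields $f^*(x_{h^\varphi},u^s_k) \preceq f^*(x^s_{k+h^\varphi},u^s_k) = x^s_{k+h^\varphi+1}$. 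Combined with the inherited inequalities on the first $h^\varphi$ entries, the successor augmented state satisfies
\begin{equation*}
x_1 \cdots x_{h^\varphi} f(x_{h^\varphi},u^s_k,w) \;\preceq\; \bolds{x}^s[k+1:k+1+h^\varphi].
\end{equation*}

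The only remaining subtlety is the wrap-around at $k = T-1$, where the right-hand side above is $\bolds{x}^s[T:T+h^\varphi]$, which does not appear among the terms of the union. Here I would invoke condition 2, which gives $\bolds{x}^s[T:T+h^\varphi] \preceq \bolds{x}^s[0:h^\varphi]$, and transitivity of $\preceq$ places the successor in $L(\bolds{x}^s[0:h^\varphi]) \subseteq \Omega_{\mathcal{L}^\varphi}$. For $k < T-1$ the successor lies in $L(\bolds{x}^s[k+1:k+1+h^\varphi]) \subseteq \Omega_{\mathcal{L}^\varphi}$ directly. In both cases we have exhibited a single control $u = u^s_k$ working for all $w \in \mathcal{W}$, which is precisely the RCI condition. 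I expect no serious obstacle; the only care needed is to track carefully that monotonicity (Definition \ref{define:cooperative}) is applied state-wise and that the maximal-system inequality is the right vehicle for handling the universal quantifier over $w$.
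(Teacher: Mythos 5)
Your proposal is correct and follows essentially the same route as the paper's proof: pick an arbitrary element of the union, identify its index $k$, apply $u^s_k$, and combine Assumption \ref{assume:w} with monotonicity to dominate the successor by $\bolds{x}^s[k+1:k+1+h^\varphi]$, using condition 2 to close the loop at $k=T-1$. Your treatment is in fact slightly more explicit than the paper's on the two points that deserve care --- the containment $\Omega_{\mathcal{L}^\varphi} \subseteq \mathcal{L}^\varphi$ via the lower-set property, and the wrap-around step --- both of which the paper handles only implicitly.
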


\begin{IEEEproof}
For any $x'_0x'_1\cdots x'_{h^\varphi} \in \userfinal{\Omega_{\mathcal{L}^\varphi}}$, there exists $i \in \{0,1\cdots,T-1\}$ such that $x'_0x'_1\cdots x'_{h^\varphi} \in L(\bolds{x}^{s}[i:i+h^\varphi])$. 
On one hand, we have $x_{i+1}^{s} \cdots x_{i+h^\varphi}^{s}f^*(x_{i+h^\varphi}^{s},u_{i}^{s}) \in \Omega_{\mathcal{L}^\varphi}$. On the other hand, we have $x'_1 \preceq x_{i+1}^{s},\cdots, x'_{h^\varphi} \preceq x_{i+h^\varphi}^{s}$. By applying $u_{i}^{s}$, monotonicity implies 
\begin{equation*}
\begin{array}{ll}
&
f(x'_{h^\varphi},u_{i}^{s},w) \preceq f^*(x_{i+h^\varphi}^{s},u_{i}^{s}) = x_{i+1+h^\varphi}^{s}, \forall w \in \mathcal{W} \\
 \Rightarrow &  x'_1 x'_2 \cdots x'_{h^\varphi}  f(x'_{h^\varphi},u_{i}^{s},w) \in
\\ 
& L\left(x_{i+1}^{s} x_{i+2}^{s} \cdots x_{i+1+h^\varphi}^{s})\right),\forall w \in \mathcal{W}.
\end{array}
\end{equation*}
And the proof is complete from the fact that $x_{i+1}^{s} \cdots x_{i+1+h^\varphi}^{s} \in \Omega_{\mathcal{L}^\varphi}$ for all $i \in \{0,1\cdots,T-1\}$.  
\end{IEEEproof}
\begin{corollary}
\label{corollary_repetitive}
Let the conditions in Theorem \ref{theorem_repetitive} hold and $\bolds{x}[t_0:t_0+h^\varphi] \in L(\bolds{x}^s[0:h^\varphi])$ for some $t_0 \in \mathbb{N}$. Consider the following control sequence starting from time $t_0+h^\varphi$:
\begin{equation}
\label{eq_s}
\bolds{u}^s:=\left(u^s_0 u^s_1 \cdots u^s_{T-1}\right)^\omega,
\end{equation}
i.e., $u^s_t=u^s_{\rem(t-t_0-h^\varphi,T)}, t\ge t_0+h^\varphi$. Let $x_{k+1}=f^*(x_k,u_k), k=t_0+h^\varphi,\cdots$. Then we have $\bolds{x}[t:t+h^\varphi] \in \Omega_{\mathcal{L}^\varphi}, \forall t\ge t_0$.
\end{corollary}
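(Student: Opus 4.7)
The plan is to establish the stronger invariant
$$\bolds{x}[t:t+h^\varphi] \in L\bigl(\bolds{x}^s[r_t:r_t+h^\varphi]\bigr), \qquad r_t := \rem(t-t_0,\,T),$$
by induction on $t \geq t_0$. The desired conclusion $\bolds{x}[t:t+h^\varphi] \in \Omega_{\mathcal{L}^\varphi}$ then follows immediately from the definition \eqref{eq:robust_set_phi_omega}, since $r_t \in \{0,1,\ldots,T-1\}$. The base case $t = t_0$ is trivial: $r_{t_0} = 0$ and the claim is exactly the hypothesis $\bolds{x}[t_0:t_0+h^\varphi] \in L(\bolds{x}^s[0:h^\varphi])$.

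For the inductive step, I would assume $x_{t+k} \preceq x^s_{r_t + k}$ for $k = 0,1,\ldots,h^\varphi$. By the definition of the periodic sequence $\bolds{u}^s$ in \eqref{eq_s}, the control applied at time $t+h^\varphi$ is $u^s_{\rem(t-t_0,T)} = u^s_{r_t}$, and therefore monotonicity of $f^*$ gives
$$x_{t+h^\varphi+1} \;=\; f^*(x_{t+h^\varphi},u^s_{r_t}) \;\preceq\; f^*(x^s_{r_t+h^\varphi},u^s_{r_t}) \;=\; x^s_{r_t+h^\varphi+1}.$$
Combined with the shifted componentwise inequalities inherited from the inductive hypothesis, this yields $\bolds{x}[t+1:t+1+h^\varphi] \in L(\bolds{x}^s[r_t+1:r_t+1+h^\varphi])$.

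It remains only to reconcile this membership with $r_{t+1}$. When $r_t < T-1$ we have $r_{t+1} = r_t+1$ and the invariant is preserved verbatim. When $r_t = T-1$, so that $r_{t+1} = 0$, I would invoke condition (2) of Theorem \ref{theorem_repetitive}, namely $\bolds{x}^s[T:T+h^\varphi] \preceq \bolds{x}^s[0:h^\varphi]$, which entails $L(\bolds{x}^s[T:T+h^\varphi]) \subseteq L(\bolds{x}^s[0:h^\varphi])$ and thus restores the invariant at the start of the next cycle. The only real subtlety in the whole argument is this modular wrap-around across $t = t_0 + T$; condition (2) was precisely imposed to handle it, so no estimate beyond monotonicity of $f^*$ and that inequality is needed.
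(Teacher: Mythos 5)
Your proof is correct and follows essentially the same route as the paper's: the same strengthened inductive invariant $\bolds{x}[t:t+h^\varphi] \in L(\bolds{x}^s[\rem(t-t_0,T):\rem(t-t_0,T)+h^\varphi])$, the same single monotonicity step for the newly generated state, and the same use of condition (2) of Theorem \ref{theorem_repetitive} at the modular wrap-around. If anything, your explicit treatment of the wrap-around via the set inclusion $L(\bolds{x}^s[T:T+h^\varphi]) \subseteq L(\bolds{x}^s[0:h^\varphi])$ is slightly more careful than the paper's, which applies the wrap inequality only to the last component and leaves the earlier ones implicit.
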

\begin{proof}
We prove by induction that $\bolds{x}[t:t+h^\varphi] \in L(\bolds{x}^s[\rem(t-t_0,T):\rem(t-t_0\userfinal{,T})+h^\varphi]), \forall t\ge t_0$. The base case for $t=t_0$ is true. In order to prove the inductive step $\bolds{x}[t+1:t+1+h^\varphi] \in L(\bolds{x}^s[\rem(t+1-t_0,T):\rem(t+1-t_0,T)+h^\varphi])$, we need to prove that $x_{t+k+1} \preceq x^s_{\rem(t+1-t_0,T)+k}, k=0,\cdots,h^\varphi$, for which we need to only prove the case for $k=h^\varphi$ as previous inequalities are already assumed by induction.   
We show $x_{t+h^\varphi+1} \preceq x^s_{\rem(t+1-t_0,T)+h^\varphi}$ through monotonicity and the induction assumption that $x_{t+h^\varphi} \preceq x^s_{\rem(t-t_0,T)+h^\varphi}$:
\begin{equation*}
\begin{array}{ll}
x_{t+h^\varphi+1}& =f^*( x_{t+h^\varphi},u^s_{\rem(t-t_0,T)} )
\\
 & \preceq f^*( x^s_{\rem(t-t_0,T)+h^\varphi} ,u^s_{\rem(t-t_0,T)}) 
 \\
 & = x^s_{\rem(t-t_0,T)+1+h^\varphi} \preceq x^s_{\rem(t+1-t_0,T)+h^\varphi}. 
\end{array}
\end{equation*}
Note that $x^s_{T+h^\varphi} \preceq x^s_{h^\varphi}$. The ``$\preceq$" in the last line can be replaced by ``$=$" when $\rem(t-t_0,T)+1 \neq T$. 
\end{proof}

We refer to the repetitive sequence of controls in \eqref{eq_s} as an \emph{s-sequence}. An s-sequence is an invariance inducing open-loop control policy.   
Once the latest $h^\varphi+1$-length of system state are brought into $\Omega_{\mathcal{L}^\varphi}$, an s-sequence keeps the $h^\varphi+1$-length trajectory of the system in $\Omega_{\mathcal{L}^\varphi}$ for all subsequent times.

The computation of an s-sequence requires solving an MILP for $\bolds{x}^{s}[h^\varphi:T] \models \bolds{G}_{[0,T]} \varphi$ (an instance of Problem \ref{prob:bounded}) with an additional set of constraints in  $\bolds{x}^{s}[0:h^{\varphi}] \models \varphi$ (again, an instance of Problem \ref{prob:bounded}, but without the dynamical constraints. In other words, $\bolds{x}^{s}[0:h^{\varphi}]$ does not need to be a trajectory of the maximal system), and $\bolds{x}^{s}[T:T+h^{\varphi}] \preceq \bolds{x}^{s}[0:h^{\varphi}]$ (linear constraints). We are usually interested in the {shortest} s-sequence since its computation requires the smallest MILP. Algorithmically, we start from $T=1$ and implement $T\gets T+1$ until the MILP formulating the conditions in Theorem \ref{theorem_repetitive} becomes feasible and an s-sequence is found. As it will be implied from results in Sec. \ref{sec:necessary}, existence of an s-sequence is almost necessary for existence of a RCI set.

\begin{example}
\label{example:s}
Consider the system in Example \ref{example:NF-STL}. We wish to keep the trajectory in the set characterized by $p_1 \vee p_2$, i.e., $\mathcal{S}=L\left((1,5)^T) \right) \cup L\left((5,1)^T) \right)$. Note that this set is non-convex. We set the cost function to maximize $\left\|x_0\right\|_1$. The shortest s-sequence has $T=5$ and is: $(2~1~2~1~1)^\omega$. The resulting trajectory satisfying the definition of s-sequence is shown in Fig. \ref{fig:s} (a). The corresponding robust control invariant set is shown in Fig. \ref{fig:s}. (b) (cyan region), which is characterized by the $x^s_0,x^s_1,\cdots,x^s_4$ (red dots) that lie inside $\mathcal{S}$ (green region). Note that the $[0,2]\times[0,2]$ portion of the coordinates in Fig. \ref{fig:switch_NF_STL} is shown here for a clearer representation of the details. 
\end{example}

\subsection{Bounded-global specifications: $\phi$-sequences}
\label{sec:phi}
Now we consider general bounded-global formulas - as in Problem \ref{prob:global} - and generalize the paradigm used for s-sequences. We provide the key result of this section.

\begin{figure}[t]
\centering
\includegraphics[width=0.23\textwidth]{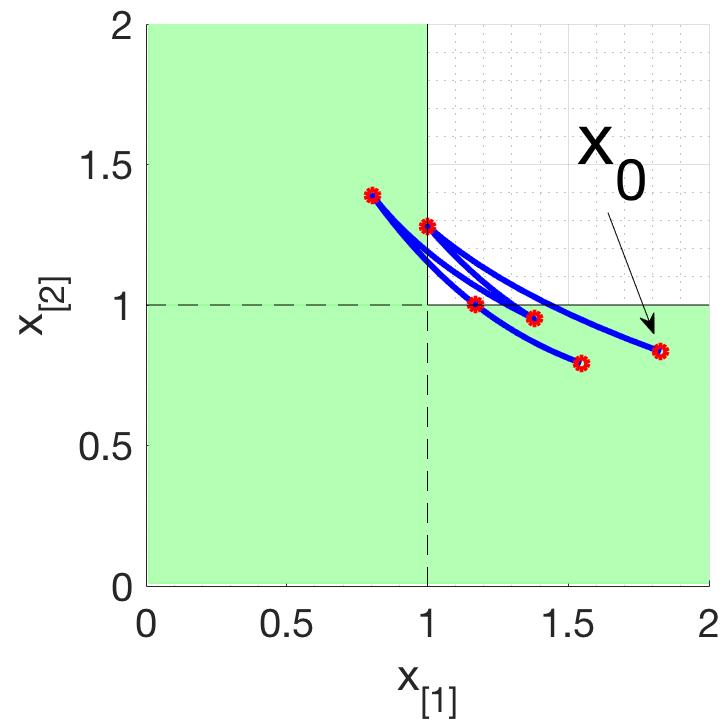}
\includegraphics[width=0.23\textwidth]{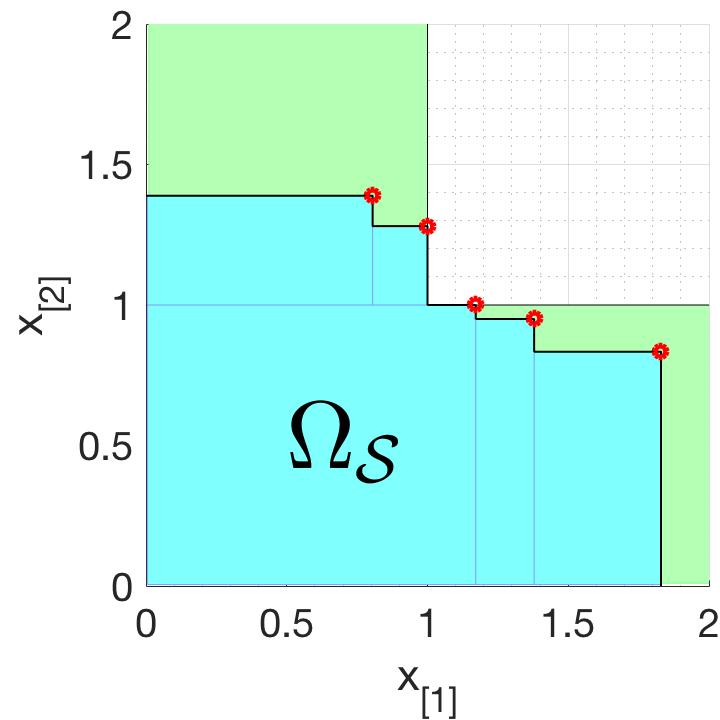}
\caption{Example \ref{example:s}: [Left] The trajectory satisfying the conditions of s-sequences. [Right] The corresponding robust control invariant set inside $\mathcal{S}$. }
\label{fig:s}
\end{figure}

\begin{theorem}
\label{theorem:phi-sequence}
Given a bounded-global STL formula $\phi=\varphi_b \wedge \bolds{G}_{[\Delta,\infty]} \varphi_g$, an initial condition $x_0$, a control sequence $u_0^{\phi} \cdots u_{\Delta+T+h^{\varphi_g}-1}^{\phi}$, where $T$ is a positive integer, and a non-negative integer $T_0 < T$, let the following conditions hold:
\begin{enumerate}
\item $\bolds{x}^{\phi}[0:\Delta+T+h^{\varphi_g}] \models \varphi_b \wedge \bolds{G}_{[\Delta,\Delta+T]} \varphi_g$,
\item $\bolds{x}^{\phi}[\Delta+T:\Delta+T+h^{\varphi_g}] \preceq \bolds{x}^{\phi}[\Delta+T_0:\Delta+T_0+h^{\varphi_g}]$; 
\end{enumerate}
where $x_{k+1}^{\phi}=f^*(x_k^{\phi},u_k^{\phi}), k \in [0,\Delta+T+h^\varphi_g-1]$, $x_0^{\phi}=x_0$. Let $\mu^{ol}$ be the open-loop  control policy  corresponding  to the following control sequence:
\begin{equation}
\label{eq:phi}
\bolds{u}^{\phi}:=u_{0}^{\phi}\cdots u_{\Delta+T_0+h^{\varphi_g}-1}^{\phi}\left(u_{\Delta+T_0+h^{\varphi_g}}^{\phi}\cdots u_{\Delta+T+h^{\varphi_g}-1}^{\phi}\right)^\omega,
\end{equation}
Then $ 
\bolds{x}(x'_0,\mu^{ol},\bolds{w}) \models \phi, \forall \bolds{w} \in \mathcal{W}^\omega, \forall x'_0 \in L(x_0).$ Moreover, the following set is a RCI set in $\mathcal{L}^{\varphi_g}$:
\begin{equation}
\label{eq:robust_set_phi}
\Omega_{\mathcal{L}^{\varphi_g}}:= \bigcup_{i=0}^{T-T_0-1} L(\bolds{x}^{\phi}[\Delta+T_0+i:\Delta+T_0+h^{\varphi_g}+i]).
\end{equation}
\end{theorem}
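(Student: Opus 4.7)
My plan is to mirror the argument used for s-sequences in Theorem~\ref{theorem_repetitive} and Corollary~\ref{corollary_repetitive}, while adding a bounded prefix that handles $\varphi_b$ and the initial portion of $\bolds{G}_{[\Delta,\infty]}\varphi_g$. First I would reduce to the maximal system. Fix $x'_0\in L(x_0)$ and $\bolds{w}\in\mathcal{W}^\omega$, and let $\bolds{y}$ denote the trajectory of $f^*$ from $x_0$ under $\bolds{u}^{\phi}$. Monotonicity, together with $x'_0\preceq x_0$ and Assumption~\ref{assume:w}, gives $\bolds{x}(x'_0,\mu^{ol},\bolds{w})[t]\preceq y_t$ by induction on $t$. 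Since every predicate defines a lower-set and $\phi$ is negation-free, Lemma~\ref{lemma:finite_run} reduces the problem to showing $\bolds{y}\models\phi$.

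On the prefix $[0,\Delta+T+h^{\varphi_g}]$ the rule \eqref{eq:phi} prescribes exactly $u^{\phi}_t$ at every step, so $\bolds{y}[0:\Delta+T+h^{\varphi_g}]=\bolds{x}^{\phi}[0:\Delta+T+h^{\varphi_g}]$. Condition~(1) immediately yields $\bolds{y}[0]\models\varphi_b$ together with $\bolds{y}[t:t+h^{\varphi_g}]\models\varphi_g$ for every $t\in[\Delta,\Delta+T]$, and it only remains to extend the latter to $t>\Delta+T$.

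For the unbounded tail I would define $\sigma(t):=\Delta+T_0+\rem(t-\Delta-T,\,T-T_0)\in\{\Delta+T_0,\ldots,\Delta+T-1\}$ and prove by induction on $t\geq\Delta+T$ the invariant
\begin{equation*}
\bolds{y}[t:t+h^{\varphi_g}]\preceq\bolds{x}^{\phi}[\sigma(t):\sigma(t)+h^{\varphi_g}].
\end{equation*}
The base case $t=\Delta+T$ is precisely condition~(2). For the inductive step, a short mod-$(T-T_0)$ calculation confirms that the control applied at time $t+h^{\varphi_g}$ under \eqref{eq:phi} is $u^{\phi}_{\sigma(t)+h^{\varphi_g}}$, so monotonicity of $f^*$ propagates the last-component bound to $y_{t+1+h^{\varphi_g}}\preceq x^{\phi}_{\sigma(t)+h^{\varphi_g}+1}$. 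When $\sigma(t)<\Delta+T-1$, $\sigma(t+1)=\sigma(t)+1$ and the inductive hypothesis restricted to the overlapping subwindow supplies the remaining coordinate inequalities directly. When $\sigma(t)=\Delta+T-1$, $\sigma$ wraps to $\Delta+T_0$ and condition~(2) is invoked a second time, contracting the window $\bolds{x}^{\phi}[\Delta+T:\Delta+T+h^{\varphi_g}]$ componentwise down to $\bolds{x}^{\phi}[\Delta+T_0:\Delta+T_0+h^{\varphi_g}]$ and closing the induction. Because $\sigma(t)\in[\Delta+T_0,\Delta+T-1]\subseteq[\Delta,\Delta+T]$, the dominating template window satisfies $\varphi_g$ by condition~(1), and Lemma~\ref{lemma:finite_run} then yields $\bolds{y}[t:t+h^{\varphi_g}]\models\varphi_g$ for every $t>\Delta+T$. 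Combined with the prefix, $\bolds{y}\models\phi$, which by the reduction in the first paragraph delivers the claim about $\mu^{ol}$.

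For the RCI claim \eqref{eq:robust_set_phi}, the argument copies Theorem~\ref{theorem_repetitive} restricted to the cyclic portion of the template: any element of $\Omega_{\mathcal{L}^{\varphi_g}}$ lies in $L(\bolds{x}^{\phi}[\Delta+T_0+i:\Delta+T_0+i+h^{\varphi_g}])$ for some $i\in[0,T-T_0-1]$, and applying $u^{\phi}_{\Delta+T_0+i+h^{\varphi_g}}$ advances it, via monotonicity, to a window dominated by the next cyclic template window (with the wrap at $i=T-T_0-1$ absorbed by condition~(2)). The main obstacle I expect is the bookkeeping at the wrap-around boundary: one must simultaneously verify that the periodic rule \eqref{eq:phi} produces the expected control index at that moment, that condition~(2) applied componentwise across the full $(h^{\varphi_g}+1)$-window preserves every intermediate $\preceq$ relation, and that $\sigma(t)$ always stays in the range on which condition~(1) certifies $\varphi_g$-satisfaction of the template.
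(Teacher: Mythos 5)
Your proposal is correct and follows essentially the same route as the paper: reduce to the maximal system via monotonicity and Lemma~\ref{lemma:finite_run}, read off $\varphi_b$ and the bounded portion of the ``always'' from condition~(1) on the prefix, and handle the unbounded tail by a periodic domination argument whose base case and wrap-around are both supplied by condition~(2). The only difference is presentational: the paper discharges the tail and the RCI claim by citing Theorem~\ref{theorem_repetitive} and Corollary~\ref{corollary_repetitive} (observing that the repetitive segment is an s-sequence), whereas you inline those arguments via the explicit index map $\sigma(t)$ — your mod-$(T-T_0)$ bookkeeping matches the $\rem(\cdot,T)$ induction in the proof of Corollary~\ref{corollary_repetitive} exactly.
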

\begin{IEEEproof}
We need to prove that $\bolds{x}(x^\phi_0,\mu^{ol},\bolds{w}^*) \models \phi$, where $\bolds{w}^*=(w^*)^\omega$. 
The fact that $\bolds{x}(x'_0,\mu^{ol},\bolds{w})[0] \models \phi, \forall x'_0 \in L(x_0), \forall \bolds{w}\in \mathcal{W}^\omega $, follows from monotonicity and Lemma \ref{lemma:finite_run}. 
The fact that $\Omega_{\mathcal{L}^{\varphi_g}}$ is a RCI set \userfinal{follows from Theorem \ref{theorem_repetitive} as \eqref{eq:robust_set_phi_omega} is obtained from replacing $\Delta=T_0=0$ in \eqref{eq:robust_set_phi} }. It follows that $\left(u_{\Delta+T_0+h^{\varphi_g}}^{\phi}\cdots u_{\Delta+T+h^{\varphi_g}-1}^{\phi}\right)^\omega$ is an s-sequence. 
For all $t\ge \Delta + T + h^{\varphi_g}$, let
\begin{equation}
\label{eq_x_phi}
x^\phi_{t+1}=f^*(x_t^\phi,u^\phi_{\Delta+T_0+h^{\varphi_g}+\rem(t-\Delta-T_0-h^{\varphi_g},T-T_0)}).
\end{equation}
Using Corollary \ref{corollary_repetitive}, we have 
$\bolds{x}^\phi[k+\Delta+T_0:k+\Delta+T_0+h^{\varphi_g}] \in \mathcal{L}^{\varphi_g}, \forall k \in \mathbb{N}$, and the proof is complete.

\end{IEEEproof}

We refer to the sequence of controls in \eqref{eq:phi} as a $\phi$-\emph{sequence}. The computation of a $\phi$-sequence requires solving an MILP for $\bolds{x}^{\phi}[0:\Delta+T+h^{\varphi_g}] \models \varphi_b \wedge \bolds{G}_{[\Delta,\Delta+T]} \varphi_g$ (an instance of Problem \ref{prob:bounded}) with an additional set of constraints in  $\bolds{x}^{\phi}[\Delta+T:\Delta+T+h^{\varphi_g}] \preceq \bolds{x}^{\phi}[\Delta+T_0:\Delta+T_0+h^{\varphi_g}]$ (linear constraints). Thus, similar to s-sequecnes, the computation of a $\phi$-sequence is based on feasibility checking of a MILP. We have two parameters $T$ and $T_0 < T$ to search over. We start from $T=1$ and implement $T\gets T+1$, while checking for all $T_0 < T$, until the corresponding MILP gets feasible. In Sec. \ref{sec:necessary}, we discuss the necessity of existence of a feasible solution for some $T_0,T$.

Another interpretation of a $\phi$-sequence is a sequence that consists of \userfinal{an initialization} segment of length $\Delta+h^{\varphi_g}$ to bring the latest $h^{\varphi_g}$ states of the system into $\Omega_{\mathcal{L}^{\varphi_g}} \subseteq \mathcal{L}^{\varphi_g}$ and a repetitive segment of length $T$ to stay in $\Omega_{\mathcal{L}^{\varphi_g}}$. The repetitive segment is an s-sequence. 
Since control inputs eventually becoming periodic, the long-term behavior is expected to demonstrate periodicity, which leads to the following result based on Theorem \ref{theorem:phi-sequence}.

\begin{corollary}
\label{theorem:orbit}
The $\omega$-limit set of the run given by \eqref{eq_x_phi} is non-empty and corresponds to the following periodical orbit:
\begin{equation}
\label{eq:orbit}
\left(x_{0}^{\phi,\infty} x_{1}^{\phi,\infty} \cdots x_{T-T_0-1}^{\phi,\infty}\right)^\omega,
\end{equation}
where
$ x_{k}^{\phi,\infty}:= \lim_{c \rightarrow \infty} x_{k+\Delta+T_0+c(T-T_0)}^{\phi}, k=0,\cdots,T-T_0-1.$

\end{corollary}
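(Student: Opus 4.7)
The claim decomposes into two parts: (i) existence of the limits $x_k^{\phi,\infty}$, and (ii) identification of the trajectory's $\omega$-limit set with the cyclic listing of these limits. The plan is to reduce (i) to monotone bounded convergence in the positive orthant by first establishing a shift inequality $x^\phi_{t+(T-T_0)} \preceq x^\phi_t$ valid for every $t \geq \Delta+T_0$, then deducing (ii) by partitioning the trajectory into $T-T_0$ interleaved subsequences indexed by residue modulo $T-T_0$.

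The shift inequality would be proved by induction on $t$. The base cases $t \in \{\Delta+T_0, \ldots, \Delta+T_0+h^{\varphi_g}\}$ are exactly condition (2) of Theorem \ref{theorem:phi-sequence}, re-indexed as $x^\phi_{\Delta+T+j} \preceq x^\phi_{\Delta+T_0+j}$ for $j=0,\ldots,h^{\varphi_g}$. For the inductive step ($t > \Delta+T_0+h^{\varphi_g}$), both $t-1$ and $t-1+(T-T_0)$ lie in the periodic regime, so by the explicit form of $\bolds{u}^\phi$ in \eqref{eq:phi} together with \eqref{eq_x_phi}, the controls applied at these two instants coincide. Monotonicity of $f^*$ combined with the induction hypothesis then gives $x^\phi_{t+(T-T_0)} = f^*(x^\phi_{t-1+(T-T_0)},u_{t-1}) \preceq f^*(x^\phi_{t-1},u_{t-1}) = x^\phi_t$. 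Iterating the shift inequality, the subsequence $c \mapsto x^\phi_{\Delta+T_0+k+c(T-T_0)}$ is non-increasing in $\preceq$ for each $k \in \{0,\ldots,T-T_0-1\}$; since $\mathcal{X}$ is bounded and every coordinate is nonnegative, the subsequence converges componentwise to a limit which defines $x_k^{\phi,\infty}$, yielding (i).

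For (ii), any accumulation point of $(x^\phi_t)_{t\geq\Delta+T_0}$ arises as the limit of some subsequence $t_n\to\infty$; restricting to a residue class modulo $T-T_0$ that is visited infinitely often shows that such a limit must equal some $x_k^{\phi,\infty}$. Conversely each $x_k^{\phi,\infty}$ is itself realized as a limit of the corresponding residue subsequence, so the $\omega$-limit set equals $\{x_0^{\phi,\infty},\ldots,x_{T-T_0-1}^{\phi,\infty}\}$, listed cyclically as in \eqref{eq:orbit}, and is non-empty. The principal obstacle is the index bookkeeping in the inductive step, where one must verify $u_{t-1}=u_{t-1+(T-T_0)}$ directly from the structure of $\bolds{u}^\phi$ near the boundary between the initialization and periodic blocks; once those indices are aligned, the remainder is a routine monotone-bounded convergence argument in $\mathbb{R}^n_+$.
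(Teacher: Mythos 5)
Your proposal is correct and follows essentially the same route as the paper's proof: establish the shift inequality $x^\phi_{t+(T-T_0)} \preceq x^\phi_t$ by induction using the eventual periodicity of $\bolds{u}^\phi$ together with monotonicity of $f^*$, then invoke monotone bounded convergence on each residue-class subsequence. You are in fact slightly more careful than the paper on two points — you note that the base case must cover the full window $t\in\{\Delta+T_0,\dots,\Delta+T_0+h^{\varphi_g}\}$ supplied by condition (2) of Theorem \ref{theorem:phi-sequence} (the paper only cites $t=\Delta+T_0$), and you spell out the identification of the $\omega$-limit set via accumulation points of residue classes, which the paper leaves implicit.
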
 
\begin{IEEEproof}
We show that $x^\phi_{t} \preceq x^\phi_{t+T-T_0}, \forall t \ge \Delta+T_0$. Similar to the proof of Corollary \ref{corollary_repetitive}, we use induction. The base case for $t=\Delta+T_0$ is already in the second condition in Theorem \ref{theorem:phi-sequence}. The inductive step is proven as follows:
\begin{equation*}
x^\phi_{t+1+T-T_0} =f^*( x^\phi_{t+T-T_0},u_{T-T_0+t} ) \preceq f^*( x^\phi_{t},u_{t})=  x^\phi_{t+1},
\end{equation*}
where from \eqref{eq:phi} we have 
$$u_{t+T-T_0}=u_t=u^\phi_{t+\Delta+T_0+h^{\varphi_g}+\rem(t-\Delta-T_0-h^{\varphi_g},T-T_0)}.$$ Thus, each component of the sequence $x^\phi_{\Delta+T_0+k}x^\phi_{\Delta+T+k} x^\phi_{\Delta+2T-T_0+k} \cdots$, $k=0,\cdots,T-T_0$, is monotonically decreasing. 
\emph{Monotone convergence theorem} \cite{sutherland1975introduction} explains that a lower-bounded monotonically decreasing sequence converges (in this case, all values are lower-bounded by zero). Thus, $\lim_{c\rightarrow \infty} x_{\Delta+T_0+k+c(T-T_0)}^{\phi}, k=0,\cdots,T-T_0,$ exists and the proof is complete. 
\end{IEEEproof}

\begin{example}
\label{example:SSTL}
Consider the system in Example \ref{example:NF-STL}. We wish to satisfy 
\begin{equation*}
\phi= \bolds{F}_{[0,5]} p_1~ \wedge~ \bolds{G}_{[5,\infty)} \left(
\bolds{F}_{[0,6]} p_1~ \wedge~ \bolds{F}_{[0,6]} p_2 \right).
\end{equation*}
The specification is in form in \eqref{eq_bounded_global} with $h^{\varphi_b}=\Delta=5,h^\varphi_g=6$. 
This specification requires that $p_1$ is visited at least once until $t=5$ and, afterwards, $p_1$ and $p_2$ are persistently visited while the maximum time between two subsequent visits is not greater than $6$. We find a $\phi$-sequence solving a MILP for $T=7, T_0=0,$ while maximizing $\left\|x_0 \right \|_1$. The obtained $\phi$-sequence is 
$\bolds{u}^\phi=2~2~2~2~1~2~1~1~1~1~2~(2~ 2~ 1~ 1 ~1 ~1 ~2)^\omega$ for $x_0=(12.4,0)^T$. The first \userfinal{$\Delta+T+h^{\varphi_g}+1=5+7+6+1=19$} time points of the trajectory of the maximal system $f^*$ satisfying the conditions in Theorem \ref{theorem:phi-sequence} are shown in  Fig. \ref{fig:SSTL} [Left]. A sample trajectory of $f$ with values of $\bolds{w}$ chosen uniformly from $\mathcal{W}$ is also shown. Both trajectories satisfy $\phi$. The limit-set of $f^*$, which is a 7-periodical orbit, is shown in Fig. \ref{fig:SSTL} [Right].  
\begin{figure}[t]
\centering
\includegraphics[height=0.20\textwidth]{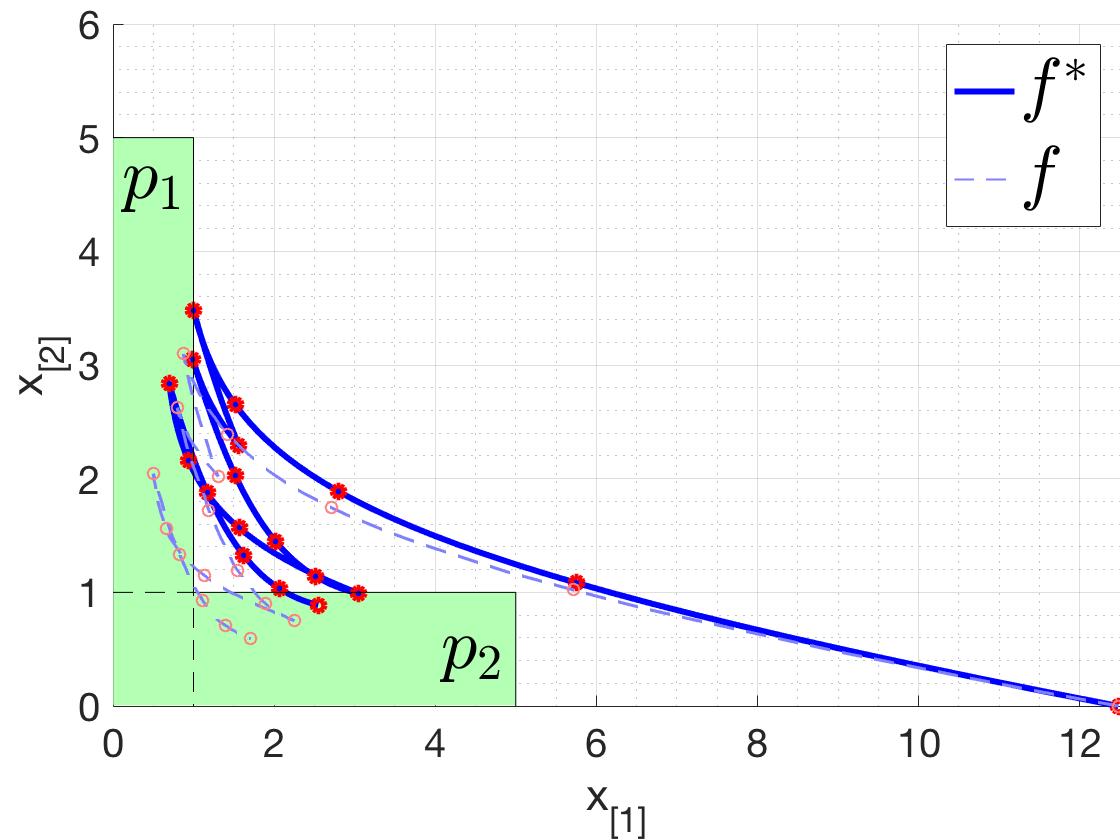}
\includegraphics[height=0.20\textwidth]{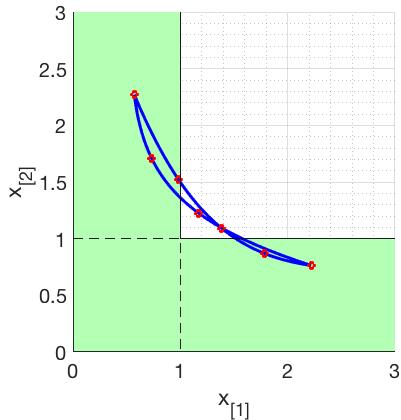}
\label{fig:SSTL}
\caption{Example \ref{example:SSTL}: [Left] The first 19 points of the trajectory of the maximal system $f^*$ that satisfy the conditions in Theorem \ref{theorem:phi-sequence}. A sample trajectory of $f$ is also shown. [Right] The $\omega$-limit set (red dots) of $f^*$ is a 7-periodic orbit.}
\end{figure}
\end{example}

\user{
\subsection{Necessity of Open-loop Strategies}
\label{sec:necessary}
We showed that if there exists an initial condition and a finite length control sequence such that the statements in Theorem \ref{theorem_repetitive} hold, an open-loop control sequence is sufficient for satisfying of a bounded-global formula, as was formulated in Problem \ref{prob:global}. In this section, we address the necessity conditions.}
\revthree{ 
We show that the existence of open-loop control strategies for satisfying a bounded-global specifications is \emph{almost necessary} in the sense that if a $\phi$-sequence is not found using Theorem \ref{theorem:phi-sequence} for large values of $T$, then it is almost certain that no correct control policy (including feedback policies) exists, or, if exists any, it is \emph{fragile} in the sense that a slight increase in the effect of the disturbances makes the policy invalid. We characterize the necessity conditions based on hypothetical perturbations in the disturbance set.}

\begin{theorem}
\label{theorem:necessary}
Suppose system \eqref{eq:system} is strongly monotone with respect to the maximal disturbance in the sense that for all $\epsilon>0$, there exists a perturbed disturbance set $\mathcal{W}_p$ with maximal disturbance $w^*_p$ such that 
\begin{equation}
\label{eq:SMA}
\forall x \in \mathcal{X}, \forall u \in \mathcal{U}, f(x,u,w^*) + {1}_n \epsilon \preceq f(x,u,w_p^*).
\end{equation}
Consider the bounded-global formula $\phi=\varphi_b \wedge \bolds{G}_{[\Delta,\infty]}\varphi$. Given $\epsilon>0$, the disturbance set is altered to $\mathcal{W}_p$ such that \eqref{eq:SMA} holds. If there exists a control policy $\mu$ and an initial condition $x_0$ such that $\bolds{x}(x_0,\mu,\bolds{w}_p) \models \phi, \forall \bolds{w}_p \in \mathcal{W}_p^\omega$, then there exists at least one open-loop control policy $\mu^{ol}$ in the form of a $\phi$-sequence in \eqref{eq:phi} for the original system such that
\begin{equation} 
T \le \nicefrac{A}{\epsilon^{n(h^{\varphi_g}+1)}},
\end{equation}
where $A$ is a constant depending on $\mathcal{L}^{\varphi_g}$. 
\end{theorem}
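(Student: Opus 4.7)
The plan is to extract an open-loop sequence from a closed-loop execution of $\mu$ on the perturbed system and then invoke the pigeonhole principle in the bounded set $\mathcal{L}^{\varphi_g}$, with the strong-monotonicity headroom from \eqref{eq:SMA} supplying the slack needed to convert pigeonhole closeness into the componentwise partial order demanded by Theorem~\ref{theorem:phi-sequence}.

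First, I would run $\mu$ on the perturbed system under $\bolds{w}_p=(w^*_p)^\omega$ to generate the trajectory $\bolds{x}^p$; by hypothesis, $\bolds{x}^p\models\phi$, so every window $\bolds{x}^p[\Delta+\tau:\Delta+\tau+h^{\varphi_g}]$ lies in $\mathcal{L}^{\varphi_g}$. Set $u^p_t:=\mu_t(\bolds{x}^p[0:t])$ and let $\bolds{x}^\phi$ be the $f^*$-trajectory generated by the open-loop sequence $\bolds{u}^p$ starting from $x_0$. An induction combining the monotonicity of $f$ with \eqref{eq:SMA} gives $x^\phi_t+1_n\epsilon\preceq x^p_t$ for every $t\geq 1$: the base case follows from \eqref{eq:SMA} applied at $(x_0,u^p_0)$, and the inductive step uses monotonicity to carry $x^\phi_t\preceq x^p_t$ through $f^*$ and then the strong-monotonicity gap to compare $f^*(x^p_t,u^p_t)$ with $f^*_p(x^p_t,u^p_t)=x^p_{t+1}$. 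Combined with Lemma~\ref{lemma:finite_run} and the lower-set property of $\mathcal{L}^{\varphi_g}$, this already secures Condition~1 of Theorem~\ref{theorem:phi-sequence} for any $T$.

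Next, since $\mathcal{X}$ is bounded, so is $\mathcal{L}^{\varphi_g}\subseteq\mathcal{X}^{h^{\varphi_g}+1}$; enclose it in a bounding box whose volume $A$ depends only on $\mathcal{L}^{\varphi_g}$. Tile that box with half-open hypercubes of side $\epsilon$: the number of cells is at most $A/\epsilon^{n(h^{\varphi_g}+1)}$. Applying the pigeonhole principle to the windows of $\bolds{x}^p$ at times $\Delta+\tau$ for $\tau=0,1,\dots,\lceil A/\epsilon^{n(h^{\varphi_g}+1)}\rceil$ yields $0\leq T_0<T\leq A/\epsilon^{n(h^{\varphi_g}+1)}$ such that the two corresponding windows lie in the same cell, producing $x^p_{\Delta+T+k}\preceq x^p_{\Delta+T_0+k}+1_n\epsilon$ componentwise for every $k\in\{0,\dots,h^{\varphi_g}\}$. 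Chaining with the headroom from Step~1 gives
\begin{equation*}
x^\phi_{\Delta+T+k}\preceq x^p_{\Delta+T+k}-1_n\epsilon\preceq x^p_{\Delta+T_0+k},
\end{equation*}
so the $\epsilon$-headroom of \eqref{eq:SMA} exactly offsets the $\epsilon$-slack of the pigeonhole cell. This relation, together with the envelope bound $x^\phi_{\Delta+T_0+k}+1_n\epsilon\preceq x^p_{\Delta+T_0+k}$, is what I would use to pin down Condition~2 of Theorem~\ref{theorem:phi-sequence}, after which Theorem~\ref{theorem:phi-sequence} certifies $\bolds{u}^\phi$ as a valid $\phi$-sequence in the form \eqref{eq:phi} with $T\leq A/\epsilon^{n(h^{\varphi_g}+1)}$.

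The main obstacle is precisely this final transfer: the pigeonhole supplies only $\epsilon$-closeness between two perturbed windows, whereas Condition~2 demands a full componentwise $\preceq$ on $\bolds{x}^\phi$ itself, not merely on its upper envelope $\bolds{x}^p$. The $1_n\epsilon$ headroom from strong monotonicity is exactly the amount available to bridge this gap, but propagating the order relation from $\bolds{x}^p$ back onto $\bolds{x}^\phi$ at index $\Delta+T_0$ is the delicate step; I expect it to require either shrinking the cell size by a constant factor (absorbed in $A$) or applying a second, joint pigeonhole to the pair $(\bolds{x}^\phi,\bolds{x}^p)$ so that the componentwise $\preceq$ needed for Condition~2 is secured. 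The rest of the argument---boundedness of $\mathcal{L}^{\varphi_g}$, the cell count, the inductive $\epsilon$-gap, and the appeal to Theorem~\ref{theorem:phi-sequence}---is routine once this cancellation is made airtight.
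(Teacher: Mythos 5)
Your setup matches the paper's: extract the open-loop controls $u^p_t$ from the closed-loop run of $\mu$ on the perturbed maximal system, establish the uniform gap $x^\phi_t + 1_n\epsilon \preceq x^p_t$ by induction, and pigeonhole on an $\epsilon$-diameter partition of $\mathcal{L}^{\varphi_g}$ to get the cell count $A/\epsilon^{n(h^{\varphi_g}+1)}$. But the step you flag as ``delicate'' is a genuine gap, and neither of your proposed repairs closes it. After chaining you have $x^\phi_{\Delta+T+k} \preceq x^p_{\Delta+T_0+k}$, while Condition~2 of Theorem~\ref{theorem:phi-sequence} asks for $x^\phi_{\Delta+T+k} \preceq x^\phi_{\Delta+T_0+k}$; the only other fact available, $x^\phi_{\Delta+T_0+k} + 1_n\epsilon \preceq x^p_{\Delta+T_0+k}$, is an upper bound on $x^\phi_{\Delta+T_0+k}$ pointing the wrong way. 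The discrepancy $x^p_t - x^\phi_t$ is bounded below by $1_n\epsilon$ but not above (monotonicity preserves order, not additive gaps), so shrinking the cells by any constant factor cannot recover a relation between the two $\bolds{x}^\phi$ windows. A joint pigeonhole on $(\bolds{x}^\phi,\bolds{x}^p)$ fails for the same reason: the $\epsilon$-headroom of \eqref{eq:SMA} lives between the \emph{two systems}, not between two time points of one system, so $\epsilon$-closeness of two $\bolds{x}^\phi$ windows can never be upgraded to $\preceq$.

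The paper's resolution is structural rather than quantitative: it does not try to verify Condition~2 on the $f^*$-run from $x_0$ at all. Instead it restarts an auxiliary original-maximal trajectory $x'$ \emph{from the perturbed state} at time $k_1+h^{\varphi_g}$ (and sets $x'_k := x^p_k$ on the window at $k_1$), so that the inequality to be proved mixes the two trajectories: $\bolds{x}'[k_2:k_2+h^{\varphi_g}] \preceq \bolds{x}^p[k_1:k_1+h^{\varphi_g}]$. This follows exactly from your two ingredients, $\bolds{x}'[k_2:\cdot] + \epsilon\bolds{1} \preceq \bolds{x}^p[k_2:\cdot] \preceq \bolds{x}^p[k_1:\cdot] + \epsilon\bolds{1}$, because the anchor window is now a \emph{perturbed} window, for which a genuine lower bound is not needed. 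The RCI set is then built from the perturbed windows $L(\bolds{x}'[k:k+h^{\varphi_g}])$, $k=k_1,\dots,k_2-1$ (Theorem~\ref{theorem_repetitive} permits the first $h^{\varphi}+1$ points of an s-sequence to be any element of $\mathcal{L}^{\varphi}$, not necessarily an $f^*$-trajectory), and the actual $f^*$-run from $x_0$ is shown to \emph{enter} this set by monotonicity, after which Corollary~\ref{corollary_repetitive} takes over. To complete your proof you should adopt this re-anchoring; as written, the cancellation you rely on cannot be made airtight.
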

\begin{IEEEproof}
Given a bounded set $\mathcal{C} \subset \mathbb{R}^{n(h^{\varphi}+1)}$, we define the diameter $d(\mathcal{C}):= \inf \{ d |s_1\preceq s_2+d{1}_{n(h^{\varphi}+1)}, \forall s_1,s_2 \in \mathcal{C} \}$ (e.g., the diameter of an axis-aligned hyper-box is equal to the length of its largest side). 
Consider a partition of $\mathcal{L}^{\varphi}$ by a finite number of cells, where the diameter of each cell is less than $\epsilon$. The maximum number of cells required for such a partition is $\nicefrac{A}{\epsilon^{n(h^{\varphi_g}+1)}}$, where $A$ is a constant dependent on the shape and volume of $\mathcal{L}^{\varphi_g}$. A conservative upper bound on $A$ can be given as follows. \revone{Define $a^*\in \mathbb{R}_+$ as
$$
\argmin_{\nicefrac{a}{\epsilon} \in \mathbb{N}} \Big\{\bolds{x}[0:h^{\varphi_g}] \preceq a \bolds{1}_{n}[0:h^{\varphi_g}], \forall \bolds{x}[0:h^{\varphi_g}] \in \mathcal{L}^{\varphi_g} \Big\}.
$$ 
}
Since $\mathcal{L}^{\varphi_g}$ is bounded and closed, $a^*$ exists. We have $\mathcal{L}^{\varphi_g} \subseteq L(a^* 1_{n(h^{\varphi_g}+1)})$. Let $A$ be ${a^*}^{n(h^{\varphi_g}+1)}$ - the volume of $L(a^* 1_{n(h^{\varphi_g}+1)})$, which is a hyper-box. 
Partition $L(a^* 1_{n(h^{\varphi_g}+1)})$ into $N:=\nicefrac{A}{\epsilon^{n(h^{\varphi_g}+1)}}$ number of equally sized cubic cells with side length of $\epsilon$. Such a partition also partitions $\mathcal{L}^{\varphi_g}$ to at most $N$ number of cells where the diameter of each cell is not greater than $\epsilon$. 

Since there exists $\mu$ such that $\bolds{x}(x_0,\mu,\bolds{w}_p) \models \phi, \forall \bolds{w}_p \in \mathcal{W}_p^\omega$, then there exist at least one run satisfying $\phi$ for system $x_{k+1}=f(x_{k},u_k,w_p^*)$. Let $x_0,\cdots, x_{\Delta+h^{\varphi_g}+N}$ be the first $\Delta+h^{\varphi_g}+N+1$ time points of a such a run. We have $\bolds{x}[k:k+h^{\varphi_g}] \in \mathcal{L}^{\varphi_g}, k=\Delta,\cdots,\Delta+N$. Consider the sequence $\bolds{x}[\Delta:\Delta+h^{\varphi_g}] \bolds{x}[\Delta+1:\Delta+1+h^{\varphi_g}] \cdots \bolds{x}[\Delta+N:\Delta+N+h^{\varphi_g}]$. 
Consider a partition of $\mathcal{L}^{\varphi_g}$ with cells that for all cells the diameter is less than $\epsilon$. By the virtue of \emph{pigeonhole principle}, there exists a cell \userfinal{that} contains at least two time points $\bolds{x}[{k_1}:k_1+h^{\varphi_g}]$ and $\bolds{x}[{k_2}:k_2+h^{\varphi_g}]$, $\Delta \le k_1 \le k_2\le \Delta+N$. From the assumption on the diameter of the cells we have 
\begin{equation}
\label{eq_cell}
\bolds{x}[{k_2}:k_2+h^{\varphi_g}] \preceq \bolds{x}[{k_1}:k_1+h^{\varphi_g}] + \epsilon \bolds{1}_{n}[0:h^{\varphi_g}]. 
\end{equation} 

Now consider system $x'_{k+1}=f(x'_{k},u_k,w^*)$ - the original maximal system - with $x'_{k_1+h^{\varphi_g}}=x_{k_1+h^{\varphi_g}}$. We prove that 
\begin{equation}
\label{eq_prime}
x'_{k} + 1_n \epsilon \le x_k, \forall k> k_1+h^{\varphi_g}. 
\end{equation}
We use induction. The base case for $k=k_1+h^{\varphi_g}+1$ is verified using \eqref{eq:SMA}: 
\begin{equation*}
\begin{array}{rl}
x'_{k_1+1+h^{\varphi_g}} + 1_n \epsilon  = & f(x'_{k_1+h^{\varphi_g}},u_{k_1+h^{\varphi_g}},w^*) +1_n \epsilon \\
\le & f(x'_{k_1+h^{\varphi_g}},u_{k_1+h^{\varphi_g}},w_p^*) \\ 
=  & x_{k_1+1+h^{\varphi_g}}.
\end{array}
\end{equation*}
The inductive step is verified using monotonicity and \eqref{eq:SMA}: 
\begin{equation*}
\begin{array}{rl}
x'_{k+1+h^{\varphi_g}}+ 1_n \epsilon  = & f(x'_{k+h^{\varphi_g}},u_{k+h^{\varphi_g}},w^*) +1_n \epsilon \\
\le & f(x'_{k+h^{\varphi_g}},u_{k+h^{\varphi_g}},w_p^*)  \\
 \le & f(x_{k+h^{\varphi_g}},u_{k+h^{\varphi_g}},w_p^*) = x_{k+1+h^{\varphi_g}}.
\end{array}
\end{equation*}
It immediately follows from \eqref{eq_prime} that 
\begin{equation}
\label{eq_primesignal}
\bolds{x'}[k_2:k_2+h^{\varphi_g}] + \epsilon \bolds{1}_{n}[0:h^{\varphi_g}] \le \bolds{x}[k_2:k_2+h^{\varphi_g}]. 
\end{equation}
Since the lefthand of \eqref{eq_cell} is the righthand of \eqref{eq_primesignal}, we have:
\begin{equation}
\label{eq_primerci}
\bolds{x'}[k_2:k_2+h^{\varphi_g}] \le \bolds{x}[k_1:k_1+h^{\varphi_g}]. 
\end{equation}
This is reminiscent of the conditions in Theorem \ref{theorem_repetitive}. Now by defining $x'_k:=x_k, k=k_1,\cdots,k_1+h^{\varphi_g}-1$, we conclude that  
$$\Omega'_{\mathcal{L}^{\varphi_g}}:= \bigcup_{k=k_1}^{k_2-1} L(\bolds{x}'[k:k+h^\varphi])$$
is a RCI set for system with adversarial disturbance set $\mathcal{W}$ and $(u_{k_1}\cdots u_{k_2+h^\varphi_g-1})^\omega$ is an s-sequence.   
 
Now, once again, consider the original system $x'_{k+1}=f(x'_{k},u_k,w^*)$ with $x'_0=x_0$. Monotonicity implies $\bolds{x'}[0:k_1+h^{\varphi_g}] \le \bolds{x}[0:k_1+h^{\varphi_g}]$. Thus, by applying $u_0,\cdots,u_{k_1+h^{\varphi_g}-1}$ and using Lemma \ref{lemma:finite_run}, we have $\bolds{x'}[0:k_1+h^{\varphi_g}] \models \varphi_b \wedge \bolds{G}_{[\Delta,k_1]} \varphi_g$. Corollary \ref{corollary_repetitive} implies $x'[k_1+h^{\varphi_g}] \models \bolds{G}_{[0,\infty)} \varphi_g$ if $(u_{k_1},\cdots,u_{k_2-1})^\omega$ is applied starting from time $k_1$. Finally, monotonicity and  Lemma \ref{lemma:finite_run} immediately indicate that $\bolds{x}(x''_0,\mu^{ol},\bolds{w}) \models \phi, \forall x''_0 \in L(x_0), \forall \bolds{w} \in \mathcal{W}^\omega$, where $\mu^{ol}$ is the following open-loop control strategy producing the following control sequence:
$$
u_0\cdots u_{k_1-1} (u_{k_1}\cdots u_{k_2+h^\varphi_g-1})^\omega,
$$ 
which is in form of \eqref{eq:phi} with $k_1=\Delta+T_0+h^{\varphi_g}$ and $k_2=\Delta+T$. Since $k_2 \le \Delta+N$, we also have $T \le N, N=\nicefrac{A}{\epsilon^{n(h^{\varphi_g}+1)}}$, and the proof is complete.

\end{IEEEproof}

\begin{corollary}
\label{corollary_necessity}
Suppose that for all $T \le T^{\max}, T_0 < T$, there does not exist an initial condition and a control sequence such that the conditions in  Theorem \ref{theorem:phi-sequence} hold. Then there does not exist any solution to Problem \ref{prob:global} given that the maximal disturbance is $w_p^*$ such that \eqref{eq:SMA} holds with $\epsilon > \sqrt[n(h^{\varphi_g}+1)] {T^{\max}}$.  
\end{corollary}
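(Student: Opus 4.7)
The plan is to invoke Theorem \ref{theorem:necessary} contrapositively. That theorem already states that if a (possibly feedback) control policy solves the perturbed version of Problem \ref{prob:global} under disturbance set $\mathcal{W}_p$ with maximal element $w^*_p$, then an open-loop $\phi$-sequence in the form of \eqref{eq:phi} must exist for the original system, with period parameter bounded by $T \le A/\epsilon^{n(h^{\varphi_g}+1)}$. Thus, to rule out any solution to the perturbed problem, it is enough to rule out $\phi$-sequences of length at most $A/\epsilon^{n(h^{\varphi_g}+1)}$.

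First I would fix $\epsilon$ as in the hypothesis and solve the inequality $A/\epsilon^{n(h^{\varphi_g}+1)} \le T^{\max}$, which is equivalent to $\epsilon \ge \sqrt[n(h^{\varphi_g}+1)]{A/T^{\max}}$; this gives exactly the quantitative threshold in the statement (modulo the constant $A$ absorbed into the radicand, which depends only on $\mathcal{L}^{\varphi_g}$). Under this choice, any $T$ produced by Theorem \ref{theorem:necessary} automatically satisfies $T \le T^{\max}$.

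Next, I would argue by contradiction. Suppose there is an initial condition $x_0$ and a control policy $\mu$ such that $\bolds{x}(x_0,\mu,\bolds{w}_p) \models \phi$ for all $\bolds{w}_p \in \mathcal{W}_p^\omega$. Applying Theorem \ref{theorem:necessary}, there exist indices $k_1 \le k_2$ with $k_2 - k_1 \le A/\epsilon^{n(h^{\varphi_g}+1)} \le T^{\max}$ yielding an open-loop $\phi$-sequence for the original system, corresponding to the choice $k_1 = \Delta + T_0 + h^{\varphi_g}$ and $k_2 = \Delta + T$ in Theorem \ref{theorem:phi-sequence}. In particular, the pair $(T, T_0)$ with $T \le T^{\max}$ and $T_0 < T$ satisfies the two conditions of Theorem \ref{theorem:phi-sequence}. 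This directly contradicts the standing hypothesis that no such pair exists.

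The main obstacle is essentially bookkeeping: one must verify that the construction inside the proof of Theorem \ref{theorem:necessary} genuinely produces a $\phi$-sequence parameterized by $(T_0, T)$ in the exact form of \eqref{eq:phi}, so that the non-existence hypothesis of the corollary applies verbatim. Once this identification is made, the corollary is a one-line contrapositive. The only other subtlety is making the constant $A$ explicit when quoting the threshold on $\epsilon$, since the statement as written packages $A$ inside the $n(h^{\varphi_g}+1)$-th root.
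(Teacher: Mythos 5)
Your proposal is correct and matches the paper's intent exactly: the corollary is stated without an explicit proof precisely because it is the direct contrapositive of Theorem \ref{theorem:necessary}, with the threshold on $\epsilon$ obtained by forcing the bound $T \le \nicefrac{A}{\epsilon^{n(h^{\varphi_g}+1)}}$ to fall at or below $T^{\max}$, and with the construction in that theorem's proof already delivering a $\phi$-sequence in the exact form of \eqref{eq:phi} via $k_1=\Delta+T_0+h^{\varphi_g}$, $k_2=\Delta+T$. You are also right to flag that the constant $A$ belongs in the radicand (as the paper's own worked example with $\epsilon^2 \ge 9/T$ confirms), so the statement as printed is slightly imprecise rather than your argument being incomplete.
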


The relation between the fragility in Theorem \ref{theorem:necessary} and the length of the $\phi$-sequence suggests that by performing the search for longer $\phi$-sequences (which are computationally more difficult), the bound for fragility becomes smaller, implying that a correct control policy (if exists) is close to the limits (i.e., robustness score is close to zero, or the constraints are barely satisfied in the case with maximal disturbance). In practice, the bounds in Theorem \ref{theorem:necessary} are very conservative and one may desire to find tighter bounds for specific applications.

\begin{example}
Consider Example \ref{example:s}. Suppose that there does not exist an s-sequence of length smaller than 144 with maximal disturbance $w^*$. The constant $A$ (area in this 2D case, see proof of Theorem \ref{theorem:necessary}) of region corresponding to $p_1 \vee p_2$ is $9$. Therefore, $\mathcal{S}$ can be partitioned into $144$ equally sized square cells with side length $0.25$. Note that we have $\epsilon^2\ge 9/T$. Since the disturbances are additive, it follows that if \user{$A_u^{-1}(I-e^{-A_u t})(w^*_p-w^*) > (0.25,0.25)^T, u=1,2$, then there does \emph{not} exist any control strategy $\mu$ and $x_0 \in \mathbb{R}_+^n$ such that $\bolds{x}(x_0,\mu,\bolds{w}_p) \models \bolds{G}_{[0,\infty]}(x \in \mathcal{S}),\forall \bolds{w}_p \in \mathcal{W}_p^\omega$.}
\end{example}

\section{Model Predictive Control}
\label{sec:mpc}
In this section, we provide a solution to Problem \ref{prob:optimal}. We assume full knowledge of the history of state. As mentioned in Sec. \ref{sec:problem}, the cost function $J$ is assumed to be non-decreasing with respect to the state values hence the system constraints are replaced with those of the maximal system. First, we explain the MPC setup for global STL formulas. Next, we  prove that the proposed framework is guaranteed to generate runs that satisfy the global STL specification \eqref{eq:global}.

Let $t\ge h^\varphi-1$. The case of $t<h^\varphi-1$ is explained later. 
Given planning horizon $H$, the states that are predictable at time $t$ using controls in $u^H_t$ are $x_{1|t},x_{2|t},\cdots,x_{H|t}$. Given predictions $x_{1|t},x_{2|t},\cdots,x_{H|t}$, we need to enforce $\bolds{x}[{t-h^\varphi+1,t+H}] \models \bolds{G}_{[0,H-1]} \varphi$ at time $t$. Notice that 
\begin{equation}
\user{\bolds{x}_{|t}[{t-h^\varphi+1,t+H}]:=x_{t-h^\varphi+1} \cdots x_t x_{1|t} \cdots x_{H|t},}
\end{equation}
i.e., the first $h^\varphi$ time points are actual values, the rest are predictions. Also, note that the values in $\bolds{x}[\tau:\tau+h^\varphi]$ are independent of the values in $x_t^H$ for $\tau \le t-h^\varphi$ and are not fully available for $\tau>t+H-h^\varphi$. Thus, $[t-h^\varphi+1,t+H-h^\varphi]$ is the time window for imposing constraints at time $t$ \cite{sadraddini2015robust}.



The MPC optimization problem is initially written as (we do not use it for control synthesis as explained shortly): 
\begin{equation}
\label{eq:optimization}
\begin{array}{cl}
\text{minimize} &  J\left(x_t^H, u_t^H \right), \\
  \text{s.t.} & x_{k+1|t}=f^*(x_{k|t},u_{k|t}), k=0,\cdots,H-1, \\
   & \bolds{x}_{|t}[t-h^\varphi+1,t+H] \models \user{ \bolds{G}_{[0,H-1]} \varphi}.
\end{array}
\end{equation}
The set of constraints in \eqref{eq:optimization} requires the knowledge of $x_{t-h^\varphi+1} x_{t-h^\varphi+2} \cdots x_t$. Thus, the proposed control policy requires a finite memory for the history of last $h^\varphi$ states. As it will be shown in Proposition \ref{prop_satisfaction}, persistent feasibility of the constraints in \eqref{eq:optimization} leads to fulfilling $\bolds{G}_{[0,\infty]} \varphi$.  However, persistent feasibility of the MPC setup in \eqref{eq:optimization} is not guaranteed. We address this issue for the remainder of this section.

\begin{define}
An MPC strategy is \emph{recursively feasible} if, for all $t \in \mathbb{N}$, the control at time $t$ is selected such that the MPC optimization problem at $t+1$ becomes feasible.     
\end{define}

Our goal is to modify \eqref{eq:optimization} such that it becomes recursively feasible. 
It is known that adding a (the maximal) RCI set acting as a terminal constraint is sufficient (and necessary) to guarantee recursive feasibility \cite{kerrigan2001feasible}. We add the terminal constraint $\bolds{x} [t+H-h^\varphi:t\revone{+H}] \in \Omega_{\mathcal{L}^\varphi}$ to \eqref{eq:optimization} to obtain:
\begin{equation}
\label{eq:recursive_optimization}
\begin{array}{cl}
u^{H,opt}_t  = &  \underset{u^H_t \in \mathcal{U}^H}{\argmin} ~ J\left(x_t^H, u_t^H \right), \\
 \text{s.t.} & x_{k+1|t}=f^*(x_{k|t},u_{k|t}), k=0,\cdots,H-1, \\
  & \bolds{x}_{|t}[t-h^\varphi+1,t+H] \models \user{\bolds{G}_{[0,H-1]} \varphi}, \\
  & \bolds{x}_{|t} [t+H-h^\varphi:t+H] \in \Omega_{\mathcal{L}^\varphi}.
\end{array}
\end{equation}

\revone{
\begin{proposition}
\label{prop_satisfaction}
Let $\mu_t(x_0,\cdots,x_t)=\mu_t(x_{t-h^\varphi+1},\cdots,x_t)=u^{H,opt}_{0|t}$, where $u^{H,opt}=u^{H,opt}_{0|t} \cdots u^{H,opt}_{H-1|t}$ is given by \eqref{eq:recursive_optimization}. If the optimization problem \eqref{eq:recursive_optimization} is feasible for all $t \ge h^\varphi-1$, then $\bolds{x}(x_0,\mu,\bolds{w})[0] \models \bolds{G}_{[0,\infty]} \varphi, \forall \bolds{w} \in \mathcal{W}^\omega$.  
\end{proposition}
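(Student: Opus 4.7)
\begin{IEEEproof}[Proof proposal]
The plan is to fix an arbitrary $t \ge 0$ and show $\bolds{x}[t:t+h^\varphi] \models \varphi$; from this, $\bolds{x}[0] \models \bolds{G}_{[0,\infty]} \varphi$ follows immediately by the semantics of the global operator. The idea is to look at the MPC instance that is solved exactly when the last missing sample of the window is about to be generated, and to use monotonicity to dominate the actual realization by the (feasible) prediction.

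First, I would identify the right MPC instance. Set $t':= t + h^\varphi - 1$; since $t \ge 0$ we have $t' \ge h^\varphi - 1$, so by hypothesis the problem \eqref{eq:recursive_optimization} is feasible at $t'$ and produces an optimal sequence $u_{0|t'}^{H,\text{opt}},\cdots,u_{H-1|t'}^{H,\text{opt}}$ together with a predicted trajectory $\bolds{x}_{|t'}$ on $[t'-h^\varphi+1,t'+H] = [t,t+h^\varphi+H-1]$. The STL constraint in \eqref{eq:recursive_optimization}, with $k=0$, guarantees that the first length-$(h^\varphi+1)$ window
\begin{equation*}
\bolds{x}_{|t'}[t:t+h^\varphi] = x_t x_{t+1} \cdots x_{t+h^\varphi-1} x_{1|t'}
\end{equation*}
satisfies $\varphi$. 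Note that the first $h^\varphi$ entries of this window are the true, already-realized states (the MPC uses the measured history $x_{t'-h^\varphi+1},\cdots,x_{t'}$), while only the last entry $x_{1|t'}$ is a prediction obtained from the maximal dynamics $f^*$.

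Next, I would compare the actual successor with the predicted one. The controller applies $u_{t'} = u_{0|t'}^{H,\text{opt}}$, and the true system produces $x_{t+h^\varphi} = f(x_{t+h^\varphi-1},u_{t'},w_{t'})$. By Assumption \ref{assume:w} and monotonicity (Definition \ref{define:cooperative}),
\begin{equation*}
x_{t+h^\varphi} = f(x_{t+h^\varphi-1},u_{t'},w_{t'}) \preceq f^*(x_{t+h^\varphi-1},u_{t'}) = x_{1|t'}.
\end{equation*}
Therefore $\bolds{x}[t:t+h^\varphi] \preceq \bolds{x}_{|t'}[t:t+h^\varphi]$ componentwise (equality on the first $h^\varphi$ entries, $\preceq$ on the last). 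Applying Lemma \ref{lemma:finite_run} to these two length-$(h^\varphi+1)$ signals yields $\bolds{x}[t:t+h^\varphi] \models \varphi$, i.e.\ $\bolds{x}[t] \models \varphi$. Since $t$ was arbitrary in $\mathbb{N}$, $\bolds{x}[0] \models \bolds{G}_{[0,\infty]}\varphi$ for every $\bolds{w} \in \mathcal{W}^\omega$.

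I do not expect any genuine obstacle here: the terminal RCI constraint in \eqref{eq:recursive_optimization} is not used in this proof (its role is only to support recursive feasibility, which is hypothesized here), and both the monotone domination step and the invocation of Lemma \ref{lemma:finite_run} are direct. The only delicate point is bookkeeping the indices so that $t' = t + h^\varphi - 1$ is the MPC round in which the $t$-rooted window first becomes fully constrained and in which exactly one new predicted sample appears at the tail, so that monotonicity need only be applied across a single step.
\end{IEEEproof}
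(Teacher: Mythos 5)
Your proposal is correct and is essentially the paper's own argument: the paper also fixes the window starting at $k$, invokes the MPC instance at $t=k+h^\varphi-1$, uses the STL constraint to get $x_k\cdots x_{k+h^\varphi-1}x_{1|t}\models\varphi$, dominates the realized successor by the maximal prediction via monotonicity, and concludes with Lemma \ref{lemma:finite_run}. The only cosmetic difference is that the paper phrases this as an induction over $k$ while you correctly observe the argument is direct for each window; your index bookkeeping ($t'=t+h^\varphi-1$) matches the paper's.
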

\begin{proof}
We show that $\bolds{x}(x_0,\mu,\bolds{w})[0:k+h^\varphi] \models \bolds{G}_{[0,k]} \varphi, \forall \bolds{w} \in \mathcal{W}^*, \forall k \in \mathbb{N},$ using induction over $k$. 
Consider \eqref{eq:recursive_optimization} for $t=k+h^\varphi-1$ for any $k \in \mathbb{N}$. The second constraint in \eqref{eq:recursive_optimization} requires $\bolds{x}_{|t}[k,k+h^\varphi] \models \varphi$, or equivalently, $x_k \cdots x_{k+h^\varphi-1} x_{1|k+h^\varphi-1} \models \varphi$. By applying $u^{opt}_{0|t}$, monotonicity implies 
$x_{k+h^\varphi}=f(x_{k+h^\varphi-1},u^{opt}_{0|t},w) \preceq x_{1|k+h^\varphi-1} = f^*(x_{k+h^\varphi-1},u^{opt}_{0|t}), \forall w \in \mathcal{W}.$ From Lemma \ref{lemma:finite_run} we have $\bolds{x}[k:k+h^\varphi] \models \varphi$. Thus, we have shown $\bolds{x}[k:k+h^\varphi] \models \varphi, \forall k \in \mathbb{N}$, and the proof is complete.             
\end{proof}
}

\begin{proposition}
The MPC strategy corresponding to \eqref{eq:recursive_optimization} is recursively feasible. 
\end{proposition}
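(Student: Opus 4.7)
The plan is to exhibit an explicit feasible candidate control sequence at time $t+1$, given feasibility of \eqref{eq:recursive_optimization} at time $t$. Let $u^{opt}_{0|t},\ldots,u^{opt}_{H-1|t}$ be an optimal solution at time $t$ with associated predictions $x^{opt}_{1|t},\ldots,x^{opt}_{H|t}$, computed via $f^*$. Because the terminal constraint holds, we have $x^{opt}_{H-h^\varphi|t}\cdots x^{opt}_{H|t} \in \Omega_{\mathcal{L}^\varphi}$, so by the RCI property there exists $u^s \in \mathcal{U}$ such that
\begin{equation*}
x^{opt}_{H-h^\varphi+1|t}\cdots x^{opt}_{H|t}\, f^*(x^{opt}_{H|t},u^s) \in \Omega_{\mathcal{L}^\varphi}.
\end{equation*}
The natural candidate at time $t+1$ is the shifted-and-appended sequence $u^{opt}_{1|t},\ldots,u^{opt}_{H-1|t},u^s$.

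The key observation is a monotone domination between the predictions at $t+1$ (from the true state $x_{t+1}$) and the old predictions at $t$. Since $x_{t+1}=f(x_t,u^{opt}_{0|t},w_t)\preceq f^*(x_t,u^{opt}_{0|t})=x^{opt}_{1|t}$, an easy induction using monotonicity of $f^*$ gives $x_{k|t+1}\preceq x^{opt}_{k+1|t}$ for $k=0,\ldots,H-1$. Appending $u^s$ and again using monotonicity yields $x_{H|t+1}=f^*(x_{H-1|t+1},u^s)\preceq f^*(x^{opt}_{H|t},u^s)$. Hence the whole length-$(h^\varphi+H)$ signal $\bolds{x}_{|t+1}[t+2-h^\varphi,t+1+H]$ is pointwise $\preceq$ the signal $\bolds{x}_{|t}[t+2-h^\varphi,t+H]\,f^*(x^{opt}_{H|t},u^s)$, whose prefix (through $t+H$) is part of the feasible plan at $t$.

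I would then verify the two non-dynamical constraints. For the STL conjunct $\bolds{G}_{[0,H-1]}\varphi$ at $t+1$: for every $k'\in\{0,\ldots,H-2\}$, the window $\bolds{x}_{|t+1}[t+2-h^\varphi+k',t+2+k']$ is $\preceq$ the window $\bolds{x}_{|t}[t+2-h^\varphi+k',t+2+k']$ that already satisfied $\varphi$ at time $t$ (it corresponds to the index $k'+1$ of $\bolds{G}_{[0,H-1]}\varphi$ at $t$), so Lemma~\ref{lemma:finite_run} preserves satisfaction. The only genuinely new window, $k'=H-1$, is exactly the terminal window $\bolds{x}_{|t+1}[t+1+H-h^\varphi,t+1+H]$, and for this it suffices to verify the terminal constraint itself. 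For the terminal constraint, the displayed pointwise inequality shows that the terminal window at $t+1$ is $\preceq$ the sequence $x^{opt}_{H-h^\varphi+1|t}\cdots x^{opt}_{H|t}\,f^*(x^{opt}_{H|t},u^s)$, which lies in $\Omega_{\mathcal{L}^\varphi}$ by choice of $u^s$; since $\Omega_{\mathcal{L}^\varphi}$ is a union of lower-sets $L(\cdot)$ and therefore itself a lower-set, the terminal window at $t+1$ also lies in $\Omega_{\mathcal{L}^\varphi}$, which in particular implies membership in $\mathcal{L}^\varphi$.

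The main obstacle is purely bookkeeping: aligning the time indices of the receding window, the terminal window, and the RCI successor so that the shift-by-one argument lines up correctly, and ensuring the STL constraint at the newly appearing rightmost index is absorbed by the terminal constraint rather than having to be checked independently. There is also a minor initialization issue for $t<h^\varphi-1$ (where fewer than $h^\varphi$ actual past states exist), but this can be handled by using the MPC output $u^{opt}_{0|0}$ after a one-shot feasibility solve that plants the first $h^\varphi$ actual samples inside a valid prefix; once $t\ge h^\varphi-1$, the recursive argument above applies verbatim.
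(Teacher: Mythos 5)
Your proof is correct and follows essentially the same route as the paper's: shift the feasible control sequence from time $t$ by one step, append the control supplied by the RCI property of $\Omega_{\mathcal{L}^\varphi}$, establish pointwise domination of the new predictions by the old ones via monotonicity, and invoke Lemma~\ref{lemma:finite_run} for the STL windows and invariance for the terminal window. If anything, your treatment of the terminal constraint is slightly more careful than the paper's (which writes the new terminal window as a mix of time-$t$ and time-$(t+1)$ predictions), since you explicitly use that $\Omega_{\mathcal{L}^\varphi}$ is a lower-set to transfer membership from the dominating sequence to the actual time-$(t+1)$ terminal window.
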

\begin{IEEEproof}
Suppose $u_t^H=u_{0|t} \cdots u_{H-1|t}$ and $x_t^H=x_{t+1|t} \cdots,x_{t+H-1|t}$ is a feasible solution for \eqref{eq:recursive_optimization} at time $t$. Since $\Omega_{\mathcal{L}^\varphi}$ is a RCI set, there exist $u^{r} \in \mathcal{U}$ such that $\bolds{x}_{|t}[{t+H+1}-h^\varphi:t+H+1]=x_{H-h^\varphi+1|t} x_{H-h^\varphi+2|t} \cdots x_{H|t}  f(x_{H|t},u^r,w)  \in \Omega_{\mathcal{L}^\varphi}, \forall w \in \mathcal{W}$. Suppose $u_{0|t}$ is applied to the system. We have $x_{t+1}=f(x_t,u_{0|t},w) \preceq f^*(x_t,u_{0|t})=x_{1|t}, \forall w \in \mathcal{W}$. 

Now, we prove that the optimization problem at time $t+1$ is feasible by showing that at least one feasible solution \userfinal{exists}. Let 
$u^H_{t+1}=u_{1|t} u_{2|t} \cdots,u_{H|t} u^r$. We already showed that $x_{t+1}=x_{0|t+1} \preceq x_{1|t}$. By induction and using monotonicity, it follows that $x_{k|t+1} \preceq x_{k+1|t}, k=,1,\cdots,H-2$.
Therefore, we have $x_{t-h^\varphi+2}\cdots x_{t+1} x_{1|t+1} \cdots x_{H-1|t+1} \preceq x_{t-h^\varphi+2}\cdots x_{1|t} x_{2|t} \cdots x_{H|t}$, which using Lemma \ref{lemma:finite_run} establishes $x_{t-h^\varphi+2}\cdots x_{t+1} x_{1|t+1} \cdots x_{H-1|t+1} \models \bolds{G}_{[0,H-1]} \varphi$. In order to complete the proof, it remains to show that $\bolds{x} [t+H+1-h^\varphi:t+H+1]= x_{H+1-h^\varphi|t} \cdots x_{H|t+1} \models \varphi$. This follows from invariance. Note that $x_{H|t+1}=f^*(x_{H|t},u^r)$. Therefore $x_{H+1-h^\varphi|t}\cdots  \cdots x_{H|t+1} \in \Omega_{\mathcal{L}^\varphi}$, and since $\Omega_{\mathcal{L}^\varphi} \in \mathcal{L}^\varphi$, we have $x_{H+1-h^\varphi|t}\cdots  \cdots x_{H|t+1} \models \varphi$, and the proof is complete. 
\end{IEEEproof}

The MPC optimization problem is also converted into a MILP problem. It is computationally easier to solve  the optimization problem in \eqref{eq:recursive_optimization} by solving $T$ MILPs:
\begin{equation}
\label{eq:MPC_recursive}
\begin{array}{cl}
u^{opt,H}_t = & \underset{u^H_t \in \mathcal{U}^H,i=0,\cdots,T-1}{\argmin} ~  J\left(x_t^H, u_t^H \right), \\
  \text{s.t.} & x_{k+1|t}=f^*(x_{k|t},u_{k|t}), k=0,\cdots,H-1, \\
  & \bolds{x}_{|t}[t-h^\varphi+1,t+H] \models \bolds{G}_{[0,H-1]} \varphi, \\
  & \bolds{x}_{|t} [t+H-h^\varphi:t+H] \in L(\bolds{x}^{\varphi,x_0}[i:i+h^\varphi]).
  \end{array}
\end{equation}
Note that all MILPs can be aggregated into a single large MILP in the expense of additional constraints for capturing non-convexities of the terminal condition. 

Finally, consider $t<h^\varphi$. In this case, we require $H\ge h^\varphi$ and replace the interval $[t-h^\varphi+1,t+H-h^\varphi]$ with $[0,t+H-h^\varphi]$  for $t<h^\varphi$ in \eqref{eq:MPC_recursive}. 
For applications where initialization is not important in long-term (like traffic management), a simpler approach is to initialize the MPC from $t=h^\varphi-1$ and assume all previous state values are zero (hence all the past predicates are evaluated as true).  

\begin{remark}
In our previous work on STL MPC of linear systems \cite{sadraddini2015robust}, we did not establish recursive feasibility. In order to recover from possible infeasibility issues, we proposed maximizing  the STL robustness score (a negative value) whenever the MPC optimization problem became infeasible. Although recursive feasibility is guaranteed here, un-modeled disturbances and initial conditions outside $\mathcal{X}_0^{\max}$ can lead to infeasibility. The formalism in \cite{sadraddini2015robust} can be used to recover from infeasibility with minimal violation of the specification.
\end{remark}

\section{Application to Traffic Management}
In this section, we explain how to apply our methods to traffic management. First, the model that we use for traffic networks is explained, which is similar to the one in \cite{coogan2015controlling} but freeways are also modeled. Next, the monotonicity properties of the model are discussed. We show that there exists a congestion-free set in the state-space in which the traffic dynamics is monotone. Finally, a case study on a mixed urban and freeway network is presented. 

\label{sec:traffic}
\subsection{Model}
The topology of the network is described by a directed graph $(\mathcal{V},\mathcal{L})$, where $\mathcal{V}$ is the set of nodes and $\mathcal{L}$ is the set of edges. Each $l \in \mathcal{L}$ represents a one-way traffic link from tail node $\tau(l) \in \mathcal{V} \cup \emptyset$ to head node $\eta(l) \in \mathcal{V}$, where $\tau(l)=\emptyset$ stands for links originating from outside of the network. We distinguish between three types of links based on their control actuations: 1) $\mathcal{L}_r$: road links actuated by traffic lights, 2) $\mathcal{L}_o$: freeway on-ramps actuated by ramp meters, 3) $\mathcal{L}_f$: freeway segments which are not directly controlled.  Freeway off-ramps are treated the same way as the roads. Uncontrolled roads are also treated the same as freeways. We have $\mathcal{L}_r \cup \mathcal{L}_o \cup \mathcal{L}_f=\mathcal{L}$. 
\begin{remark}
Some works, e.g. \cite{como2015throughput}, consider control over freeway links by varying speed limits, which adds to the control power  but requires the existence of such a control architecture within the infrastructure. We do not consider this type of control actuation in this paper but it can easily be incorporated into our model by modeling freeways links the same way as on-ramps, where the speed limit becomes analogous to the ramp meter input.    
\end{remark}
 
The number of vehicles on link $l$ at time $t$ is represented by $x_{[l],t} \in [0,c_l]$, which is assumed to be a continuous variable, and $c_l$ is the capacity of $l$. In other words, vehicular movements are treated as fluid-like flow in our model. 
The number of vehicles that are able to flow out of $l$ in one time step, if link $l$ is actuated, is:
\begin{equation}
\label{eq:potflow}
q_{[l],t}:= \min \left \{x_{[l],t}, \bar{q}_l, \underset{\{l'  | \tau(l')=\eta(l) \}} \min \frac{\alpha_{l:l'} }{\beta_{l:l'}} (c_{l'}-x_{[l'],t}) \right \},
\end{equation} 
where $\bar{q}_l$ is the maximum outflow of link $l$ in one time step, which is physically related to the speed of the vehicles. The last argument in the minimizer determines the minimum supply available in the downstream links of $l$, where $\alpha_{l:l'} \in [0,1]$ is the capacity ratio of link $l'$ available to vehicles arriving from  link $l$ (typically portion of the lanes), $\beta_{l:l'} \in [0,1]$ is the ratio of the vehicles in $l$ that flow into $l'$ (turning ratio). For simplicity, we assume capacity ratios and turning ratios are constants. System state is represented by $x \in \mathbb{R}^n_+: \{x_{[l]}\}_{l \in \mathcal{L}}$,  where $n$ is the number of the links in the network. The state space is
$ \mathcal{X}:=\prod_{l \in \mathcal{L}} [0,c_l].$

A schematic diagram illustrating the behavior of $q_{[l]}$ with respect to the state variables $x_{[l]},x_{[l']}$- which is known as the \emph{fundamental diagram} in the traffic literature \cite{geroliminis2008existence} - is shown in Fig. \ref{fig:free}. The link flow drops if one (or more) of its downstream links do not have enough capacity to accommodate the incoming flow. In this case (when the last argument in \eqref{eq:potflow} is the minimizer), we say the traffic flow is \emph{congested}. Otherwise, the traffic flow is \emph{free}. 
This motivates the following definition:     
\begin{define}
The \emph{congestion-free set}, denoted by $\Pi$, is defined as the following region in the state space:
\begin{equation}
\begin{array}{cl}
\Pi:=\Big \{ & x\in \mathcal{X} {\Big |}\min\{x_{[l]},\bar{q}_l\} \le \frac{\alpha_{l:l'} }{\beta_{l:l'}} (c_{l'}-x_{[l']}), \\ & \forall l,{l'} \in \mathcal{L}, \tau(l')=\eta(l)  \Big\}.  
\end{array}
\label{eq:free}
\end{equation}
\end{define}
\begin{proposition}
The congestion-free set is a lower-set.
\end{proposition}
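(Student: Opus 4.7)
The plan is to unfold the definitions and check the defining inequality of $\Pi$ componentwise, exploiting the fact that the inequality is of the form ``something non-decreasing in $x_{[l]}$'' $\le$ ``something non-increasing in $x_{[l']}$''. So decreasing any state component can only help satisfy the inequality.

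Concretely, I would fix an arbitrary $x\in\Pi$ and $y\in\mathcal{X}$ with $y\preceq x$, and verify that $y\in\Pi$. For each pair of links $l,l'\in\mathcal{L}$ with $\tau(l')=\eta(l)$, I observe the two elementary monotonicities: (i) since $y_{[l]}\le x_{[l]}$ and $\min\{\cdot,\bar{q}_l\}$ is a non-decreasing function of its first argument, $\min\{y_{[l]},\bar{q}_l\}\le \min\{x_{[l]},\bar{q}_l\}$; and (ii) since $y_{[l']}\le x_{[l']}$ and $\alpha_{l:l'}/\beta_{l:l'}\ge 0$,
\begin{equation*}
\tfrac{\alpha_{l:l'}}{\beta_{l:l'}}(c_{l'}-x_{[l']}) \le \tfrac{\alpha_{l:l'}}{\beta_{l:l'}}(c_{l'}-y_{[l']}).
\end{equation*}
Chaining (i), the hypothesis $x\in\Pi$, and (ii) yields $\min\{y_{[l]},\bar{q}_l\}\le \tfrac{\alpha_{l:l'}}{\beta_{l:l'}}(c_{l'}-y_{[l']})$ for every admissible pair $(l,l')$, which is exactly the defining condition of $\Pi$ applied to $y$. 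Hence $L(x)\cap\mathcal{X}\subseteq \Pi$, and since $\mathcal{X}=\prod_l[0,c_l]$ is itself a lower-set (so $L(x)\subseteq\mathcal{X}$ for $x\in\mathcal{X}$), we conclude $L(x)\subseteq\Pi$.

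There is no real obstacle here; the proof is a two-line monotonicity argument. The only minor point to be careful about is that $L(x)$ is defined as a subset of $\mathbb{R}^n_+$ rather than of $\mathcal{X}$, so one should briefly note that $\mathcal{X}$, being a product of intervals $[0,c_l]$, is itself a lower-set of $\mathbb{R}^n_+$, which guarantees that any $y\in L(x)$ with $x\in\mathcal{X}$ automatically lies in $\mathcal{X}$ and thus the constraints $y_{[l']}\le c_{l'}$ used implicitly in step (ii) are satisfied.
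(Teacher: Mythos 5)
Your proof is correct and follows essentially the same argument as the paper: both establish the two elementary monotonicity facts $\min\{y_{[l]},\bar{q}_l\}\le\min\{x_{[l]},\bar{q}_l\}$ and $c_{l'}-x_{[l']}\le c_{l'}-y_{[l']}$ and chain them through the defining inequality of $\Pi$. Your added remark that $\mathcal{X}=\prod_l[0,c_l]$ is itself a lower-set (so $L(x)\subseteq\mathcal{X}$) is a small point of extra care the paper omits, but it does not change the argument.
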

\begin{IEEEproof}
Consider $x \in \Pi$ and any $x'\in L(x)$. For all $l,{l'} \in \mathcal{L}, \tau(l')=\eta(l)$, we have $\min\{x'_{l},\bar{q}_l\} \le \min\{x_{[l]},\bar{q}_l\} $ and  $ (c_{l'}-x_{[l]'}) \le (c_{l'}-x'_{l'})$. Therefore, $\min\{x'_{l},\bar{q}_l\} \le \frac{\alpha_{l:l'} }{\beta_{l:l'}} (c_{l'}-x'_{l'})$. Thus $x'\in \Pi$, which indicates $\Pi$ is a lower-set. 
\end{IEEEproof}
Note that $\Pi$ is, in general, non-convex. The predicate $(x \in \Pi)$ can be written as a Boolean logic formula over predicates in the form of \eqref{eq:predicate} as:
\begin{equation}
\begin{array}{l}
\displaystyle \bigwedge_{l,{l'} \in \mathcal{L}, \tau(l')=\eta(l)} \Big ( \left ( (x_{[l]} \le \bar{q}_l) \wedge (x_{[l]}+ \frac{\alpha_{l:l'} }{\beta_{l:l'}} x_{[l]'} \le \frac{\alpha_{l:l'} }{\beta_{l:l'}} c_{l'})  \right) \\ \vee 
 (q_{[l]}+ \frac{\alpha_{l:l'} }{\beta_{l:l'}} x_{[l]'} \le \frac{\alpha_{l:l'} }{\beta_{l:l'}} c_{l'})   \Big ).
\end{array}
\label{eq:free_predicate}
\end{equation}
Notice how the minimizer in \eqref{eq:free} is translated to a disjunction in \eqref{eq:free_predicate}.   

\begin{figure}[t]
\centering
\begin{tikzpicture}[xscale=1.2,yscale=1]
\draw  [fill=green!30]  (0,0) -- (1,1) -- (3,1) -- (3,0) -- cycle;
\draw  [fill=red!30]  (3,1) -- (4,0) -- (3,0) -- cycle;

\draw[color=black]  (0,0) -- (1,1) -- (3,1) -- (4,0) ;
\draw[dashed,color=black]  (3,1) -- (2.5,1.5);
\draw[dashed,color=black]  (0,1) -- (1,1);

\draw[->] (0,0) -- (0,1.5);
\draw[->] (0,0) -- (5,0);
\draw (5.6,0) node[] {$x_{[l]},x_{[l']}$};
\draw (0,1.7) node[] {$q_{[l]}$};
\draw (4,-0.2) node[] {${c}_{l'}$};

\draw (-0.3,1) node[] {$\bar{q}_l$};

\draw (3.6,1.5) node[] {$ \frac{\alpha_{l:l'} }{\beta_{l:l'}} (c_{l'}-x_{[l']})$};
\draw (1.8,0.5) node[] {free flow};
\draw (4.7,0.7) node[] {congested flow};
\draw[color=black,->]  (3.7,0.7) -- (3.3,0.3);
\end{tikzpicture}
\caption{The fundamental diagram. The flow out of link $l$ drops if the number of vehicles on the immediate downstream link $l'$ is close to its capacity. The  congestion is defined by this blocking behavior.}
\label{fig:free}
\end{figure}
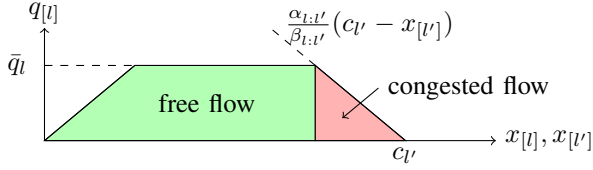

Now we explain the controls. The actuated flow of link $l$ at time $t$ is denoted by $\vec{q}_{[l],t}$, where we have the following relations:
\begin{equation}
\vec{q}_{[l],t}= \left \{ \begin{array}{ll} 
\displaystyle s_{[l],t} q_{[l],t}, & l \in \mathcal{L}_r, \\
\min \{ q_{[l],t}, r_{[l],t} \}, & l \in \mathcal{L}_o, \\
q_{[l],t}, & l \in \mathcal{L}_f, \\
\end{array}
\right.
\label{eq:actuated}
\end{equation}
where $s_{[l],t} \in \{0,1\}$ is the traffic light for link $l$, where $1$ (respectively, $0$) stands for green (respectively, red) light, and $r_{[l],t} \in \mathbb{R}_+$ is the ramp meter input for on-ramp $l$ at time $t$. Ramp meter input limits the number of vehicles that are allowed to enter the freeway in one time step. 
In order to disallow simultaneous green lights for links $l,l'$ (which are typically pair of links pointing toward a common intersection in perpendicular directions), we add the additional constraints  $s_{[l],t}+s_{[l]',t} \le 1$. 
In simple gridded networks, as in our case study network illustrated in Fig. \ref{fig:network}, it is more convenient to define phases for actuation in north-south or east-west directions that are unambiguously mapped to traffic lights for each individual link.
The evolution of the network is given by:
\begin{equation}
\label{eq:flow}
x_{[l],t+1}=x_{[l],t}-\vec{q}_{[l],t}+w_{[l],t}+\sum_{l', \eta(l')=\tau(l)} \beta_{l':l} \vec{q}_{[l'],t},
\end{equation} 
where $w_{[l],t}$ is the number of exogenous vehicles entering link $l$ at time $t$, which is viewed as the adversarial input.
The evolution relation above can be compacted into the form \eqref{eq:system}:
\begin{equation}
\label{eq:traffic}
x_{t+1}=f_{\text{traffic}}(x_t,u_t,w_t),
\end{equation} 
where $u_t$ and $w_t$ are the vector representations for control inputs (combination of traffic lights and ramp meters) and disturbances inputs, respectively.  
Note that $f_{\text{traffic}}$ represents a hybrid system which each mode is affine. The mode is determined by the control inputs and state (which determines the minimizer arguments).  
Some works consider nonlinear representations for the fundamental digram (Fig. \ref{fig:free}), but they still can be approximated using piecewise affine functions.

\subsection{Monotonicity}
\begin{theorem}
System \eqref{eq:traffic} is monotone in $\Pi$.
\label{theorem:monotonicty_of_traffic}  
\end{theorem}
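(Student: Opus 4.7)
The plan is to exploit the central feature of the congestion-free set $\Pi$: inside $\Pi$ the third (downstream-supply) argument of the minimizer in \eqref{eq:potflow} is never active, so the expression for the potential outflow collapses. Concretely, the first step is to show that for any $x \in \Pi$, condition \eqref{eq:free} forces $q_{[l]}(x) = \min\{x_{[l]},\bar{q}_l\}$ for every link $l$, which depends only on the single coordinate $x_{[l]}$ and is non-decreasing in it. Plugging this into \eqref{eq:actuated} shows that the actuated flow $\vec{q}_{[l]}(x,u)$ is also non-decreasing in $x_{[l]}$ alone for each fixed control $u$: for road links it equals $s_{[l]}\min\{x_{[l]},\bar q_l\}$, for on-ramps $\min\{\min\{x_{[l]},\bar q_l\},r_{[l]}\}$, and for freeway segments just $\min\{x_{[l]},\bar q_l\}$.

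The second step is the key sign-cancellation. In the update \eqref{eq:flow} the own-link contribution is $x_{[l]} - \vec{q}_{[l]}(x,u)$. From the formulas above, $\vec{q}_{[l]}(x,u) = \min\{x_{[l]}, \kappa_l(u)\}$ for some $u$-dependent cap $\kappa_l(u) \in \mathbb{R}_+$ (possibly $0$, as with a red light), and hence
\begin{equation*}
x_{[l]} - \vec{q}_{[l]}(x,u) = \max\{0,\, x_{[l]} - \kappa_l(u)\},
\end{equation*}
which is non-decreasing in $x_{[l]}$. So for two states $x',x\in\Pi$ with $x'\preceq x$ we get $x'_{[l]}-\vec{q}_{[l]}(x',u) \le x_{[l]}-\vec{q}_{[l]}(x,u)$.

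The third step assembles the update. Taking $x',x \in \Pi$ with $x' \preceq x$ and any $u,w$, inspect \eqref{eq:flow} componentwise: the term $x_{[l]}-\vec{q}_{[l]}$ is non-decreasing in $x_{[l]}$ by Step 2, $w_{[l]}$ is constant in the state, and each upstream contribution $\beta_{l':l}\vec{q}_{[l']}(x,u)$ is non-decreasing in $x_{[l']}$ alone by Step 1. Summing these monotone/constant terms preserves monotonicity, so $f_{\text{traffic},[l]}(x',u,w) \le f_{\text{traffic},[l]}(x,u,w)$ holds for every coordinate $l$, i.e.\ $f_{\text{traffic}}(x',u,w) \preceq f_{\text{traffic}}(x,u,w)$, as required by Definition~\ref{define:cooperative}.

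The main obstacle I anticipate is Step 2: monotonicity could plausibly fail because the outflow $\vec{q}_{[l]}$ appears with a negative sign in the update, which is why the rewriting $x_{[l]} - \vec{q}_{[l]} = \max\{0,x_{[l]}-\kappa_l(u)\}$ is essential and why the collapse of \eqref{eq:potflow} inside $\Pi$ to $\min\{x_{[l]},\bar q_l\}$ matters — if the downstream-supply term were active, $\vec{q}_{[l]}$ would instead depend on $x_{[l']}$ for downstream $l'$, breaking both the decoupling used in Step 3 and the scalar $\max$-rewriting used in Step 2. A small auxiliary remark in the write-up should note that monotonicity here is only claimed pointwise on $\Pi$ (as in the statement), not invariance of $\Pi$ under $f_{\text{traffic}}$; the latter must be enforced separately by the synthesis machinery of Sections~\ref{sec:finite}--\ref{sec:infinite}.
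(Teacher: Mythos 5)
Your proof is correct and follows essentially the same route as the paper's: both arguments observe that inside $\Pi$ the downstream-supply term in \eqref{eq:potflow} is never the minimizer, so $\vec{q}_{[l]}$ depends only on $x_{[l]}$ (for fixed $u$), and then reduce the claim to showing that the own-link term $x_{[l]}-\vec{q}_{[l]}$ is non-decreasing in $x_{[l]}$ while all remaining terms in \eqref{eq:flow} are additive and non-decreasing. Your unified rewriting $x_{[l]}-\vec{q}_{[l]} = \max\{0,\,x_{[l]}-\kappa_l(u)\}$ is a slightly cleaner packaging of the paper's case enumeration, but it is not a different argument.
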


\begin{IEEEproof}
Consider $x',x \in \Pi, x \preceq x'$. We show that $f_{\text{traffic}}(x,u,w) \preceq f_{\text{traffic}}(x',u,w), \forall w \in \mathcal{W}, \forall u \in \mathcal{U}$. Observe in \eqref{eq:flow} that we only need to verify is proving that $x_{[l]}-\vec{q}_{[l]}$ is a non-decreasing function of $x_{[l]}$ as all other terms are additive and non-decreasing with respect to $x$. Since $x,x' \in \Pi$, the last argument in \eqref{eq:potflow} is never the minimizer. Thus, for all $\l \in \mathcal{L}$, we have $x_{[l]}-\vec{q}_{[l]} \in \{0, x_{[l]}-r_{[l]},x_{[l]}-c_l,x_{[l]}\}$, depending on the mode of the system and actuations, which all are non-decreasing functions of $x_{[l]}$. Thus, $f_\text{traffic}$ is monotone in $\Pi$.   
\end{IEEEproof}
The primary objective in our traffic management approach is finding control policies such that the state is restricted to $\Pi$, which not only eliminates congestion, but also ensures that the system is monotone hence the methods of this paper become applicable. It is  worth to note that the traffic system becomes non-monotone when flow is congested in diverging junctions, as shown in \cite{Coogan2014}. This phenomena is attributed to the first-in-first-out (FIFO) nature of the model.    
By assuming fully non-FIFO models, system becomes monotone in the whole state space. For a more thorough discussion on physical aspects of monotonicity in traffic networks, see \cite{coogan2015mixed}.

The maximal system in \eqref{eq:traffic} corresponds to the scenario where each $w_l$ is equal to its maximum allowed value $w_l^*$. 


\subsection{Case Study}

\subsubsection*{Network}

Consider the network in Fig. \ref{fig:network}, which consists of urban roads (links 1-26, 27,29,31,33 and 49-53), freeway segments (links 35-48) and freeway on-ramps (links 28,30,32,34). The layout of the network illustrates a freeway passing by an urban area, which is common in many realistic traffic layouts. There are 14 intersections (nodes a-n) controlled by traffic lights. Each intersection has two modes of actuation: north-south (NS) and east-west (EW). There are four entries to the freeway (nodes o-r) that are regulated by ramp meters. We have $n=53$ and $\mathcal{U}=\mathbb{R}_+^4 \times \{0,1\}^{14}$. Vehicles arrive from links 1,6,11,15,19,23,35,42,49 and 52. The parameters of the network are shown in Table \ref{table:parameters}.

 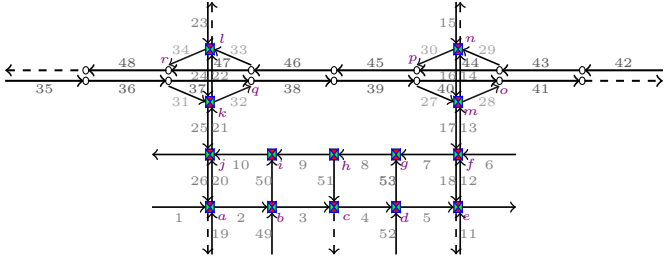
\begin{figure}[t]
 \vspace{10pt}
\centering
\tikzset{
  font={\fontsize{0.5pt}{2}\selectfont}}
\begin{tikzpicture}[xscale=0.55,yscale=0.7]
\draw [line width=0.22mm,->] (  -4.4 , 0 ) -- (  -3.1 , 0 );
\draw [line width=0.22mm,->] (  -2.9 , 0 ) -- (  -1.6 , 0 );
\draw [] (  -3.1 , -0.1  ) rectangle ( -2.9 , 0.1 );
\draw [blue, fill=red] ( -3.1 , 0.1 ) -- ( -2.9 , 0.1 ) -- ( -3 , 0 )-- cycle;
\draw [blue, fill=red] ( -3.1 , -0.1 ) -- ( -2.9 , -0.1 ) -- ( -3 , 0 )-- cycle;
\draw [blue, fill=green] ( -3.1 , -0.1 ) -- ( -3.1 , 0.1 ) -- ( -3 , 0 )-- cycle;
\draw [blue, fill=green] ( -2.9 , -0.1 ) -- ( -2.9 , 0.1 ) -- ( -3 , 0 )-- cycle;
\draw [line width=0.22mm,->] (  -1.4 , 0 ) -- (  -0.1 , 0 );
\draw [] (  -1.6 , -0.1  ) rectangle ( -1.4 , 0.1 );
\draw [blue, fill=red] ( -1.6 , 0.1 ) -- ( -1.4 , 0.1 ) -- ( -1.5 , 0 )-- cycle;
\draw [blue, fill=red] ( -1.6 , -0.1 ) -- ( -1.4 , -0.1 ) -- ( -1.5 , 0 )-- cycle;
\draw [blue, fill=green] ( -1.6 , -0.1 ) -- ( -1.6 , 0.1 ) -- ( -1.5 , 0 )-- cycle;
\draw [blue, fill=green] ( -1.4 , -0.1 ) -- ( -1.4 , 0.1 ) -- ( -1.5 , 0 )-- cycle;
\draw [line width=0.22mm,->] (  0.1 , 0 ) -- (  1.4 , 0 );
\draw [] (  -0.1 , -0.1  ) rectangle ( 0.1 , 0.1 );
\draw [blue, fill=red] ( -0.1 , 0.1 ) -- ( 0.1 , 0.1 ) -- ( 0 , 0 )-- cycle;
\draw [blue, fill=red] ( -0.1 , -0.1 ) -- ( 0.1 , -0.1 ) -- ( 0 , 0 )-- cycle;
\draw [blue, fill=green] ( -0.1 , -0.1 ) -- ( -0.1 , 0.1 ) -- ( 0 , 0 )-- cycle;
\draw [blue, fill=green] ( 0.1 , -0.1 ) -- ( 0.1 , 0.1 ) -- ( 0 , 0 )-- cycle;
\draw [line width=0.22mm,->] (  1.6 , 0 ) -- (  2.9 , 0 );
\draw [] (  1.4 , -0.1  ) rectangle ( 1.6 , 0.1 );
\draw [blue, fill=red] ( 1.4 , 0.1 ) -- ( 1.6 , 0.1 ) -- ( 1.5 , 0 )-- cycle;
\draw [blue, fill=red] ( 1.4 , -0.1 ) -- ( 1.6 , -0.1 ) -- ( 1.5 , 0 )-- cycle;
\draw [blue, fill=green] ( 1.4 , -0.1 ) -- ( 1.4 , 0.1 ) -- ( 1.5 , 0 )-- cycle;
\draw [blue, fill=green] ( 1.6 , -0.1 ) -- ( 1.6 , 0.1 ) -- ( 1.5 , 0 )-- cycle;
\draw [] (  2.9 , -0.1  ) rectangle ( 3.1 , 0.1 );
\draw [blue, fill=red] ( 2.9 , 0.1 ) -- ( 3.1 , 0.1 ) -- ( 3 , 0 )-- cycle;
\draw [blue, fill=red] ( 2.9 , -0.1 ) -- ( 3.1 , -0.1 ) -- ( 3 , 0 )-- cycle;
\draw [blue, fill=green] ( 2.9 , -0.1 ) -- ( 2.9 , 0.1 ) -- ( 3 , 0 )-- cycle;
\draw [blue, fill=green] ( 3.1 , -0.1 ) -- ( 3.1 , 0.1 ) -- ( 3 , 0 )-- cycle;
\draw [line width=0.22mm,->] (  3.1 , 0 ) -- (  4.4 , 0 );
\draw [line width=0.22mm,<-] (  -4.4 , 1 ) -- (  -3.1 , 1 );
\draw [line width=0.22mm,<-] (  -2.9 , 1 ) -- (  -1.6 , 1 );
\draw [] (  -3.1 , 0.9  ) rectangle ( -2.9 , 1.1 );
\draw [blue, fill=red] ( -3.1 , 1.1 ) -- ( -2.9 , 1.1 ) -- ( -3 , 1 )-- cycle;
\draw [blue, fill=red] ( -3.1 , 0.9 ) -- ( -2.9 , 0.9 ) -- ( -3 , 1 )-- cycle;
\draw [blue, fill=green] ( -3.1 , 0.9 ) -- ( -3.1 , 1.1 ) -- ( -3 , 1 )-- cycle;
\draw [blue, fill=green] ( -2.9 , 0.9 ) -- ( -2.9 , 1.1 ) -- ( -3 , 1 )-- cycle;
\draw [line width=0.22mm,<-] (  -1.4 , 1 ) -- (  -0.1 , 1 );
\draw [] (  -1.6 , 0.9  ) rectangle ( -1.4 , 1.1 );
\draw [blue, fill=red] ( -1.6 , 1.1 ) -- ( -1.4 , 1.1 ) -- ( -1.5 , 1 )-- cycle;
\draw [blue, fill=red] ( -1.6 , 0.9 ) -- ( -1.4 , 0.9 ) -- ( -1.5 , 1 )-- cycle;
\draw [blue, fill=green] ( -1.6 , 0.9 ) -- ( -1.6 , 1.1 ) -- ( -1.5 , 1 )-- cycle;
\draw [blue, fill=green] ( -1.4 , 0.9 ) -- ( -1.4 , 1.1 ) -- ( -1.5 , 1 )-- cycle;
\draw [line width=0.22mm,<-] (  0.1 , 1 ) -- (  1.4 , 1 );
\draw [] (  -0.1 , 0.9  ) rectangle ( 0.1 , 1.1 );
\draw [blue, fill=red] ( -0.1 , 1.1 ) -- ( 0.1 , 1.1 ) -- ( 0 , 1 )-- cycle;
\draw [blue, fill=red] ( -0.1 , 0.9 ) -- ( 0.1 , 0.9 ) -- ( 0 , 1 )-- cycle;
\draw [blue, fill=green] ( -0.1 , 0.9 ) -- ( -0.1 , 1.1 ) -- ( 0 , 1 )-- cycle;
\draw [blue, fill=green] ( 0.1 , 0.9 ) -- ( 0.1 , 1.1 ) -- ( 0 , 1 )-- cycle;
\draw [line width=0.22mm,<-] (  1.6 , 1 ) -- (  2.9 , 1 );
\draw [] (  1.4 , 0.9  ) rectangle ( 1.6 , 1.1 );
\draw [blue, fill=red] ( 1.4 , 1.1 ) -- ( 1.6 , 1.1 ) -- ( 1.5 , 1 )-- cycle;
\draw [blue, fill=red] ( 1.4 , 0.9 ) -- ( 1.6 , 0.9 ) -- ( 1.5 , 1 )-- cycle;
\draw [blue, fill=green] ( 1.4 , 0.9 ) -- ( 1.4 , 1.1 ) -- ( 1.5 , 1 )-- cycle;
\draw [blue, fill=green] ( 1.6 , 0.9 ) -- ( 1.6 , 1.1 ) -- ( 1.5 , 1 )-- cycle;
\draw [] (  2.9 , 0.9  ) rectangle ( 3.1 , 1.1 );
\draw [blue, fill=red] ( 2.9 , 1.1 ) -- ( 3.1 , 1.1 ) -- ( 3 , 1 )-- cycle;
\draw [blue, fill=red] ( 2.9 , 0.9 ) -- ( 3.1 , 0.9 ) -- ( 3 , 1 )-- cycle;
\draw [blue, fill=green] ( 2.9 , 0.9 ) -- ( 2.9 , 1.1 ) -- ( 3 , 1 )-- cycle;
\draw [blue, fill=green] ( 3.1 , 0.9 ) -- ( 3.1 , 1.1 ) -- ( 3 , 1 )-- cycle;
\draw [line width=0.22mm,<-] (  3.1 , 1 ) -- (  4.4 , 1 );
\draw [line width=0.22mm,->] (  -1.5 , 0.1 ) -- (  -1.5 , 0.9 );
\draw [line width=0.22mm,<-] (  0 , 0.1 ) -- (  0 , 0.9 );
\draw [line width=0.22mm,->] (  1.5 , 0.1 ) -- (  1.5 , 0.9 );
\draw [line width=0.22mm,->] (  -1.5 , -0.9 ) -- (  -1.5 , -0.1 );
\draw [line width=0.22mm,<-,dashed] (  0 , -0.9 ) -- (  0 , -0.1 );
\draw [line width=0.22mm,->] (  1.5 , -0.9 ) -- (  1.5 , -0.1 );
\draw [line width=0.22mm,<-,dashed] (  -3.05 , -0.9 ) -- (  -3.05 , -0.1 );
\draw [line width=0.22mm,->] (  -2.95 , -0.9 ) -- (  -2.95 , -0.1 );
\draw [line width=0.22mm,<-] (  -3.05 , 0.1 ) -- (  -3.05 , 0.9 );
\draw [line width=0.22mm,->] (  -2.95 , 0.1 ) -- (  -2.95 , 0.9 );
\draw [line width=0.22mm,<-] (  -3.05 , 1.1 ) -- (  -3.05 , 1.9 );
\draw [line width=0.22mm,->] (  -2.95 , 1.1 ) -- (  -2.95 , 1.9 );
\draw [line width=0.22mm,<-] (  -3.05 , 2.1 ) -- (  -3.05 , 2.9 );
\draw [line width=0.22mm,->] (  -2.95 , 2.1 ) -- (  -2.95 , 2.9 );
\draw [line width=0.22mm,<-] (  -3.05 , 3.1 ) -- (  -3.05 , 3.9 );
\draw [line width=0.22mm,->,dashed] (  -2.95 , 3.1 ) -- (  -2.95 , 3.9 );
\draw [line width=0.22mm,<-,dashed] (  2.95 , -0.9 ) -- (  2.95 , -0.1 );
\draw [line width=0.22mm,->] (  3.05 , -0.9 ) -- (  3.05 , -0.1 );
\draw [line width=0.22mm,<-] (  2.95 , 0.1 ) -- (  2.95 , 0.9 );
\draw [line width=0.22mm,->] (  3.05 , 0.1 ) -- (  3.05 , 0.9 );
\draw [line width=0.22mm,<-] (  2.95 , 1.1 ) -- (  2.95 , 1.9 );
\draw [line width=0.22mm,->] (  3.05 , 1.1 ) -- (  3.05 , 1.9 );
\draw [line width=0.22mm,<-] (  2.95 , 2.1 ) -- (  2.95 , 2.9 );
\draw [line width=0.22mm,->] (  3.05 , 2.1 ) -- (  3.05 , 2.9 );
\draw [line width=0.22mm,<-] (  2.95 , 3.1 ) -- (  2.95 , 3.9 );
\draw [line width=0.22mm,->,dashed] (  3.05 , 3.1 ) -- (  3.05 , 3.9 );
\draw [] (  -3.1 , 1.9  ) rectangle ( -2.9 , 2.1 );
\draw [blue, fill=red] ( -3.1 , 2.1 ) -- ( -2.9 , 2.1 ) -- ( -3 , 2 )-- cycle;
\draw [blue, fill=red] ( -3.1 , 1.9 ) -- ( -2.9 , 1.9 ) -- ( -3 , 2 )-- cycle;
\draw [blue, fill=green] ( -3.1 , 1.9 ) -- ( -3.1 , 2.1 ) -- ( -3 , 2 )-- cycle;
\draw [blue, fill=green] ( -2.9 , 1.9 ) -- ( -2.9 , 2.1 ) -- ( -3 , 2 )-- cycle;
\draw [] (  -3.1 , 2.9  ) rectangle ( -2.9 , 3.1 );
\draw [blue, fill=red] ( -3.1 , 3.1 ) -- ( -2.9 , 3.1 ) -- ( -3 , 3 )-- cycle;
\draw [blue, fill=red] ( -3.1 , 2.9 ) -- ( -2.9 , 2.9 ) -- ( -3 , 3 )-- cycle;
\draw [blue, fill=green] ( -3.1 , 2.9 ) -- ( -3.1 , 3.1 ) -- ( -3 , 3 )-- cycle;
\draw [blue, fill=green] ( -2.9 , 2.9 ) -- ( -2.9 , 3.1 ) -- ( -3 , 3 )-- cycle;
\draw [] (  2.9 , 1.9  ) rectangle ( 3.1 , 2.1 );
\draw [blue, fill=red] ( 2.9 , 2.1 ) -- ( 3.1 , 2.1 ) -- ( 3 , 2 )-- cycle;
\draw [blue, fill=red] ( 2.9 , 1.9 ) -- ( 3.1 , 1.9 ) -- ( 3 , 2 )-- cycle;
\draw [blue, fill=green] ( 2.9 , 1.9 ) -- ( 2.9 , 2.1 ) -- ( 3 , 2 )-- cycle;
\draw [blue, fill=green] ( 3.1 , 1.9 ) -- ( 3.1 , 2.1 ) -- ( 3 , 2 )-- cycle;
\draw [] (  2.9 , 2.9  ) rectangle ( 3.1 , 3.1 );
\draw [blue, fill=red] ( 2.9 , 3.1 ) -- ( 3.1 , 3.1 ) -- ( 3 , 3 )-- cycle;
\draw [blue, fill=red] ( 2.9 , 2.9 ) -- ( 3.1 , 2.9 ) -- ( 3 , 3 )-- cycle;
\draw [blue, fill=green] ( 2.9 , 2.9 ) -- ( 2.9 , 3.1 ) -- ( 3 , 3 )-- cycle;
\draw [blue, fill=green] ( 3.1 , 2.9 ) -- ( 3.1 , 3.1 ) -- ( 3 , 3 )-- cycle;
\draw ( -6 , 2.4 ) circle (  0.07 );
\draw ( -6 , 2.6 ) circle (  0.07 );
\draw [line width=0.27mm,->] (  -7.95 , 2.4 ) -- (  -6.05 , 2.4 );
\draw [line width=0.27mm,<-,dashed] (  -7.95 , 2.6 ) -- (  -6.05 , 2.6 );
\draw ( -4 , 2.4 ) circle (  0.07 );
\draw ( -4 , 2.6 ) circle (  0.07 );
\draw [line width=0.27mm,->] (  -5.95 , 2.4 ) -- (  -4.05 , 2.4 );
\draw [line width=0.27mm,<-] (  -5.95 , 2.6 ) -- (  -4.05 , 2.6 );
\draw ( -2 , 2.4 ) circle (  0.07 );
\draw ( -2 , 2.6 ) circle (  0.07 );
\draw [line width=0.27mm,->] (  -3.95 , 2.4 ) -- (  -2.05 , 2.4 );
\draw [line width=0.27mm,<-] (  -3.95 , 2.6 ) -- (  -2.05 , 2.6 );
\draw ( 0 , 2.4 ) circle (  0.07 );
\draw ( 0 , 2.6 ) circle (  0.07 );
\draw [line width=0.27mm,->] (  -1.95 , 2.4 ) -- (  -0.05 , 2.4 );
\draw [line width=0.27mm,<-] (  -1.95 , 2.6 ) -- (  -0.05 , 2.6 );
\draw ( 2 , 2.4 ) circle (  0.07 );
\draw ( 2 , 2.6 ) circle (  0.07 );
\draw [line width=0.27mm,->] (  0.05 , 2.4 ) -- (  1.95 , 2.4 );
\draw [line width=0.27mm,<-] (  0.05 , 2.6 ) -- (  1.95 , 2.6 );
\draw ( 4 , 2.4 ) circle (  0.07 );
\draw ( 4 , 2.6 ) circle (  0.07 );
\draw [line width=0.27mm,->] (  2.05 , 2.4 ) -- (  3.95 , 2.4 );
\draw [line width=0.27mm,<-] (  2.05 , 2.6 ) -- (  3.95 , 2.6 );
\draw ( 6 , 2.4 ) circle (  0.07 );
\draw ( 6 , 2.6 ) circle (  0.07 );
\draw [line width=0.27mm,->] (  4.05 , 2.4 ) -- (  5.95 , 2.4 );
\draw [line width=0.27mm,<-] (  4.05 , 2.6 ) -- (  5.95 , 2.6 );
\draw [line width=0.27mm,->,dashed] (  6.05 , 2.4 ) -- (  7.95 , 2.4 );
\draw [line width=0.27mm,<-] (  6.05 , 2.6 ) -- (  7.95 , 2.6 );

\draw [line width=0.215mm,->,black!90] (  -2.9 , 2.0 ) -- (  -2.0 , 2.3 );
\draw [line width=0.215mm,<-] (  -2.9 , 3 ) -- (  -2 , 2.7 );
\draw [line width=0.215mm,->,black!90] (  -3.1 , 3 ) -- (  -4 , 2.7 );
\draw [line width=0.215mm,<-] (  -3.1 , 2 ) -- (  -4 , 2.3 );

\draw [line width=0.215mm,<-] (  2.9 , 2 ) -- (  2 , 2.3 );
\draw [line width=0.215mm,->,black!90] (  2.9 , 3 ) -- (  2 , 2.7 );
\draw [line width=0.215mm,<-] (  3.1 , 3 ) -- (  4 , 2.7 );
\draw [line width=0.215mm,->,black!90] (  3.1 , 2 ) -- (  4 , 2.3 );

\draw (-3.75,-0.2) node[black!50] {$1$};
\draw (-2.25,-0.2) node[black!50] {$2$};
\draw (-0.75,-0.2) node[black!50] {$3$};
\draw (0.75,-0.2) node[black!50] {$4$};
\draw (2.25,-0.2) node[black!50] {$5$};

\draw (3.75,0.8) node[black!50] {$6$};
\draw (2.25,0.8) node[black!50] {$7$};
\draw (0.75,0.8) node[black!50] {$8$};
\draw (-0.75,0.8) node[black!50] {$9$};
\draw (-2.25,0.8) node[black!50] {$10$};

\draw (3.25,-0.5) node[black!50] {$11$};
\draw (3.25,0.5) node[black!50] {$12$};
\draw (3.25,1.5) node[black!50] {$13$};
\draw (3.25,2.5) node[black!50] {$14$};

\draw (2.75,3.5) node[black!50] {$15$};
\draw (2.75,2.5) node[black!50] {$16$};
\draw (2.75,1.5) node[black!50] {$17$};
\draw (2.75,0.5) node[black!50] {$18$};

\draw (-2.75,-0.5) node[black!50] {$19$};
\draw (-2.75,0.5) node[black!50] {$20$};
\draw (-2.75,1.5) node[black!50] {$21$};
\draw (-2.75,2.5) node[black!50] {$22$};

\draw (-3.25,3.5) node[black!50] {$23$};
\draw (-3.25,2.5) node[black!50] {$24$};
\draw (-3.25,1.5) node[black!50] {$25$};
\draw (-3.25,0.5) node[black!50] {$26$};

\draw (2.3,2.0) node[black!35] {$27$};
\draw (3.7,2.0) node[black!35] {$28$};
\draw (3.7,3.0) node[black!35] {$29$};
\draw (2.3,3) node[black!35] {$30$};

\draw (-3.7,2.0) node[black!35] {$31$};
\draw (-2.3,2.0) node[black!35] {$32$};
\draw (-2.3,3.0) node[black!35] {$33$};
\draw (-3.7,3) node[black!35] {$34$};

\draw (-7,2.25) node[black!65] {$35$};
\draw (-5,2.25) node[black!65] {$36$};
\draw (-3.3,2.25) node[black!65] {$37$};
\draw (-1,2.25) node[black!65] {$38$};
\draw (1,2.25) node[black!65] {$39$};
\draw (2.7,2.25) node[black!65] {$40$};
\draw (5,2.25) node[black!65] {$41$};

\draw (7,2.75) node[black!65] {$42$};
\draw (5,2.75) node[black!65] {$43$};
\draw (3.3,2.75) node[black!65] {$44$};
\draw (1,2.75) node[black!65] {$45$};
\draw (-1,2.75) node[black!65] {$46$};
\draw (-2.7,2.75) node[black!65] {$47$};
\draw (-5,2.75) node[black!65] {$48$};

\draw (-1.7,-0.5) node[black!50] {$49$};
\draw (-1.7,0.5) node[black!50] {$50$};
\draw (-0.2,0.5) node[black!50] {$51$};
\draw (1.3,-0.5) node[black!50] {$52$};
\draw (1.3,0.5) node[black!50] {$53$};

\draw (1.3,0.5) node[black!50] {$53$};

\draw (-2.7,-0.2) node[red!50!blue] {$a$};
\draw (-1.3,-0.2) node[red!50!blue] {$b$};
\draw (0.3,-0.2) node[red!50!blue] {$c$};
\draw (1.7,-0.2) node[red!50!blue] {$d$};
\draw (3.2,-0.2) node[red!50!blue] {$e$};
\draw (-2.7,0.8) node[red!50!blue] {$j$};
\draw (-1.3,0.8) node[red!50!blue] {$i$};
\draw (0.3,0.8) node[red!50!blue] {$h$};
\draw (1.7,0.8) node[red!50!blue] {$g$};
\draw (3.3,0.8) node[red!50!blue] {$f$};
\draw (-2.7,1.8) node[red!50!blue] {$k$};
\draw (-2.7,3.2) node[red!50!blue] {$l$};
\draw (3.3,1.8) node[red!50!blue] {$m$};
\draw (3.3,3.2) node[red!50!blue] {$n$};
\draw (-1.9,2.2) node[red!50!blue] {$q$};
\draw (4.1,2.2) node[red!50!blue] {$o$};
\draw (1.9,2.8) node[red!50!blue] {$p$};
\draw (-4.1,2.8) node[red!50!blue] {$r$};

\end{tikzpicture}
\caption{Traffic management case study: A network of freeways and urban roads. There are 14 intersections controlled by traffic lights and 4 ramp meters.}
\label{fig:network}
\end{figure}

\begin{table}[t]
\centering
\caption{Parameters of the network in Fig. \ref{fig:network}}
\resizebox{0.49\textwidth}{!}{%
\begin{tabular}{|c|c|}
\hline 
links & parameters  \\
\hline
$1-26,49-53$ & $\bar{q}_l=15,c_l=40$ \\ \hline
$27-34$ & $\bar{q}_l=15,c_l=30$ \\ \hline
$35-48$ & $\bar{q}_l=40,c_l=60$ \\
\hline \hline
Turning ratios & value \\
\hline
$\begin{array}{l} \beta_{ 2 : 50 }, \beta_{ 4 : 53 }, \beta_{ 8 : 51 }, \beta_{ 12 : 7 }, \beta_{ 13 : 28 }, \beta_{ 15 : 30 }, \beta_{ 16 : 28 }, \\ \beta_{ 21 : 32 }, \beta_{ 24 : 32 }, \beta_{ 26 : 2 }, \beta_{ 36 : 31 }, \beta_{ 36 : 33 }, \beta_{ 39 : 27 }, \beta_{ 43 : 29 }
\end{array} $ & 0.2 \\
\hline
$\begin{array}{l} \beta_{ 5 : 12 }, \beta_{ 6 : 13 }, \beta_{ 6 : 18 }, \beta_{ 10 : 21 }, \beta_{ 10 : 26 }
 \end{array} $ & 0.3 \\

\hline
$\beta_{ 1 : 20 }, \beta_{ 6 : 7 } $ & 0.4 \\

\hline

$ \begin{array}{l}
\beta_{ 1 : 2 }, \beta_{ 11 : 12 }, \beta_{ 14 : 30 }, \beta_{ 17 : 7 }, \beta_{ 17 : 18 }, \beta_{ 19 : 2 }, \beta_{ 19 : 20 },\\ \beta_{ 22 : 34 }, \beta_{ 23 : 24 }, \beta_{ 23 : 34 }, \beta_{ 27 : 14 }, \beta_{ 27 : 17 }, \beta_{ 29 : 16 }, \beta_{ 31 : 22 },\\ \beta_{ 31 : 25 }, \beta_{ 33 : 24 }, \beta_{ 49 : 3 }, \beta_{ 49 : 50 }, \beta_{ 51 : 4 }, \beta_{ 52 : 5 }, \beta_{ 52 : 53 },
    \end{array} $ & 0.5 \\
\hline

$\begin{array}{l} \beta_{ 2 : 3 }, \beta_{ 3 : 4 }, \beta_{ 4 : 5 }, \beta_{ 8 : 9 }, \beta_{ 12 : 13 }, \beta_{ 13 : 14 }, \beta_{ 15 : 16 }, \beta_{ 16 : 17 }, \beta_{ 20 : 21 }, \\ \beta_{ 21 : 22 }, \beta_{ 24 : 25 }, \beta_{ 25 : 26 },  \beta_{ 36 : 37 }, \beta_{ 39 : 40 }, \beta_{ 43 : 44 }, \beta_{ 46 : 47 },

 \end{array} $ & 0.8 \\
\hline \hline
Capacity ratios & value \\
\hline
$\begin{array}{l} \alpha_{19:2}, \alpha_{26:2}, \alpha_{17:7}, \alpha_{12:7}, \alpha_{13:28},  \alpha_{16:28} \\  \alpha_{14:30}, \alpha_{15:30}, \alpha_{21:32}, \alpha_{24:32}, \alpha_{22:34}, \alpha_{23:34}\end{array} $ & 0.5 \\
\hline
\hline
\multicolumn{2}{|c|}{Disturbances (arrival rates)} \\
\hline
\multicolumn{2}{|c|}{$\begin{array}{c} w_1^*=w_6^*=4.5, w_{11}^*=w_{15}^*=w_{19}^*=5, w_{23}^*=6 \\ w_{35}^*=w_{42}^*=20, w_{49}^*=w_{52}^*=2 \end{array}$}
\\

\hline
\end{tabular}
}
\vspace{1pt}
\label{table:parameters}
\end{table}

\subsubsection*{\userfinal{Specification}}
As mentioned earlier, the primary objective is keeping the state in the congestion-free set. In addition, since the demand for the north-south side roads (links 49-53) is smaller than the traffic in the east-west roads, we add a timed liveness requirement for the traffic flow on links 49-53:
\begin{equation*}
\psi= \bigwedge_{l=49,50,\cdots,53} (x_{[l]} \ge 5) \Rightarrow \bolds{F}_{[0,3]} (x_{[l]}\le 5),
\end{equation*}
which states that ``if the number of vehicles on any of the north-south side roads exceeds 5, their flow is eventually actuated within three time units ahead".  
The global specification is given as:
\begin{equation}
\phi=\bolds{G}_{[0,\infty]} \left ( (x\in \Pi) \wedge \psi \right).
\end{equation}
Note that $h^\varphi=3$, $\varphi=(x\in \Pi) \wedge \psi$. 

\vspace{2pt}

\subsubsection*{Open-loop Control Policy}
We use Theorem \ref{theorem:phi-sequence}. The shortest $\phi$-sequence that we found for this problem has $T=5, T_0=0$. The corresponding MILP had 2357 variables (of which 1061 were binary) and 4037 constraints \footnote{The scripts for this case study are available in \texttt{http://blogs.bu.edu/sadra/format-monotone}}, which is solved using the Gurobi MILP solver in less than 6 seconds on a dual core 3.0 GHz MacBook Pro. The cost is set to zero in order to just check for feasibility. Even though finding an optimal solution and checking for feasibility of a MILP have the same theoretical complexity, the latter is executed much faster in practice. For instance, finding a $\phi$-sequence, while \userfinal{minimizing or maximizing $\sum_{k=0}^{7} \|x^\phi_k\|_1$ both took more than 20 minutes}. Note that it is virtually intractable to attack a problem of this size (53 dimensional state) using any method that involves state-space discretization, such as the method in \cite{coogan2015efficient} (e.g., if each state-component is partitioned into 2 intervals, the finite-state problem size will be $2^{53}$).

Monotonicity implies that any demand set $\mathcal{W}$ for which there exists a solution to Problem \ref{prob:global} is a lower-set. The set corresponding to the values at the bottom of Table \ref{table:parameters} is one of them. Table \ref{table_scenario} shows \userfinal{results on existence of $\phi$-sequences} for some other demand scenarios. Computation times for solving a MILP do not demonstrate a generic behavior. For the rest of this section, the numerical examples are reported for the values in Table \ref{table:parameters}.    

\begin{table}[t]
\centering
\revtwo{
\caption{Existence of $\phi$-sequences for the network in Fig. \ref{fig:network}}
\resizebox{0.49\textwidth}{!}{%
\begin{tabular}{|c|c|c|c|}
\hline 
Demand Changes from Table \ref{table:parameters} & $T$ & Existence & Comp. Time (s) \\
\hline
- & 5 & yes & 6 \\ \hline
- & 6 & no & 4 \\ \hline
- & 7 & no & 10 \\ \hline
- & 8 & no & 75 \\ \hline
- & 9 & no & 11 \\ \hline
- & 10 & yes & 36 \\ \hline
$w^*_1=w^*_6=3,w^*_{11}=w^*_{15}=w^*_{19}=6$  & 5 & yes & 5 \\ \hline
$w^*_1=w^*_6=4,w^*_{11}=w^*_{15}=w^*_{19}=6$  & 5 & no & 0.5 \\ \hline
$w^*_1=w^*_6=1.5,w^*_{49}=w^*_{52}=3.5$  & 5 & yes & 16 \\ \hline
$w^*_1=w^*_6=7.5,w^*_{11}=w^*_{15}=w^*_{19}=w^*_{23}=2$  & 6 & yes & 9 \\ \hline
$w^*_1=w^*_6=9,w^*_{11}=w^*_{15}=w^*_{19}=w^*_{23}=1$  & 5 & yes & 4 \\ \hline
$w^*_1=w^*_6=10,w^*_{11}=w^*_{15}=w^*_{19}=w^*_{23}=0$  & 30 & no & 3.5 \\ \hline
$w^*_{15}=w^*_{23}=8,w^*_{35}=w^*_{42}=10$  & 6 & yes & 23 \\ \hline
$w^*_{15}=w^*_{23}=0,w^*_{35}=w^*_{42}=30$  & 5 & yes & 4 \\ \hline
\end{tabular}
}
}
\vspace{1pt}
\label{table_scenario}
\end{table}

The control values in the $\phi$-sequence are shown in Table \ref{table:controls}. As stated in Theorem \ref{theorem:phi-sequence}, starting from an initial condition in $L(x_0$), applying the open-loop control policy \eqref{eq:phi} guarantees satisfaction of the specification. 
In other words, after applying the initialization segment, the repetitive controls in Table \ref{table:controls} become a fixed time-table for the inputs of the traffic lights and the ramp meters. Starting from $x_0$, which is a $53$-dimensional vector, we apply \eqref{eq:phi} using the values in Table \ref{table:controls}. The trajectory of the maximal system is shown in Fig. \ref{fig:case_results} [Top]. The traffic signals are coordinated such that the traffic flows free of congestion. The black dashed lines represent the capacity of the links, and the dashed line in the fourth figure (from the left) represents the threshold for the liveness sub-specification ($\psi$). It is observed that all the state values for side road links (49-53) persistently fall below the threshold. The robustness values for $(x\in \Pi)$ and $\psi$ are shown in the fifth figure. As mentioned earlier, robustness corresponds to the minimum volume of vehicles that the system is away from congestion, or violating the specification. The robustness values are always positive, indicating satisfaction.

 As stated in Theorem \ref{theorem:orbit}, the trajectory of the maximal system converges to a periodic orbit. 
It is worth to \userfinal{note} that the number of vehicles on freeway links is significantly smaller than its capacity, which is attributed to the fact that the number designated for $\bar{q}$ (related to the maximum speed) of freeway links is relatively large (30, as opposed to 15 for roads). Therefore, freeway links are utilized in a way that there is enough space for high speed non-congested flow.   

\begin{table}[t]
\centering
\caption{$\phi$-sequence in the case study}
\resizebox{0.49\textwidth}{!}{%
\begin{tabular}{|c||c|c|c|c|c|c|c|c|}
\hline
- & \multicolumn{3}{|c|}{Initialization} & \multicolumn{5}{|c|}{Repetitive Controls} \\
\hline
node & $u_0^\phi$ & $u_1^\phi$ & $u_2^\phi$ & $u_3^\phi$ & $u_4^\phi$ & $u_5^\phi$ & $u_6^\phi$ & $u_7^\phi$ \\
\hline
\hline 
 {\color{red!50!blue} $a$} &  $EW$ &   $NS$ &   $NS$ &   $NS$ &   $EW$ &   $EW$ &   $NS$ &   $NS$    \\ 
\hline 
 {\color{red!50!blue} $b$} &  $NS$ &   $EW$ &   $EW$ &   $EW$ &   $NS$ &   $NS$ &   $EW$ &   $EW$    \\ 
\hline 
 {\color{red!50!blue} $c$} &  $EW$ &   $NS$ &   $NS$ &   $EW$ &   $EW$ &   $EW$ &   $NS$ &   $NS$    \\ 
\hline 
 {\color{red!50!blue} $d$} &  $EW$ &   $NS$ &   $EW$ &   $NS$ &   $EW$ &   $EW$ &   $NS$ &   $EW$    \\ 
\hline 
 {\color{red!50!blue} $e$} &  $EW$ &   $EW$ &   $NS$ &   $NS$ &   $NS$ &   $EW$ &   $EW$ &   $NS$    \\ 
\hline 
 {\color{red!50!blue} $f$} &  $NS$ &   $EW$ &   $NS$ &   $EW$ &   $NS$ &   $NS$ &   $EW$ &   $NS$    \\ 
\hline 
 {\color{red!50!blue} $g$} &  $NS$ &   $EW$ &   $NS$ &   $EW$ &   $EW$ &   $NS$ &   $EW$ &   $NS$    \\ 

\hline 
 {\color{red!50!blue} $h$} &  $EW$ &   $NS$ &   $EW$ &   $EW$ &   $EW$ &   $EW$ &   $NS$ &   $EW$   \\ 

\hline 
 {\color{red!50!blue} $i$} &  $EW$ &   $NS$ &   $EW$ &   $EW$ &   $NS$ &   $EW$ &   $NS$ &   $EW$   \\ 

\hline 
 {\color{red!50!blue} $j$} &  $EW$ &   $EW$ &   $NS$ &   $NS$ &   $EW$ &   $EW$ &   $EW$ &   $NS$   \\ 

\hline 
 {\color{red!50!blue} $k$} &  $EW$ &   $EW$ &   $NS$ &   $NS$ &   $NS$ &   $EW$ &   $EW$ &   $NS$   \\ 

\hline 
 {\color{red!50!blue} $l$} &  $NS$ &   $EW$ &   $EW$ &   $NS$ &   $NS$ &   $NS$ &   $EW$ &   $EW$   \\ 

\hline 
 {\color{red!50!blue} $m$} &  $EW$ &   $NS$ &   $NS$ &   $EW$ &   $NS$ &   $EW$ &   $NS$ &   $NS$   \\ 

\hline 
 {\color{red!50!blue} $n$} &  $NS$ &   $NS$ &   $EW$ &   $NS$ &   $EW$ &   $NS$ &   $NS$ &   $EW$   \\ 

\hline {\color{red!50!blue} $o$} &  $ 0.0 $ &   $ 0.0 $ &   $ 0.0 $ &   $ 12.8 $ &   $ 0.0 $ &   $ 0.0 $ &   $ 0.0 $ &   $ 0.0 $   \\ 

\hline {\color{red!50!blue} $p$} &  $ 4.0 $ &   $ 14.0 $ &   $ 0.0 $ &   $ 9.5 $ &   $ 0.0 $ &   $ 4.0 $ &   $ 11.5 $ &   $ 0.0 $   \\ 

\hline {\color{red!50!blue} $q$} &  $ 0.0 $ &   $ 0.0 $ &   $ 10.0 $ &   $ 0.0 $ &   $ 2.5 $ &   $ 0.0 $ &   $ 0.0 $ &   $ 10.0 $   \\ 
\hline {\color{red!50!blue} $r$} &  $ 5.5 $ &   $ 0.0 $ &   $ 4.0 $ &   $ 14.0 $ &   $ 11.5 $ &   $ 5.5 $ &   $ 0.0 $ &   $ 4.0 $   \\ \hline
\end{tabular}
}
\vspace{1pt}
\label{table:controls}
\end{table}

\subsubsection*{Robust MPC}
Here it is assumed that the controller has full state knowledge. We apply the techniques developed in Sec. \ref{sec:mpc}. Using the result from the previous section, the set $\Omega_{\mathcal{L}^\varphi}$ is constructed in \revthree{$\mathbb{R}_+^{212}$ ($=\mathbb{R}^{n(h^\varphi+1)},n=53, h^\varphi=3$)}. The cost criteria that we use in this case study is the total delay induced in the network over the planning horizon $H$. A vehicle is delayed by one time unit if it can not flow out of a link in one time step, which may be because of the actuation (e.g., red light) or waiting for the flow of other vehicles in the same link (i.e., we have $x_{[l]}\ge c_l$). 
We are also interested in maximizing the STL robustness score. The cost function is: 

\begin{equation}
\begin{array}{l}
\label{eq:traffic_cost}
J_{\text{traffic}}(x^H,u^H):= - \revtwo{\zeta} ~\rho(\bolds{x},\user{\bolds{G}_{[0,H-1]}}\varphi,t-h^\varphi+1) \\ +  \displaystyle \sum_{k=0}^{H-1} \gamma^{k} \displaystyle \sum_{l \in \mathcal{L}} (x_{[l],t+k}-\vec{q}_{[l],t+k}) ,
\end{array}
\end{equation} 
where $\vec{q}_{[l]}$, given by \eqref{eq:actuated}, is the amount of vehicles that flow out of link $l$, $\gamma$ is the discount factor for delays predicted in further future, and $\zeta$ is a positive weight for robustness. Notice the connection between the time window of STL robustness score in \eqref{eq:traffic_cost} and MPC constraint enforcement in \eqref{eq:recursive_optimization}.
It follows from Theorem \ref{theorem:monotonicty_of_traffic} and STL quantitative semantics \eqref{equ:quant} that the cost function above is non-decreasing with respect to the state in $\Pi$. Therefore, in order to minimize the worst case cost, the maximal system is considered in the MPC optimization problem.  

Starting from zero initial conditions, we implement the MPC algorithm \eqref{eq:MPC_recursive} with $H=3$ for 40 time steps. We set $\zeta=1000$, $\gamma=0.5$ in \eqref{eq:traffic_cost}. The disturbances at each time step were randomly drawn from $L(w^*)$ using a uniform distribution. The maximum computation time for each MPC step time step was less than 0.8 seconds (less than 0.5 seconds on average). The resulting trajectory is shown in Fig. \ref{fig:case_results} [Middle]. For the same sequence of disturbances, the trajectory resulted from applying the open-loop control policy \eqref{eq:phi} (using the values in Table \ref{table:controls}) is shown in Fig. \ref{fig:case_results} [Bottom]. Both trajectories satisfy the specification. However, robust MPC has obviously better performance when costs are considered. The total delay accumulated over 40 time steps is: 
$$J_{40}=\sum_{\tau=0}^{40} \sum_{l \in \mathcal{L}} (x_{[l],\tau}-\vec{q}_{[l],\tau}).$$
The cost above obtained from applying robust MPC was $J_{40}=1843$, while the one for the open-loop control policy was $J_{40}=2299$, which demonstrates the usefulness of the state knowledge in planning controls in a more optimal way. An optimal tuning of parameters $\eta$ and $\gamma$ requires an experimental study which is out of scope of this paper. We only remark that we usually obtained larger delays with non-zero $\eta$, which shows that including STL robustness score in the MPC cost function may be useful even though the ultimate goal is minimizing the total delay. 

It is worth to note that we also tried implementing the MPC algorithm (for the case $\bolds{w}=(w^*)^\omega$, or the maximal system) without the terminal constraints, as in \eqref{eq:optimization}. The MPC got infeasible at $t=8$. The violating constraints were those in $x \in \Pi$. This observation indicates that the myopic behavior of MPC in \eqref{eq:optimization}, when no additional constraints are considered, can lead to congestion in the network. 


\begin{figure*}[t]
\centering
{\includegraphics[width=0.199\textwidth]{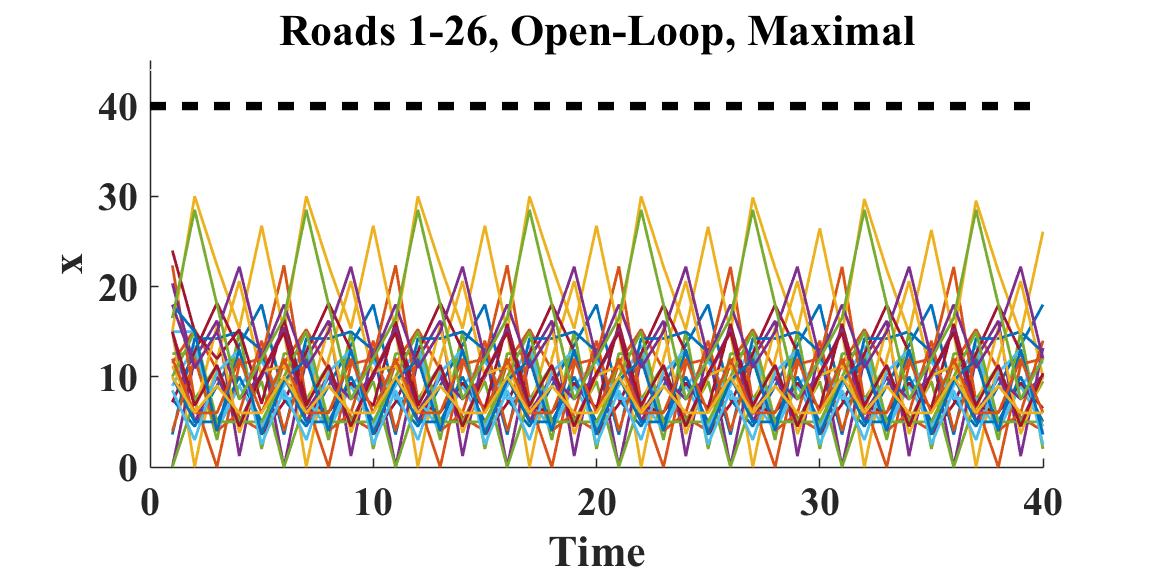}\includegraphics[width=0.199\textwidth]{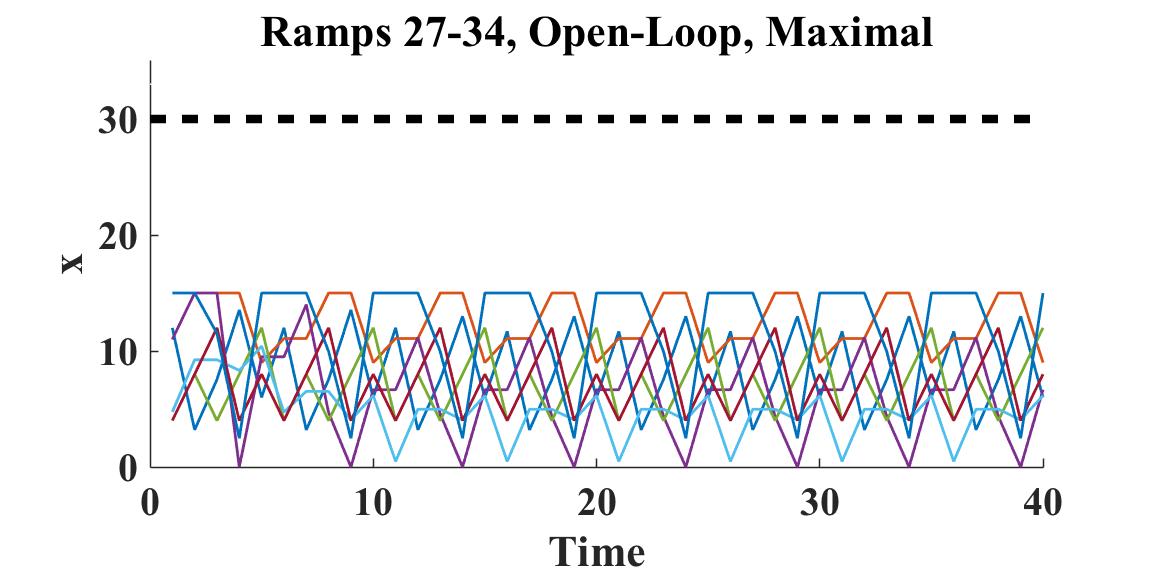}\includegraphics[width=0.199\textwidth]{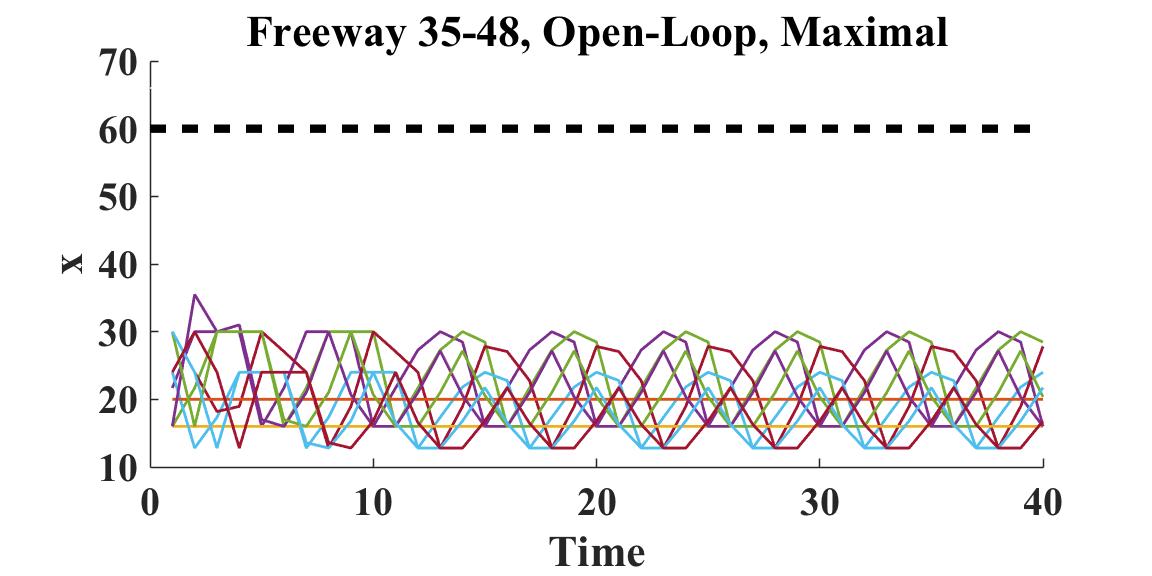}\includegraphics[width=0.199\textwidth]{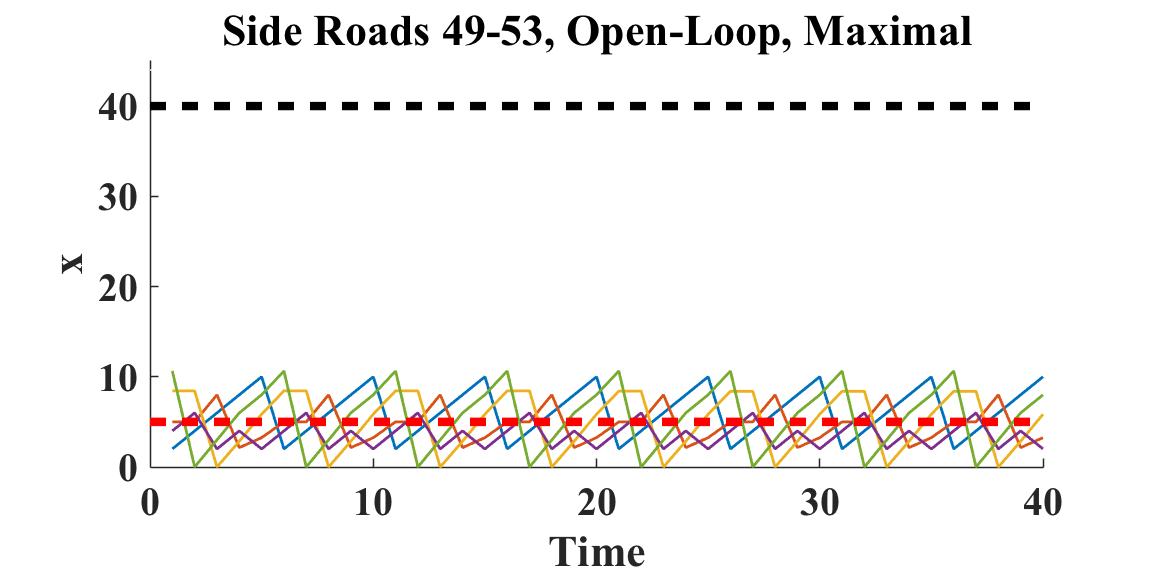}\includegraphics[width=0.199\textwidth]{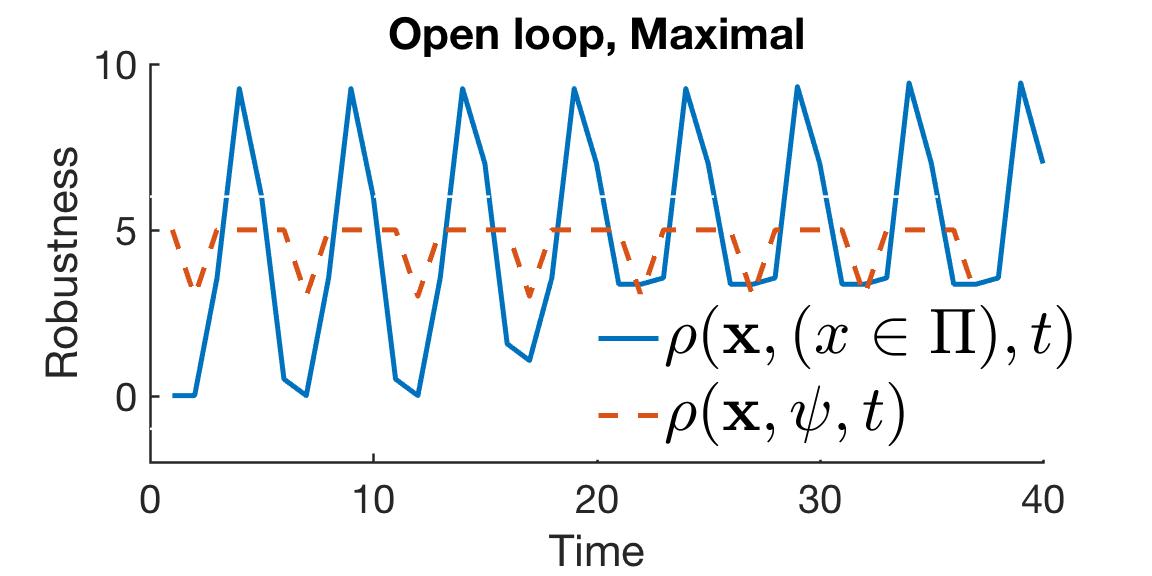}
}
{\includegraphics[width=0.199\textwidth]{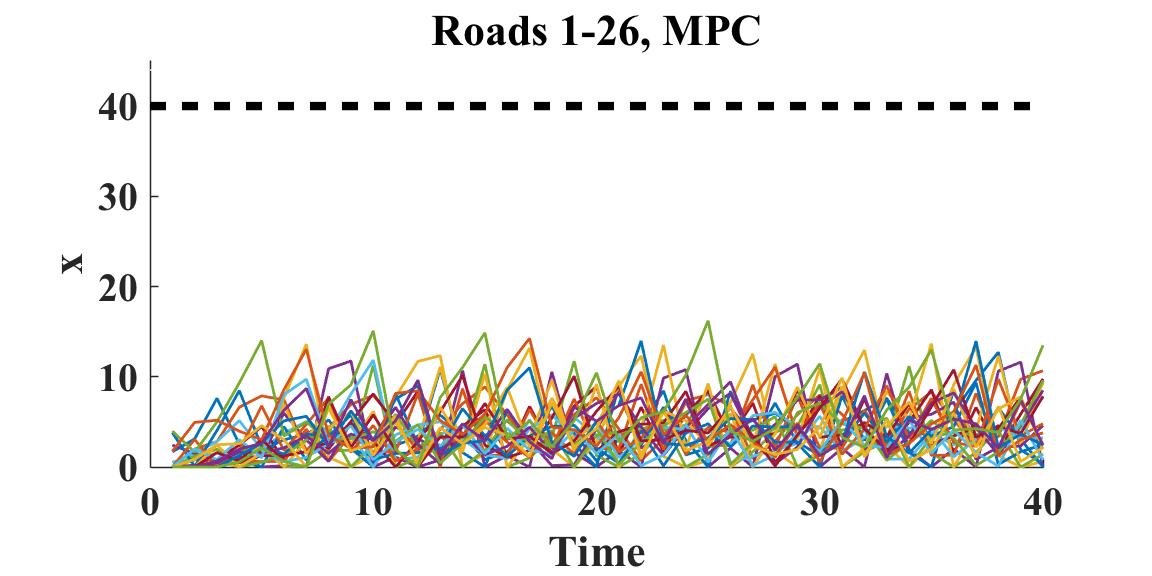}\includegraphics[width=0.199\textwidth]{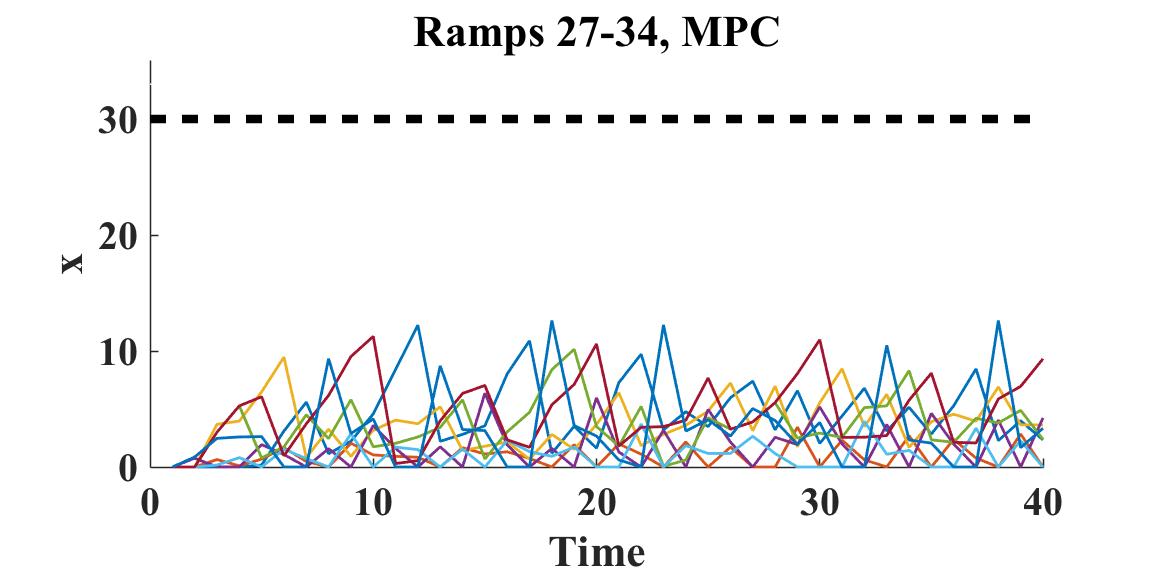}\includegraphics[width=0.199\textwidth]{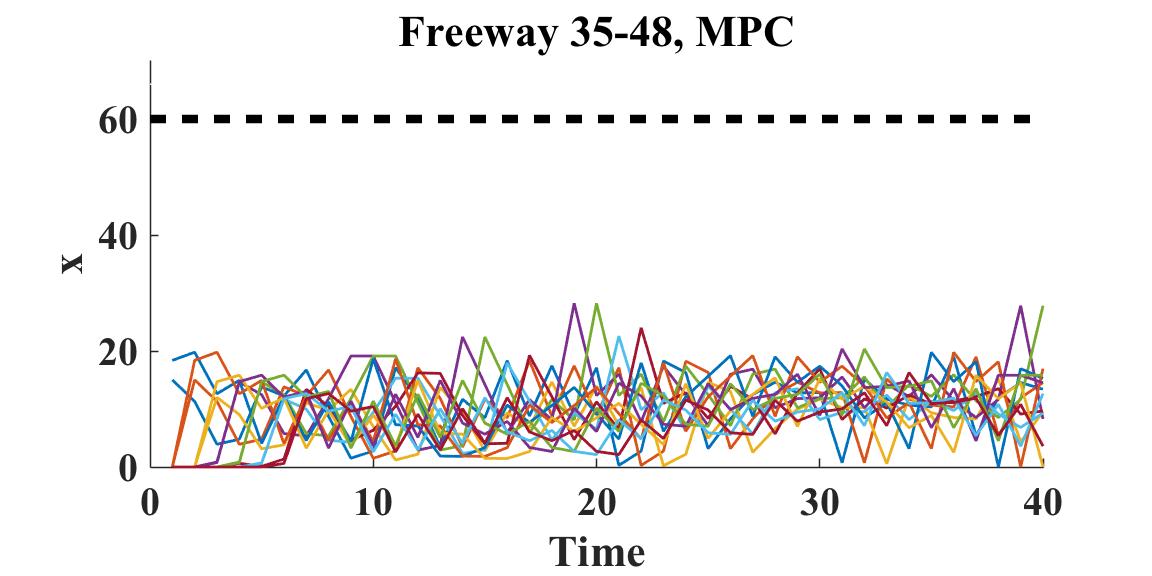}\includegraphics[width=0.199\textwidth]{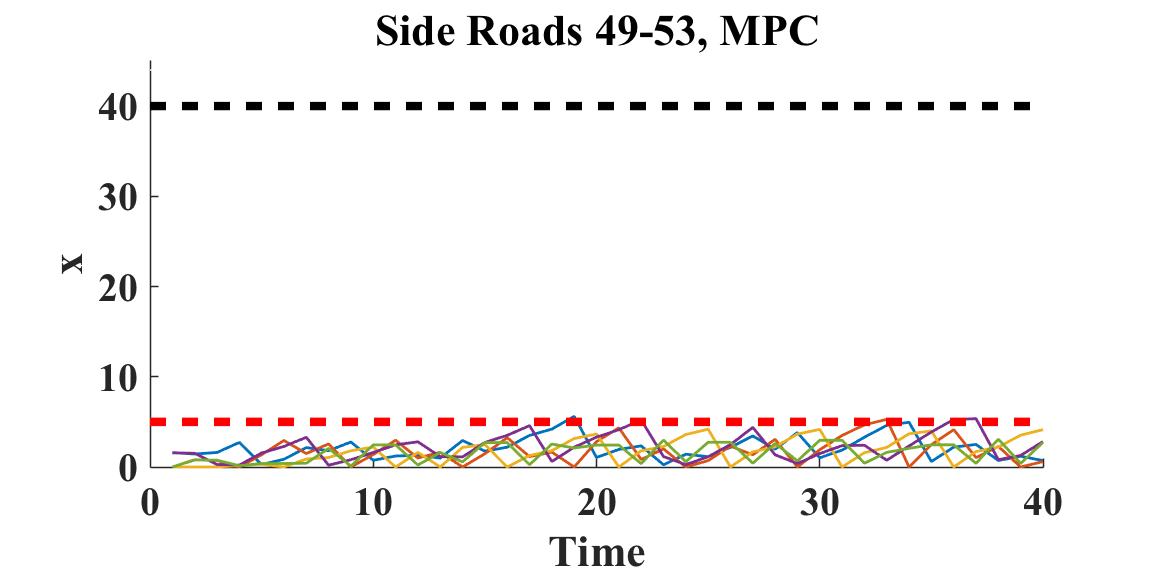}\includegraphics[width=0.199\textwidth]{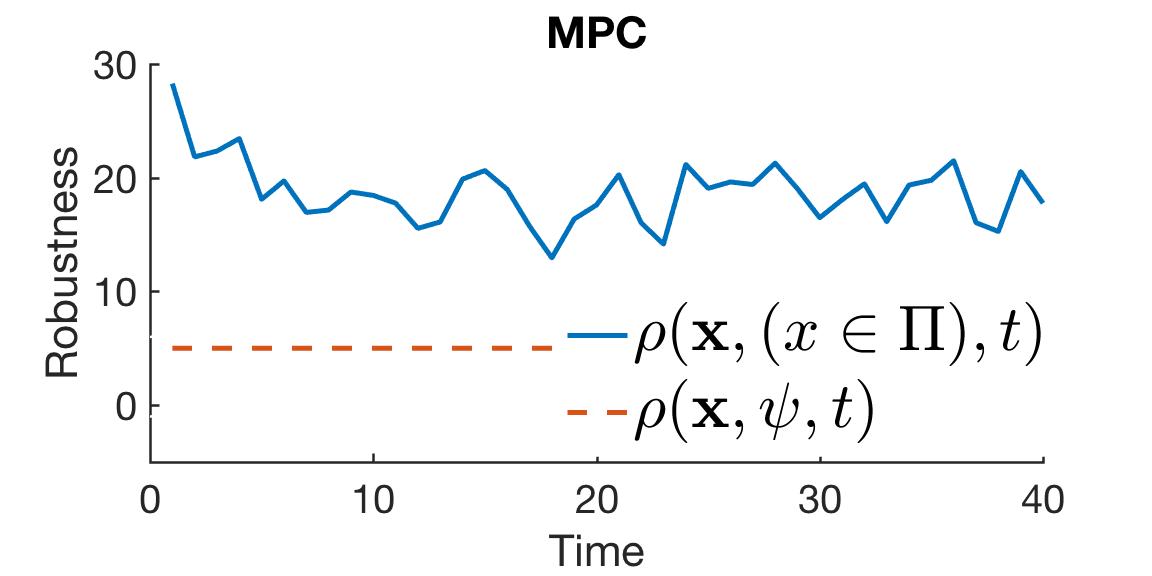}
}
{\includegraphics[width=0.199\textwidth]{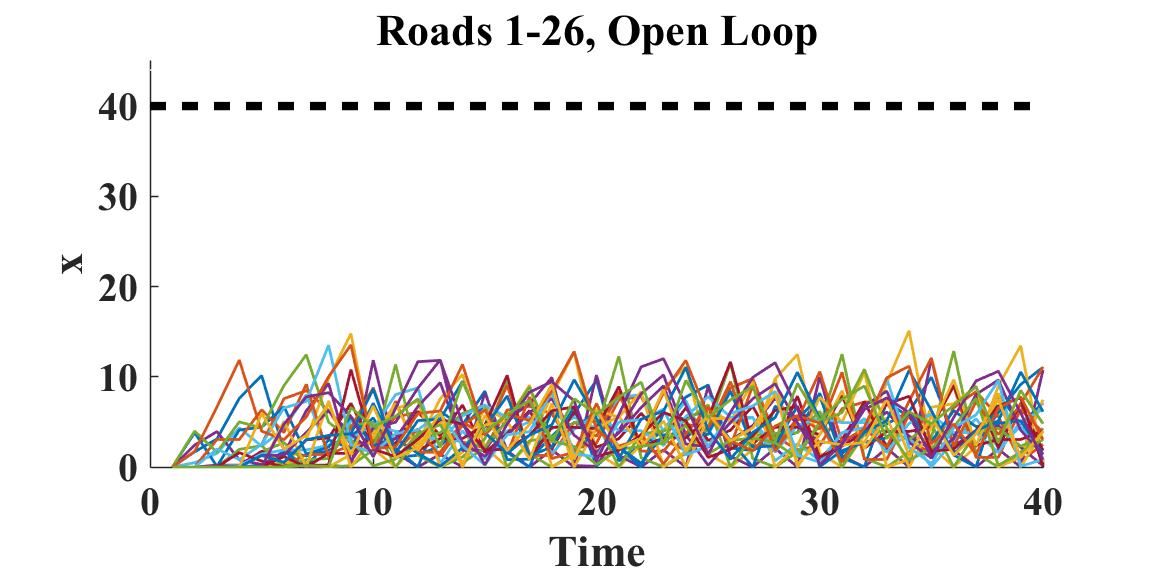}\includegraphics[width=0.199\textwidth]{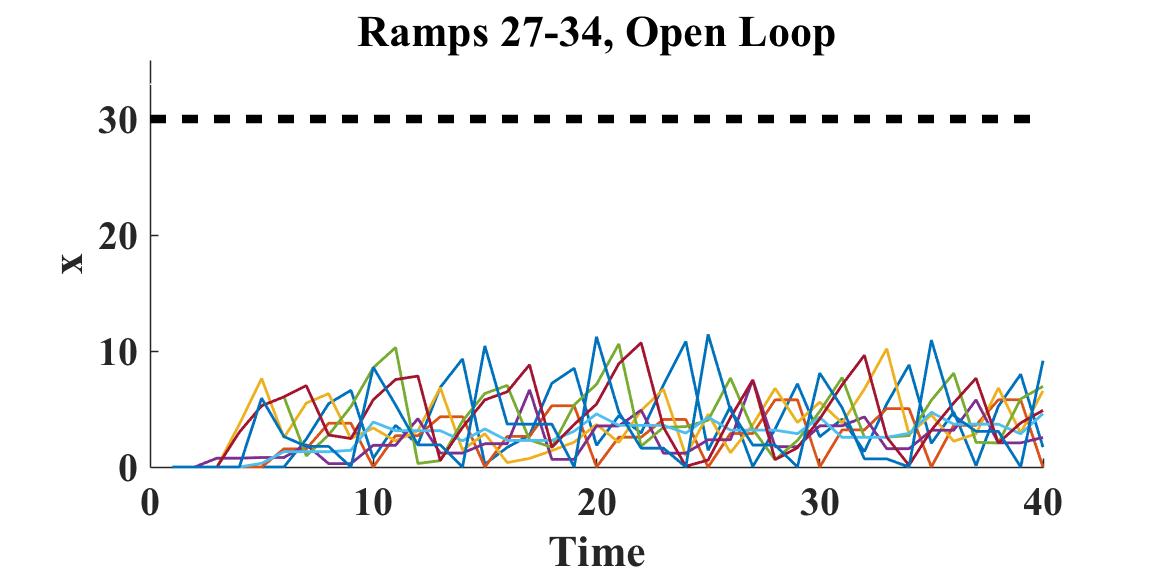}\includegraphics[width=0.199\textwidth]{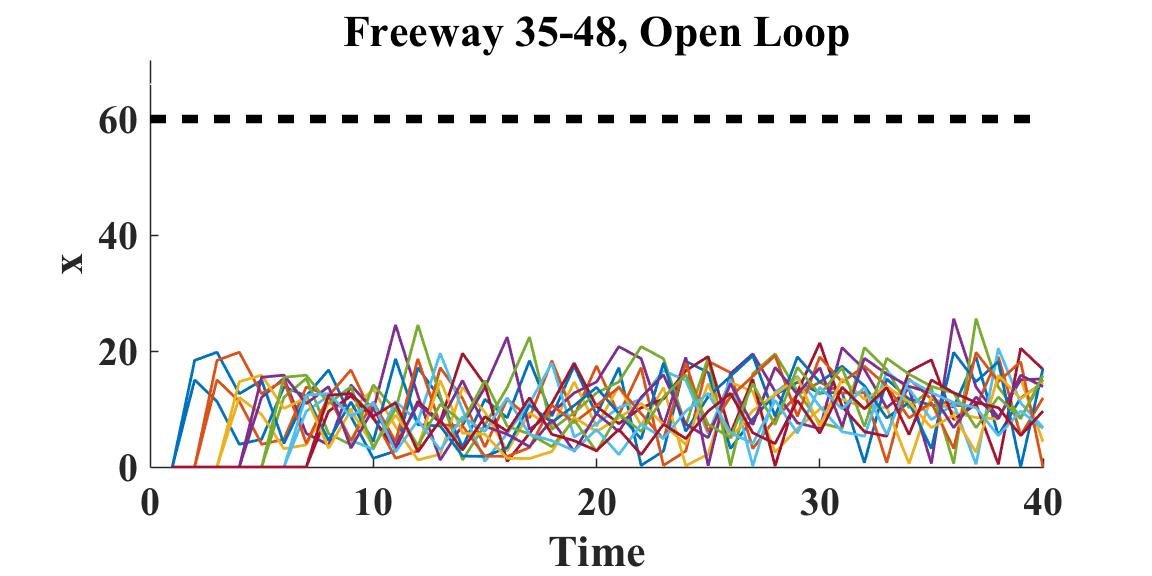}\includegraphics[width=0.199\textwidth]{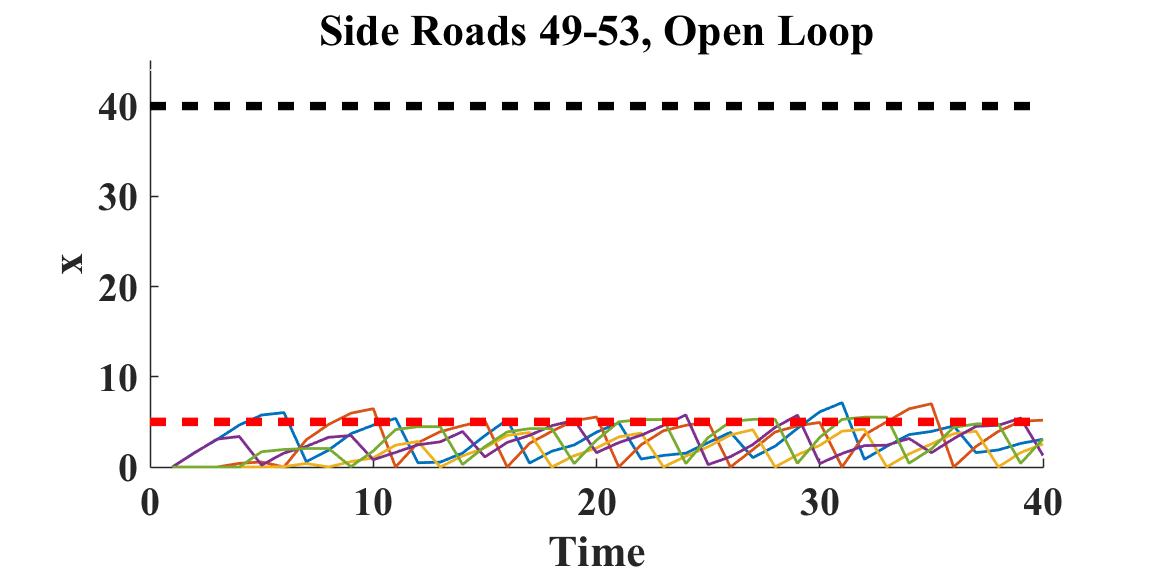}\includegraphics[width=0.199\textwidth]{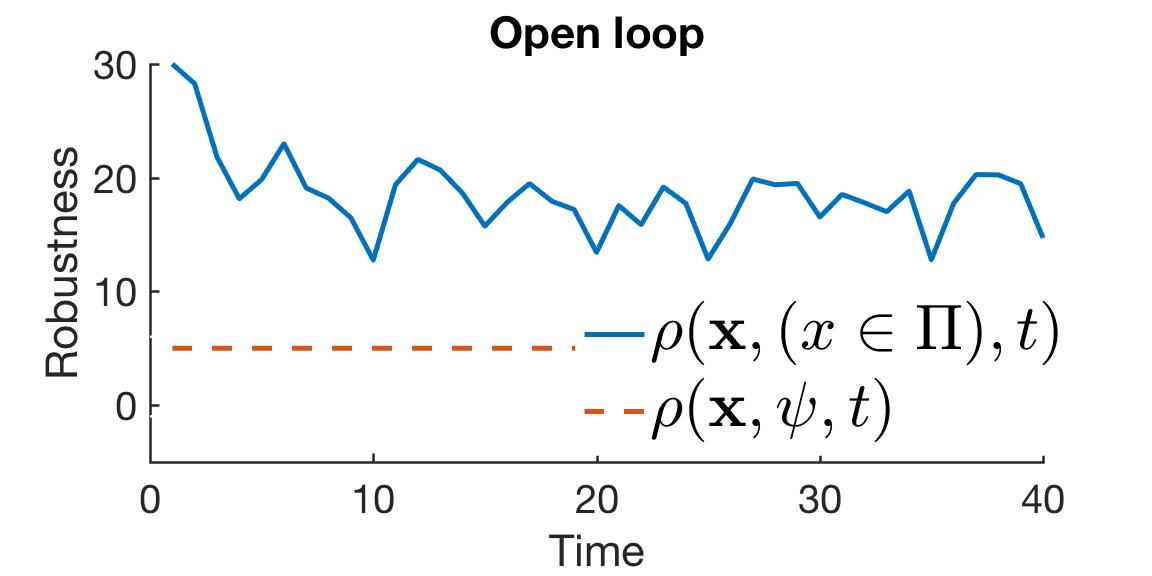}
}
\caption{Traffic management case study: [Top Row] the trajectory of the maximal system obtained from applying the open-loop control policy \eqref{eq:phi} with initial condition $x_0^\phi$ [Middle Row] robust MPC generated trajectory with zero initial condition with disturbances chosen uniformly $L(w^*)$ [Bottom Row] trajectory generated from applying the open-loop control policy \eqref{eq:phi} with zero initial conditions and the same disturbances as in [Middle Row].  
}
\label{fig:case_results}
\end{figure*}

\section{Conclusion and Future Work}
We developed methods to control positive monotone discrete-time systems from STL specifications. We showed that open-loop control sequences are sufficient and (almost) necessary for guaranteeing the correctness of STL specifications. A robust MPC method was introduced to plan controls optimally, while guaranteeing global STL specifications. We showed the usefulness of our results on traffic management.  

Future work will focus on non-monotone systems with parametric uncertainty whose state evolution can be over-approximated in an appropriate way using monotone systems. We will develop adaptive control schemes to tune parameters automatically using the data gathered from the evolution of the system. This will eventually lead to data-driven control techniques for transportation networks with formal guarantees.          

\bibliography{references2}

\revtwo{
\section*{Appendix}
\begin{theorem}
Let $\mathbb{S}$ be the set of all STL formulas that can be written in the form:
\begin{equation}
\label{eq_safety_formula}
\phi=\bigvee_{i=1}^{n_\phi} \varphi_{b,i} \wedge {\bf G}_{[\Delta_i,\infty]} \varphi_{g,i},
\end{equation}
where $\varphi_{b,i},\Delta_i \ge h^{\varphi_{b,i}},\varphi_{g,i}, i=1,\cdots,n_\phi,$ are bounded STL formulas. Then $\mathbb{S}$ is a subset of safety STL formulas that is closed under STL syntax with bounded temporal operators.   
\end{theorem}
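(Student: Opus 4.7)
My plan is to split the argument into two parts, corresponding to the two claims: $\mathbb{S}$ consists of safety formulas, and $\mathbb{S}$ is closed under the bounded fragment of STL.

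For the first part, I would inspect the syntactic structure in \eqref{eq_safety_formula}. Each $\varphi_{b,i}$ and $\varphi_{g,i}$ is bounded, so by the recursion \eqref{eq:horizon} and inspection of \eqref{eq:STL_syntax}, every ``until'' and ``eventually'' operator appearing inside them has a bounded interval. The only unbounded operator in $\phi$ is $\bolds{G}_{[\Delta_i,\infty]}$, which is an ``always,'' not an ``until'' or ``eventually.'' Hence $\phi$ satisfies Definition \ref{define:safety}, and the containment $\mathbb{S} \subseteq$ safety STL is immediate.

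The second part, closure, I would do by handling each operator in turn. Closure under $\vee$ is trivial by concatenating the two disjunctions. For $\wedge$, given $\phi^{(1)}, \phi^{(2)} \in \mathbb{S}$, I would distribute $\wedge$ over $\vee$ and reduce to the binary case of combining two conjuncts $(\varphi_{b,1} \wedge \bolds{G}_{[\Delta_1,\infty]}\varphi_{g,1}) \wedge (\varphi_{b,2} \wedge \bolds{G}_{[\Delta_2,\infty]}\varphi_{g,2})$. The key identity, assuming WLOG $\Delta_1 \le \Delta_2$, is
\begin{equation*}
\bolds{G}_{[\Delta_1,\infty]}\varphi_{g,1} \wedge \bolds{G}_{[\Delta_2,\infty]}\varphi_{g,2} \equiv \bolds{G}_{[\Delta_1,\Delta_2-1]}\varphi_{g,1} \wedge \bolds{G}_{[\Delta_2,\infty]}(\varphi_{g,1} \wedge \varphi_{g,2}),
\end{equation*}
which rewrites the conjunction back into a single $\varphi_b' \wedge \bolds{G}_{[\Delta',\infty]} \varphi_g'$ form, with $\varphi_b' = \varphi_{b,1} \wedge \varphi_{b,2} \wedge \bolds{G}_{[\Delta_1,\Delta_2-1]}\varphi_{g,1}$ bounded.

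For $\bolds{F}_I$ with $I = [t_1,t_2]$ bounded, I would distribute $\bolds{F}_I$ over $\vee$, then expand each disjunct by the semantics of $\bolds{F}_I$:
\begin{equation*}
\bolds{F}_{[t_1,t_2]}\bigl(\varphi_{b,i} \wedge \bolds{G}_{[\Delta_i,\infty]}\varphi_{g,i}\bigr) \equiv \bigvee_{t \in [t_1,t_2]} \Bigl(\bolds{F}_{\{t\}}\varphi_{b,i} \wedge \bolds{G}_{[t+\Delta_i,\infty]}\varphi_{g,i}\Bigr),
\end{equation*}
each disjunct being a valid element of $\mathbb{S}$ since $\bolds{F}_{\{t\}}\varphi_{b,i}$ is bounded and $t+\Delta_i \ge h^{\bolds{F}_{\{t\}}\varphi_{b,i}}$ holds whenever we take $t$ sufficiently large (and we can always pad with the bounded piece). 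For $\bolds{G}_I$ with $I$ bounded, I would use the equivalence $\bolds{G}_{[t_1,t_2]}\phi \equiv \bigwedge_{t\in[t_1,t_2]} \bolds{F}_{\{t\}}\phi$, reducing to closure under $\bolds{F}_{\{t\}}$ and finite $\wedge$, both already established. For bounded $\bolds{U}_I$, I would use the semantic expansion $\phi_1\,\bolds{U}_{[t_1,t_2]}\,\phi_2 \equiv \bigvee_{t\in[t_1,t_2]}\bigl(\bolds{F}_{\{t\}}\phi_2 \wedge \bolds{G}_{[0,t]}\phi_1\bigr)$, which is built from operators already shown to preserve $\mathbb{S}$.

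The main obstacle I anticipate is the $\bolds{G}_I$ step: naively pushing $\bolds{G}_{[t_1,t_2]}$ through the outer $\vee$ is illegal, so I must reduce it via $\bolds{F}_{\{t\}}$ and conjunction. This forces a combinatorial blow-up (a disjunction indexed by choice functions $c:[t_1,t_2]\to\{1,\dots,n_\phi\}$ when the identity is unfolded explicitly), but this is only a size blow-up, not a logical obstruction, and the theorem is about expressibility, not efficiency; the explicit exponential cost already foreshadowed in the excerpt by the remark that ``$n_\phi$ can be very large'' is precisely this phenomenon.
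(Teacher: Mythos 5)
Your proposal is correct and follows essentially the same route as the paper's proof: establish safety by syntactic inspection, then prove closure by structural induction using distributivity over $\vee$, the identity $\bolds{F}_{\{t\}}\bolds{G}_{[a,b]}=\bolds{G}_{[t+a,t+b]}$, the expansions $\bolds{F}_{[a,b]}=\bigvee_{t\in[a,b]}\bolds{F}_{\{t\}}$ and $\bolds{G}_{[a,b]}=\bigwedge_{t\in[a,b]}\bolds{F}_{\{t\}}$, and the semantic unfolding of bounded ``until.'' If anything you are slightly more explicit than the paper in the conjunction case, where your merging identity $\bolds{G}_{[\Delta_1,\infty]}\varphi_{g,1}\wedge\bolds{G}_{[\Delta_2,\infty]}\varphi_{g,2}\equiv\bolds{G}_{[\Delta_1,\Delta_2-1]}\varphi_{g,1}\wedge\bolds{G}_{[\Delta_2,\infty]}(\varphi_{g,1}\wedge\varphi_{g,2})$ fills in a step the paper dispatches with ``easily follows from distributivity,'' and where the paper instead relies on a stated helper property for restoring the side condition $\Delta\ge h^{\varphi_b}$.
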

\begin{IEEEproof}
First, a quick inspection of \eqref{eq_safety_formula} verifies that it is a safety STL formula. A predicate $\pi$ is a bounded formula (with zero horizon) and is a special case of \eqref{eq_safety_formula}, hence $\pi \in \mathbb{S}$.   

We also have the following property that relaxes the form in \eqref{eq_safety_formula}:
\emph{For all bounded STL formulas $\varphi_1,\varphi_2$,  we have $\varphi_1 \wedge \bolds{G}_{[\Gamma,\infty)} \varphi_2 \in \mathbb{S}$, $\forall \Gamma \in \mathbb{N}$.} Proof: The case for $\Gamma\ge h^{\varphi_{1}}$ is already in the form \eqref{eq_safety_formula} with $n_\phi=1$. If $\Gamma< h^{\varphi_{1}}$, we write $\bold{G}_{[\Gamma,\infty)}\varphi_2 = \bold{G}_{[\Gamma,h^{\varphi_1}]}\varphi_2 \wedge \bold{G}_{[h^{\varphi_1},\infty)}\varphi_2$. Now, define $\varphi_1 \wedge \bold{G}_{[\Gamma,h^{\varphi_1}]}\varphi_2$ as the new bounded formula and retain the form in \eqref{eq_safety_formula} with $n_\phi=1$. 

We show that $\mathbb{S}$ is closed under STL syntax with bounded operators. The distributivity properties of Boolean connectives and temporal operators (see, e.g., \cite{huth2004logic}) imply that:
 $\phi_1 \vee (\phi_2 \wedge \phi_3)= (\phi_1 \vee \phi_2) \wedge (\phi_2 \vee \phi_3)$,
$\phi_1 \wedge (\phi_2 \vee \phi_3)= (\phi_1 \wedge \phi_2) \vee (\phi_2 \wedge \phi_3)$,
$\bolds{F}_I (\phi_1 \vee \phi_2)=(\bolds{F}_I \phi_1) \vee (\bolds{F}_I \phi_2)$, and 
$\bolds{G}_I (\phi_1 \wedge \phi_2)=(\bolds{G}_I \phi_1) \wedge (\bolds{G}_I \phi_2)$,
where $\phi_1,\phi_2,\phi_3$ are temporal logic formulas and $I$ is an interval.
\begin{enumerate}
\item  $\phi_1, \phi_2 \in \mathbb{S} \Rightarrow \phi_1 \wedge \phi_2 \in \mathbb{S}, \phi_1 \vee \phi_2 \in \mathbb{S}$: this result easily follows from the distributivity properties of Boolean connectives mentioned above.

\item $\phi \in \mathbb{S} \Rightarrow \bolds{F}_{\{t\}} \phi  \in \mathbb{S}$: we use $\bolds{F}_{\{t\}}\bolds{G}_{[a,b]}=\bolds{G}_{[t+a,t+b]}$ and distributivity to have (note that $\bolds{F}_{\{t\}}=\bolds{G}_{\{t\}}$) 
\begin{equation*}
\begin{array}{rl}
& \bolds{F}_{\{t\}} (\bigvee_{i=1}^{n_\phi} (\varphi_{b,i} \wedge \bolds{G}_{[\Gamma_i,\infty]}\varphi_{g,i})) \\
= & \bigvee_{i=1}^{n_\phi} (\bolds{F}_{\{t\}} \varphi_{b,i} \wedge \bolds{G}_{[t+\Gamma_i,\infty]}\varphi_{g,i}).
\end{array}
\end{equation*}
Introducing $\bolds{F}_{\{t\}} \varphi_{b,i}, i=1,\cdots,n_\phi$, as new bounded STL formulas leads to the form in \eqref{eq_safety_formula}.

\item $\phi \in \mathbb{S} \Rightarrow \bolds{F}_{[a,b]} \phi  \in \mathbb{S}, \bolds{G}_{[a,b]} \phi  \in \mathbb{F}$: use $\bolds{F}_{[a,b]}=\bigvee_{t \in [a,b]}\bolds{F}_{\{t\}}$ and $\bolds{G}_{[a,b]}=\bigwedge_{t \in [a,b]}\bolds{F}_{\{t\}}$ to convert temporal operators to Boolean connectives.


\item  $\phi_1,\phi_2 \in \mathbb{S} \Rightarrow \phi_1\bolds{U}_{[a,b]} \phi_2  \in \mathbb{S}$:  use the STL semantics \eqref{equ:semantics} to substitute the bounded ``until" operator using bounded ``eventually" and bounded ``always" operators:
\begin{equation*}
\begin{array}{r}
\phi_1\bolds{U}_{[a,b]} \phi_2 = \bigvee_{t \in [a,b]} (\bolds{G}_{[a,t]\phi_1} \wedge \bolds{F}_{\{t\}} \phi_2).
\end{array}
\end{equation*}
\end{enumerate}
\begin{example}
The ``reach and stay" formula $\bolds{F}_I \bolds{G}_{[0,\infty)} \varphi $, where $\varphi$ is a bounded formula, is equivalent to $\bigvee_{t \in I} \bolds{G}_{[t,\infty)} \varphi$.    
\end{example}
\begin{remark}
What remains to show that $\mathbb{S}$ is equivalent to the set of all safety STL formulas is having that $\phi \in \mathbb{S} \Rightarrow \bolds{G}_{[\userfinal{\Gamma},\infty)} \phi  \in \mathbb{S}, \forall \Gamma \in \mathbb{N}$, which is not true by restricting $n_\phi$ in \eqref{eq_safety_formula} to be finite. Formulas that involve nested unbounded ``always" operator and can not be further simplified, such as $\bolds{G}_{[\Gamma',\infty)} (\varphi_1 \vee \bolds{G}_{[\Gamma,\infty)} \varphi_2)$, are rarely encountered in applications.   
\end{remark}

\end{IEEEproof}
}

\begin{IEEEbiography}[{\includegraphics[width=1in,height=1.25in,clip,keepaspectratio]{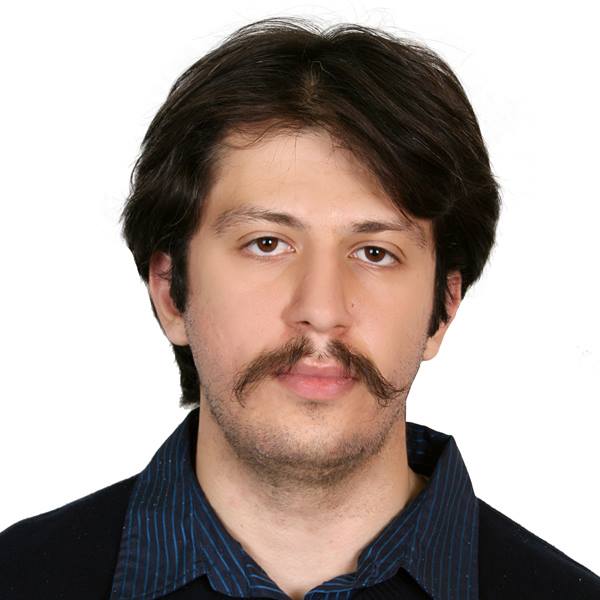}}]{Sadra Sadraddini} (S' 16) 
received the B.Sc. in Mechanical Engineering and the B.Sc. in Aerospace Engineering (dual majors) in 2013 from Sharif University of Technology, Tehran, Iran. 
He is currently pursuing a degree toward Ph.D. in Mechanical Engineering at Boston University, Boston, MA. 
His research focuses on formal methods to control theory with various applications in cyber-physical systems.      
\end{IEEEbiography}

\begin{IEEEbiography}[{\includegraphics[width=1in,height=1.25in,clip,keepaspectratio]{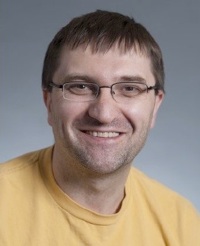}}]{Calin Belta} 
 (F' 17) is a Professor in the Department of Mechanical Engineering at Boston University, where he holds the Tegan family Distinguished Faculty Fellowship. He is the Director of the BU Robotics Lab, and is also affiliated with the Department of Electrical and Computer Engineering, the Division of Systems Engineering at Boston University, the Center for Information and Systems Engineering (CISE), and the Bioinformatics Program.  
His research focuses on dynamics and control theory, with particular emphasis on hybrid and cyber-physical systems, formal synthesis and verification, and applications in robotics and systems biology.  He received the Air Force Office of Scientific Research Young Investigator Award and the National Science Foundation CAREER Award. He is a fellow of IEEE.
\end{IEEEbiography}

\end{document}